\documentclass[acmsmall,nonacm]{acmart}
\pdfoutput=1
\usepackage{xspace}
\usepackage{graphicx}
\usepackage{subfig}
\usepackage{mathtools}
\usepackage{amsmath}
\usepackage{dsfont}
\usepackage{adjustbox}
\usepackage{tikz}
\usepackage{caption}
\usepackage{enumitem}
\graphicspath{{./figs/}}
\newcommand{\eg}{\emph{e.g.}}
\newcommand{\ie}{\emph{i.e.}}

\newcommand{\cf}{\emph{cf.}\xspace}
\newcommand{\vect}[1]{\boldsymbol{#1}}
\allowdisplaybreaks
\newcommand{\gv}[1]{\textcolor{magenta}{#1}}

\newcommand{\id}{\mathbb{I}}

\newcommand{\proj}[1]{| \hspace{1pt} #1 \rangle \langle #1 \hspace{1pt} |}

\newcommand{\ket}[1]{|#1\rangle}               
\newcommand{\bra}[1]{\langle #1|}              

\newcommand{\Complex}{\mathbb{C}}
\newcommand{\tr}{\mathsf{Tr}}

\newcommand{\Tr}{{\rm Tr}}

\newtheorem{proposition}{Proposition}
\newtheorem{lemma}{Lemma}
\AtBeginDocument{%
  \providecommand\BibTeX{{%
    \normalfont B\kern-0.5em{\scshape i\kern-0.25em b}\kern-0.8em\TeX}}}

\settopmatter{printacmref=false}
\setcopyright{none}
\renewcommand\footnotetextcopyrightpermission[1]{}
\pagestyle{plain}
\fancyfoot{}




\begin{document}

\title{On the Quantum Performance Evaluation of Two Distributed Quantum Architectures}

\author{Gayane Vardoyan}
\email{g.s.vardoyan@tudelft.nl}
\author{Matthew Skrzypczyk}
\email{m.d.skrzypczyk@tudelft.nl}
\author{Stephanie Wehner}
\email{s.d.c.wehner@tudelft.nl}
\affiliation{%
  \institution{QuTech and Kavli Institute of Nanoscience, Delft University of Technology}
  \address{Lorentzweg 1, 2628 CJ Delft}
  \city{Delft}
  \country{The Netherlands}
}
\if{false}
\author{Matthew Skrzypczyk}
\affiliation{%
\email{m.d.skrzypczyk@tudelft.nl}}

\author{Stephanie Wehner}
\affiliation{
\email{s.d.c.wehner@tudelft.nl}
}
\fi

\begin{abstract}
Distributed quantum applications impose requirements on the quality of the quantum states that they consume. When analyzing architecture implementations of quantum hardware, characterizing this quality forms an important factor in understanding their performance.
Fundamental characteristics of quantum hardware lead to inherent tradeoffs between the quality of states and traditional performance metrics such as throughput. Furthermore, any real-world implementation of quantum hardware exhibits time-dependent noise that degrades the quality of quantum states over time.
Here, we study the performance of two possible architectures for interfacing a quantum processor with a quantum network. The first corresponds to the current experimental state of the art in which the same device functions both as a processor and a network device. The second corresponds to a future architecture that separates these two functions over two distinct devices. 
We model these architectures as continuous-time Markov chains and compare their quality of executing quantum operations and producing entangled quantum states as functions of their memory lifetimes, as well as the time that it takes to perform various operations within each architecture. As an illustrative example, we apply our analysis to architectures based on Nitrogen-Vacancy centers in diamond, where we find that for present-day device parameters one architecture is more suited to computation-heavy applications, and the other for network-heavy ones. 
We validate our analysis with the quantum network simulator NetSquid.
Besides the detailed study of these architectures, a novel contribution of our work are several formulas that connect an understanding of waiting time distributions to the decay of quantum quality over time for the most common noise models employed in quantum technologies. This provides a valuable 
new tool for performance evaluation experts, and its applications extend beyond the two architectures studied in this work.
\end{abstract}

\begin{CCSXML}
<ccs2012>
<concept>
<concept_id>10003033.10003079.10003080</concept_id>
<concept_desc>Networks~Network performance modeling</concept_desc>
<concept_significance>300</concept_significance>
</concept>
<concept>
<concept_id>10010583.10010588.10010593</concept_id>
<concept_desc>Hardware~Networking hardware</concept_desc>
<concept_significance>300</concept_significance>
</concept>
<concept>
<concept_id>10003033.10003034</concept_id>
<concept_desc>Networks~Network architectures</concept_desc>
<concept_significance>500</concept_significance>
</concept>
<concept>
<concept_id>10010583.10010786.10010813</concept_id>
<concept_desc>Hardware~Quantum technologies</concept_desc>
<concept_significance>500</concept_significance>
</concept>
<concept>
<concept_id>10010147.10010341</concept_id>
<concept_desc>Computing methodologies~Modeling and simulation</concept_desc>
<concept_significance>500</concept_significance>
</concept>
<concept>
<concept_id>10010583.10010588</concept_id>
<concept_desc>Hardware~Communication hardware, interfaces and storage</concept_desc>
<concept_significance>300</concept_significance>
</concept>
</ccs2012>
\end{CCSXML}

\ccsdesc[300]{Networks~Network performance modeling}
\ccsdesc[300]{Hardware~Networking hardware}
\ccsdesc[500]{Networks~Network architectures}
\ccsdesc[500]{Hardware~Quantum technologies}
\ccsdesc[500]{Computing methodologies~Modeling and simulation}
\ccsdesc[300]{Hardware~Communication hardware, interfaces and storage}

\keywords{quantum architecture, fidelity, Markov chain}

\maketitle

\section{Introduction}
\if{false}
\begin{figure}[t]
{\centering
\begin{minipage}{0.445\textwidth}
  \centering
  \subfloat[single-device architecture]{\includegraphics[width=\textwidth]{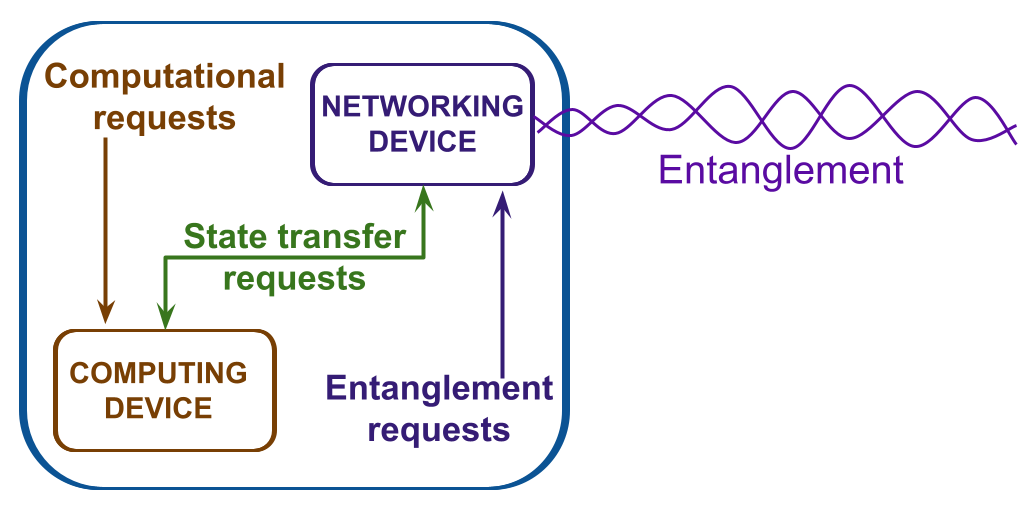}}\\
  \subfloat[double-device architecture]{\includegraphics[width=\textwidth]{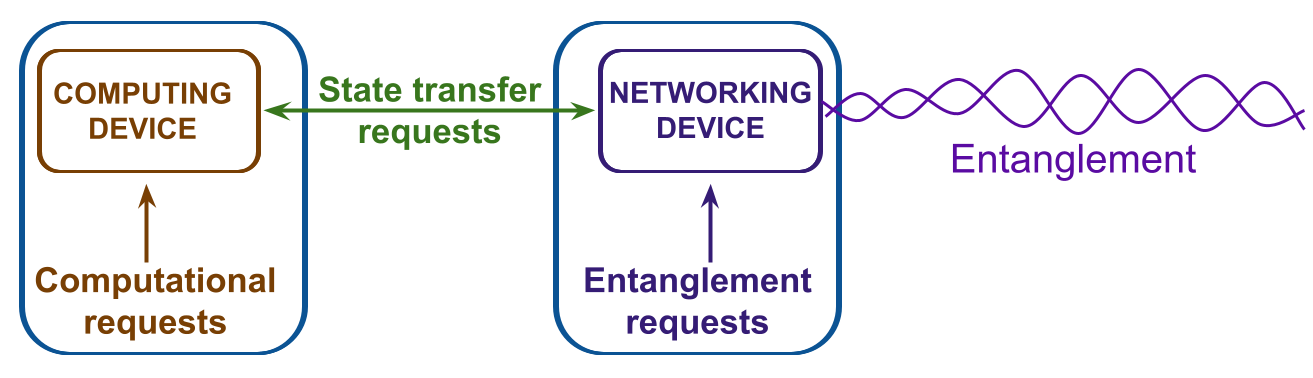}}
  \captionof{figure}{A quantum architecture with computing and networking components. State transfer operations require cooperation from both components.}
  \label{fig:archAbstract}
  \end{minipage}\qquad\quad
  \begin{minipage}{0.445\textwidth}
  \centering
  \subfloat[the single-NV architecture]{\includegraphics[width=\textwidth]{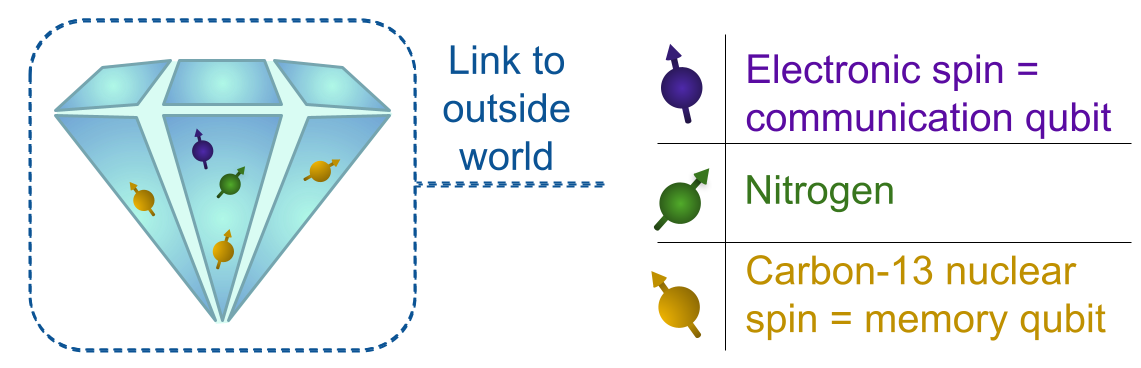}}
\\
\subfloat[the double-NV architecture]{\includegraphics[width=\textwidth]{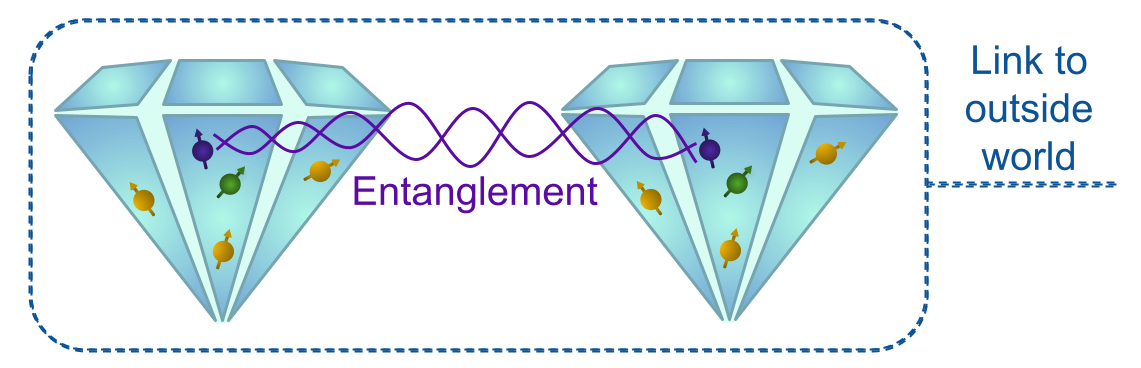}}
\caption{Two NV center in diamond architectures used with distributed quantum applications.}
\label{fig:thetwoarchs}
  \end{minipage}}
  \end{figure}
\fi
\begin{figure}
\centering
\subfloat[single-device architecture]{\includegraphics[width=0.4\textwidth]{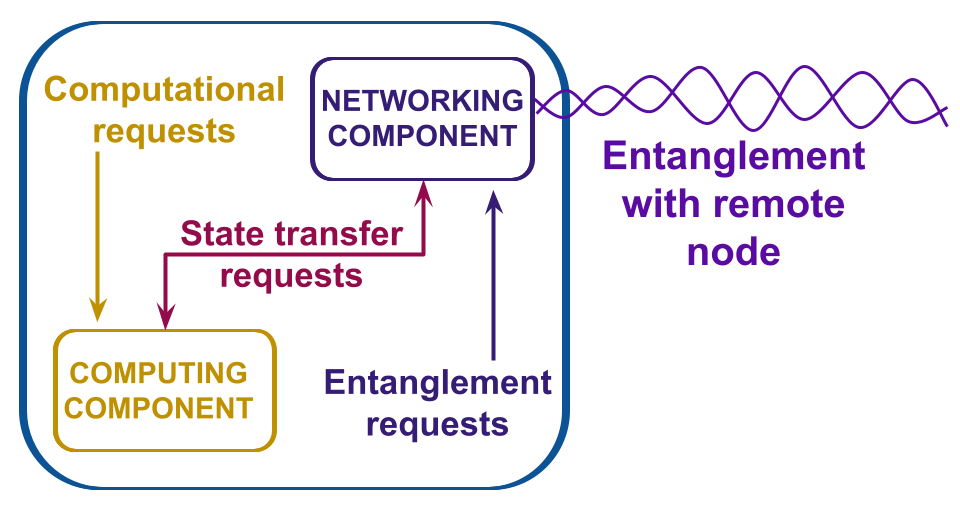}\label{fig:archAbstractSD}}\quad
  \subfloat[double-device architecture]{\includegraphics[width=0.57\textwidth]{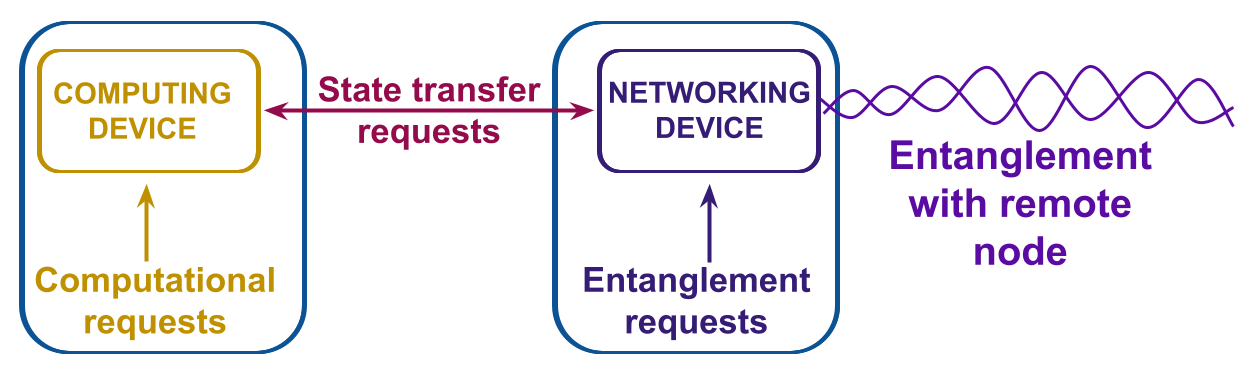}\label{fig:archAbstractDD}}
\caption{Two possible architectures for a quantum processor interfaced to a quantum network: in the first, the processor and the network device are the same device (\emph{single-device (SD) architecture}). This device may have an internal logical or physical division into a computing or networking component. An example of a physical division is the use of a subset of its qubits for networking and others purely for computing. An example of a logical division is a scheduler switching between both functions but networking and computation are performed using the same qubits. 
In the second, two separate devices are used (\emph{double-device (DD) architecture}). 
An application interacts with the system by making three types of requests: local quantum computations (on the computing component/device), network operations (entanglement generation), and movement (state transfer) of generated entanglement into the processor for further processing. The latter requires cooperation from both processing and network devices. 
For SD, a move could be achieved simply by transferring the state to another set of qubits on the same device.
For DD, a move is much more complex, and can be realized \eg, using entanglement generation between the processor and the network devices, followed by teleportation.} 
\label{fig:archAbstract}
\end{figure}
\begin{figure}
\centering
\subfloat[the single-NV architecture]{\includegraphics[width=0.42\textwidth]{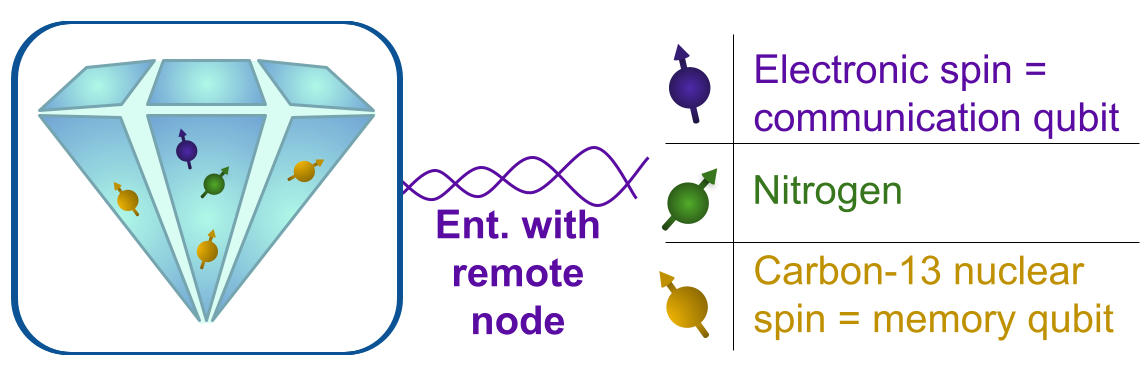}}
\qquad
\subfloat[the double-NV architecture]{\includegraphics[width=0.42\textwidth]{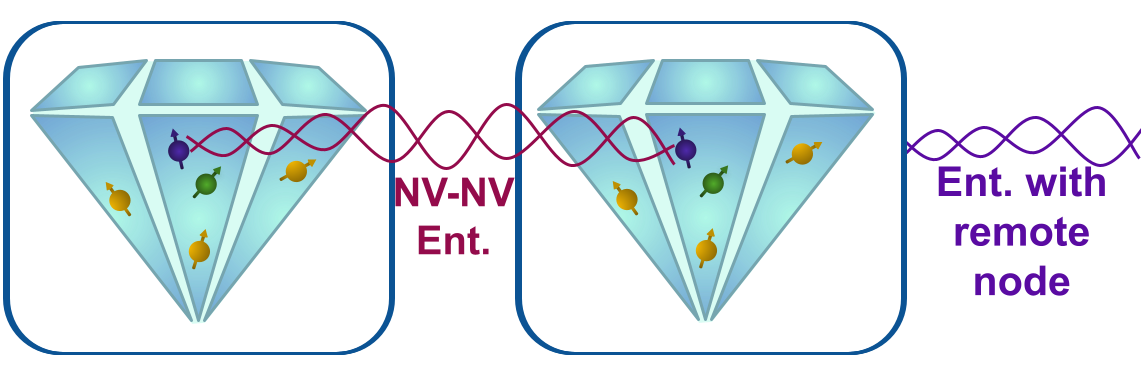}}
\caption{An example of SD and DD architectures based on Nitrogen-Vacancy centers in diamond. Here, the electron spin (purple) acts as an optical interface that can be used as a networking or computing component. Nitrogen (green) and the Carbon-13 spins (yellow) in the surrounding diamond material can be used as computing components. See Appendix \ref{nv_in_diamond_app} for further details about this platform.}
\label{fig:thetwoarchs}
\end{figure}

Quantum communication promises to fundamentally enhance internet technology by enabling application capabilities that are impossible to attain classically. On the one hand, quantum communication could be used to link quantum processors at large distances, enabling quantum internet~\cite{kimble,stagesPaper} applications such as secure communication~\cite{bb84,e91}, improved clock synchronization~\cite{lukin}, or secure delegated quantum computation in the cloud~\cite{blindQC}. On the other hand, quantum communication could connect quantum processors at short distances in order to link several smaller quantum processors together to form one more powerful quantum computing cluster \cite{Jiang_2007}.

To support distributed quantum applications, the architecture of a quantum network node should be capable of two key functions: first, it should enable local quantum computation, \ie, the execution of quantum gates and measurements, at each end node~\cite{stagesPaper} in the network on which applications are run.
Second, it should enable the generation of quantum entanglement between any two nodes in such a network. Entanglement is a special property of the state of two quantum bits (qubits), which cannot be simulated using any form of classical communication between the nodes. A typical quantum network application consists of both local quantum computations and the generation of entanglement, where different applications may have more demand for local quantum processing, or for entanglement generation. An example of an application that is computation-heavy is secure delegated quantum computation~\cite{blindQC}. In contrast, quantum key distribution (QKD)~\cite{bb84,e91,scarani2009security} forms an example of an 
application that is network-heavy, \ie, it is dominated by entanglement generation and the only local operations are measurements. 
Balancing local and networked operations (entanglement generation) can also be important in the efforts to build a quantum repeater~\cite{Briegel_1998,abruzzo2013quantum,bratzik2013quantum,jiang2009quantum}, \ie, a special quantum node that can eventually enable entanglement generation over arbitrarily long distances~\cite{Gisin_2010,Munro_2015}. In this case, proposals for such repeaters employ both local quantum operations (\eg, to perform entanglement purification~\cite{Bennett_1996,Deutsch_1996,dur2007entanglement}), as well as entanglement generation with neighbouring repeater nodes.

When analyzing the performance of quantum networks, one is typically interested in understanding traditional performance metrics, such as the throughput or latency of entanglement generation and local gate execution. Importantly, however, the performance analysis of quantum technologies also demands a characterization of the \emph{quality} of the quantum execution, \ie, how noisy quantum states and operations are. Such a characterization is motivated both by long-term fundamental aspects of quantum applications, as well as the more short-term technological limitations of present-day quantum devices. In the classical world, a system is typically constructed in such a way that all errors are eliminated towards the application~\cite{linkLayerPaper}. That is, an application sees essentially noise-free network transmissions and CPU operations. For many quantum network applications, however, 
noise-free transmission and quantum gate execution are not compulsory.
A good example is QKD~\cite{bb84,e91,scarani2009security}, where noise at the quantum level is dealt with using classical error correction, after measuring the quantum state, in a way that is specific to the application.

In quantum networked systems,  fundamental tradeoffs exist between the quality of the quantum execution, and standard performance metrics such as throughput and latency. A key performance metric in a quantum network is the quality of entanglement (see Section~\ref{sec:fidelity}) being generated between two remote network nodes, where one can choose to trade a higher throughput of entanglement generation, against a lower quality of the resulting entanglement and vice versa~\cite{linkLayerPaper}. On a quantum processor, we furthermore want to understand the quality of a quantum gate's execution, and consequently the quality of the quantum program being executed. If quantum devices were perfect, a quantum gate could be executed with perfect quality, that is, the output is precisely as intended and no noise has occurred. In practice, however, technological limitations mean that gates on all present-day quantum processing platforms are noisy. Such noise can stem from inherent imperfections of the device (constant noise), as well as a time-dependent contribution that depends on the waiting time before the quantum state can undergo further processing.  
The latter form of noise is especially relevant when analyzing networked quantum processors, as is the focus of this work, where we frequently need to wait for a signal from the remote node before processing can continue. However, it also arises when trying to analyze any form of scheduling algorithm on an advanced quantum processor. 
In the quantum literature, the quality of a quantum state is measured by its \emph{fidelity}, and the quality of executing a gate by its \emph{gate fidelity} (see Section~\ref{sec:fidelity}).
Intuitively, the fidelity is a number in the interval $[0,1]$ that measures the closeness of the state (or gate) to a desired target implementation. The larger the fidelity, the closer we are to the target implementation, \ie, the higher the quality of the quantum state (or gate). In this work, we focus on these \emph{quantum performance measures} -- specifically, we study gate fidelity in distributed quantum architectures, as well as the fidelity of entanglement generated by applications that run on them.

Given the need to perform both local quantum operations as well as network operations in order to realize distributed quantum applications, we here consider two different general architectures for interfacing a networked quantum processor to a quantum network (Figure~\ref{fig:archAbstract}). In the first, which we call the single-device (SD) architecture, the same device is used to perform both network operations as well as local quantum computation (Figure~\ref{fig:archAbstractSD}). This is the case in all present-day implementations, such as networked quantum processors based on Nitrogen-Vacancy (NV) centers in diamond~\cite{pompili2021realization}, or Ion Traps~\cite{InnsbruckPhoton50kms}. Abstractly, one can think of these as quantum processors that have two different types of qubits: communication qubits (networking component) with an optical interface for remote entanglement generation, and storage qubits which can only be used for local processing. Limits on experimental control typically prohibit the simultaneous execution of local (two-)qubit gates, and entangling operations. That is, while entanglement generation is in progress, local quantum processing is on hold, and vice versa. The time necessary for local gate execution only depends on the local processing speed. However, the time required for entanglement generation depends on the physical distance to the remote network node. Consequently, in a situation in which the remote node is at a distance, local processing may need to be suspended for a significant amount of time while entanglement generation is in progress.

In the second architecture, we hence consider a scenario in which the system is enhanced by the introduction of a dedicated network device solely used for the purpose of entanglement generation with remote network nodes (Figure~\ref{fig:archAbstractDD}); we refer to this as the double-device (DD) architecture. In this architecture, the network device is linked internally to the processor. While it is not important how this is achieved physically for our general analysis, we provide an example architecture in which both devices are based on NV centers in diamond (Figure \ref{fig:thetwoarchs}) where the internal interface is realized by teleporting the entanglement of the network device into the processor. This requires an additional step of producing entanglement between the network device and the processor to perform the teleportation transfer. Yet, since this entanglement needs to be produced only at very short (on-chip) distances, generation is fast. 
This means that remote entanglement generation via the external networking device and computations on the processor only need to be suspended for a short amount of time when the entanglement is transferred from the former to the latter.
We remark that our analysis is fully general and could also be used to understand physical systems that divide the processor into networking and computing "zones" like segmented ion traps \cite{sangouard2009quantum,pfister2016quantum} or ones that use two different physical systems, such as for example NV centers in diamond for the processor, but a simpler device such as a quantum memory based on atomic ensembles~\cite{RMPensembleRepeaters} as the network device.

When deliberating such architectural choices, several considerations are of concern: first, it is clear that performance may depend on whether we execute a computation-heavy, or a network-heavy application. Indeed, it is clear that in the case of quantum key distribution, where there are no local quantum gates being executed and we simply measure the entanglement right away, the DD architecture may only introduce an unnecessary overhead in implementation. Second, we expect that the performance of both architectures depends on the inherent quality of the quantum devices used to realize them. One key concern is the ability of the quantum device to store quantum states during waiting times: a lower memory lifetime means that waiting times have a much larger impact on the quality of execution. Similarly, the quality of the interface between the processor and the network device is of concern in DD architectures, as it may reduce the quality of the entanglement being transferred.  Finally, while the DD architecture may be of great intuitive appeal, it is much more cumbersome to realize experimentally since one additional device must be constructed. This raises a very practical question as to what achieves more benefit to application performance: implementing the double-device architecture, or investing efforts into improving the quality of the components (\eg, to achieve higher memory lifetimes) in the single-device architecture.

Here, we make the following contributions in analyzing the two architectures:

\renewcommand\labelitemi{$\square$}
\begin{itemize}[leftmargin=*]
\item We provide mathematical formulas for computing the gate and entanglement fidelities for standard quantum noise models. These formulas can be applied to any quantum performance analysis problem, where one would like to understand how a waiting time $t$ affects the quality of quantum gates and entanglement. As such, they allow standard methods from performance analysis that determine the waiting time distribution to be carried over to the quantum domain.
\item For the two architectures introduced above, we determine the most defining characteristics and operational features. We then incorporate these features into a model that is representative of both architectural designs -- specifically, we employ a continuous-time Markov chain (CTMC) to model entanglement generation in a regime where local quantum computation consumes negligible time. This is well motivated in the regime where the distance between processors is large as in a quantum internet, and the time required to produce entanglement dominates with respect to the time to perform local quantum gates. In this case, we obtain analytical expressions for the qubit waiting time distribution, subsequently allowing us to 
apply our fidelity computation method to obtain expressions for the average gate and entanglement fidelities for the two architectures in closed form.
We later relax the assumption that local gates take negligible time, and explore the effects of more time-consuming computation via simulation. The latter is relevant when the processors are physically close.
\item Using the aforementioned analytical results, we determine the strengths and weaknesses of the two architectures. Our analysis can be used to examine general tradeoffs between the quality of the quantum devices, the application behaviour (computation or network-heavy), and the resulting fidelities for quantum gates and entangled states being produced.
In a regime where DD quantum state transfer operations are more noisy than SD ones (\eg, when they rely on imperfect gates), we find that the SD architecture is the more suitable option for applications that are network-heavy, while the DD architecture benefits computation-heavy applications.
While the DD architecture outperforms the SD design in terms of average gate quality, its more complex design makes it harder to implement in practice. We provide sufficient conditions indicating how much the quantum memory lifetime of the components used to implement the SD architecture would need to be improved, in order to achieve the same performance as the DD architecture. 
\item We apply our analytical techniques to evaluate the performance of the two architectures under the assumption that they are realized using the Nitrogen-Vacancy center in diamond platform -- a strong candidate for implementing near-term and future networked quantum nodes \cite{Hensen_2015,Abobeih_2018,Humphreys_2018,Pfaff_2014}.
We explore the effect of state transfer operations on entanglement fidelity and where possible, present the pre-move and post-move entanglement fidelities in closed form. We validate our analytical results with NetSquid \cite{coopmans2020netsquid}, a discrete-event quantum network simulator. 
\end{itemize}

The rest of this paper is organized as follows: in Section \ref{sec:background}, we discuss related work and cover the relevant quantum background. In Section \ref{sec:model}, we introduce the CTMC that is used to model both architectures, and discuss the modeling assumptions. In Section \ref{sec:waitingTimeDistrs} we determine the amount of time that a qubit must spend waiting in storage before it is processed.
This waiting time distribution, along with a noise model, can be used to obtain the average gate and entanglement fidelities -- in Section \ref{sec:fidelityDerivs}, we introduce a method for accomplishing this in a general setting and subsequently apply it to the architectures in Figure \ref{fig:thetwoarchs}. In Section \ref{sec:analyticalEval} we show that when the two architectures have memories of identical quality, the DD architecture always outperforms the SD architecture in terms of average gate fidelity. Interestingly, it is possible that an SD-architecture device with better memories (and thus with a similar performance to a DD device with poorer quality memories) may be the more economical option in terms of manufacturing cost. For this reason, in Section \ref{sec:analyticalEval} we also present sufficient conditions that, if satisfied, ensure the SD outperforms the DD architecture in terms of gate fidelity. In Section \ref{sec:SimNumerObs}, we present analytical and simulation results and make numerical observations in a variety of settings. We make concluding remarks and discuss extensions of the problem in Section \ref{sec:conclusion}.
 
\if{false}
\gv{
Introduce and motivate the problem.
\begin{itemize}
\item[-] Describe the architectures at a high level, talk about use cases. Describe the functionality at a high level: networking vs. computational jobs. Introduce NVs at a high level, motivate why we choose to study this platform.
\item[-] Talk about noise and the effects on quantum states (high-level, intuitive, in a way non-quantum people can understand).
\item[-] Why study these two particular architectures? What are the potential tradeoffs?
\item[-] Why is analysis important (as opposed to just simulation)?
\item[-] Highlight main assumptions: Poisson arrivals, exponentially-distributed service times for all operations except gates, which take negligible time.
\item[-] Can we justify these assumptions? \emph{I.e.}, does our model serve as a good approximation to realistic scenarios (\eg, where gates do not take negligible time)?
\item[-] Contributions: $(i)$ a general framework for computing the gate fidelity; $(ii)$ a CTMC to model both architectures; $(iii)$ validation of analytical results with NetSquid plus simulation of functionality that goes beyond our analytical models.
\item[-] Summary of findings -- when is arch1 better than arch2?
\end{itemize}}
\fi

\section{Background and Related Work}
\label{sec:background}
\subsection{Related Work}
In practice, for any physical platform implementing a networked quantum processor, the quality of quantum gates and states is determined experimentally (see \eg, \cite{Humphreys_2018,Abobeih_2018,Pfaff_2014,Reiserer_2016,Kawakami_2016,Nichol_2017}, among a multitude of others). The objective of these measurements was to characterize one specific setup, but not to explore tradeoffs of potential architectural designs. 
For a string of quantum repeaters with the goal of producing entanglement over long distances, some analytical studies exist that characterize the quality of very specific quantum states, and study their distribution to guarantee a minimum threshold quality see, \eg, \cite{Guha_2015,Shchukin_2019,Brand_2020,Li_2020}. Some analytical studies also exist for the so-called quantum switch, \cite{Vardoyan_2021,Vardoyan_2020,vardoyan2020capacity}, wherein the authors study the maximum possible rate of entanglement switching and the expected number of entangled qubits in storage. These works are very different in spirit since they focus on the creation of quantum entanglement over long distances, and not on tradeoffs between network and computation operations as we do here. We emphasize that in this work we do not assume that the quantum architecture is used for any specific purpose, and make very few assumptions on the physical platforms used to realize potential architectures. Instead, our goal is to abstract these details in the form of configurable modeling parameters, \eg, the demand for entanglement, or the rate at which it is successfully generated. At the time of writing, we are also not aware of a study that considers the interactions between quantum computation and networking within a single system, and how contention for resources and processing time affects fidelity.
\subsection{Quantum background}
\label{sec:quantum_bg}
\subsubsection{Qubits, Quantum States, and Quantum Gates}
Here, we provide the necessary formalism needed in this work, and refer to \eg,~\cite{nielsen&chuang} for a more in-depth introduction.
Quantum information is encoded using quantum bits, or \emph{qubits}, in contrast to the usage of bits in traditional computing.  In addition to holding information in the form of discrete values such as 0 or 1, qubits may hold \emph{quantum states} that are linear combinations of these values.  A (pure) quantum state can be expressed as a vector $|\psi\rangle \in \Complex^d$ of length $1$, where $d$ is often considered to be finite dimensional in quantum technologies. For $n$ qubits the dimension is $d=2^n$, where it is customary to label basis elements of $\Complex^d$ by strings $x=x_1,\ldots,x_n \in \{0,1\}^n$. The Dirac notation $|\cdot\rangle$ is used to represent a vector and is referred to as a \emph{ket} while the conjugate transpose, $\langle \cdot|=|\cdot\rangle^{\dagger}=(|\cdot\rangle^*)^T$, is referred to as a \emph{bra}. 

Quantum information is manipulated through the application of quantum gates. A quantum gate $G$ is represented by a matrix $G \in \Complex^{d\times d}$, where $G$ is unitary, \ie, $G G^\dagger = G^\dagger G = \id$, where $\id$ is an identity matrix of dimension $d$.
Applying a quantum gate $G$ gives us the state $|\psi'\rangle = G|\psi\rangle$.
Common quantum gates include the single-qubit Pauli operators (Figure \ref{fig:thepaulis}).
Given the states of $n$ individual qubits $\ket{\psi_1},\ldots,\ket{\psi_n}$ the joint state of all $n$ qubits is given by the tensor (Kronecker) product of the individual vectors $\ket{\psi} = \ket{\psi_1} \otimes \ldots \otimes \ket{\psi_n}$. Similarly, applying gates $G_1,\ldots,G_n$ to $n$ individual qubits results in the application of the overall operation $G = G_1 \otimes \ldots \otimes G_n$.

Quantum computing and networking applications are realized by applying a series of quantum gates to one or several qubits and then performing a \emph{measurement} of the qubits to read out information in the quantum states. 

\subsubsection{Noisy Quantum States and Operations}\label{sec:noise}
\paragraph{Noisy Quantum States} A convenient way of representing a quantum state $|\psi\rangle$ is as a \emph{density matrix} $\rho=|\psi\rangle\langle\psi|$ which is obtained by taking the outer product of the ket and the bra of the state. 
Importantly, the density matrix formalism allows for the expression of noisy quantum states. For example, a probabilistic process that prepares a desired state $\proj{0}$ with probability $1-p$, but fails and instead prepares $\proj{1}$ with probability $p$, results in a noisy quantum state $\rho = (1-p) \proj{0} + p \proj{1}$. In general, the set of all quantum states on a $d$-dimensional quantum system (including noisy ones) corresponds to the set of matrices
\begin{align}
\mathcal{S} = \left\{\rho \in \Complex^{d\times d}, \rho \succ 0 \mbox{ is positive semi-definite and normalized }\tr(\rho)=1\ \right\}.
\end{align}

\paragraph{Noise Processes}
Using this formalism we can now express the effect of noise on a quantum state. As an example, imagine a noise process that transforms a quantum state $\rho_{\rm initial}$ that is placed into a quantum memory at time $t=0$, into a noisy quantum state $\rho_{\rm noisy}$ after a waiting time $t$. 
Qubits are susceptible to environmental noise that can inadvertently change their quantum state.  Such noise can arise due to imperfect shielding of qubits from external influence as well as imperfect implementations of quantum gates. Mathematically, the set of all possible noise processes corresponds to the set of completely positive trace preserving maps (CPTPM) $\Lambda: \mathcal{S} \rightarrow \mathcal{S}$.
The effect of environmental interaction on quantum states over time is often referred to as \emph{decoherence} and is modeled through the use of \emph{noise models} describing $\Lambda$. Common noise models include $\Lambda = \mathcal{D}$ depolarizing noise,
\begin{align}
\mathcal{D}_t(\rho) = (1-3p)\rho + pX\rho X + pY\rho Y + pZ\rho Z = (1-4p)\rho + 4p \frac{\id}{2}, 
\label{eq:depol}
\end{align}
which drives a quantum state towards the maximally noisy state, also called the maximally mixed state $\frac{\id}{2}$. This state is the quantum equivalent of white noise. Here, time dependence is often expressed by letting $p=\frac{1}{4}(1-e^{-\frac{t}{T}})$, for a fixed $T$ characterizing the quantum memory storing the quantum state. This allows one to express the noise incurred in a quantum memory storing a qubit, after a waiting time of $t$ has elapsed. A model of depolarizing noise is often used as a worst case estimate, when the physical noise process is insufficiently characterized.

In the literature describing implementations of quantum memories, we often have more information about the noise process of the quantum memory device. This noise is generally modeled as dephasing and damping noise (or a combination of both).
Dephasing noise is expressed as
\begin{align}
\mathcal{P}_t(\rho) = (1-p)\rho + pZ\rho Z,
\label{eq:dephase}
\end{align}
where similarly $p=\frac{1}{2}(1-e^{-\frac{t}{T_2}})$ is used to express time-dependence, for a fixed $T_2$ characterizing the memory. This can be understood as an analogue of the classical binary symmetric channel, where a flip operation (here, $Z$) is applied with some probability $p$.

Another common model is the amplitude damping noise channel
\begin{align}
\mathcal{A}_t(\rho) &= M_0 \rho M_0^{\dagger} + M_1 \rho M_1^{\dagger},
\label{eq:damping}
\end{align}
where $M_0,M_1$ have the form
\begin{align}
M_0 = \begin{bmatrix} 1 & 0\\ 0 & \sqrt{1 - \gamma} \end{bmatrix}, \qquad
M_1 = \begin{bmatrix} 0 & \sqrt{\gamma}\\ 0 & 0 \end{bmatrix},
\end{align}
and $\gamma = (1-e^{-\frac{t}{T_1}})$ for a fixed $T_1$ characterizing the effects of the amplitude damping channel. This can be understood as the quantum analogue of a noisy channel of one-sided error which preserves $\proj{0}$, but damps $\proj{1}$ to $\proj{0}$ with an error probability $\gamma$.
 
In most physical implementations of quantum devices, both $\mathcal{P}_t$ and $\mathcal{A}_t$ occur and the noise is described by a composite model
\begin{align}
\mathcal{C}_t(\rho) &= \mathcal{P}_t(\mathcal{A}_t(\rho)) = (1-p)\mathcal{A}_t(\rho) + p Z\mathcal{A}_t(\rho)Z \\
&= (1-p)\left(M_0 \rho M_0^{\dagger} + M_1 \rho M_1^{\dagger}\right) 
+ p Z\left(M_0 \rho M_0^{\dagger} + M_1 \rho M_1^{\dagger}\right)Z \ 
\end{align}
where, in general, $T_2 < T_1$ \cite{nielsen&chuang,tempel2011relaxation,Abobeih_2018,bruzewicz2019trapped,jobez2016towards}. Larger values of $T$, $T_1$, and $T_2$ correspond to a quantum memory with a longer memory lifetime.
 
\paragraph{Noisy Quantum Gates}
The effect of a noisy quantum gate can be described in an entirely analogous fashion. Note that in terms of the density formalism, the effect of applying a gate $G$ on a quantum state can be expressed as 
\begin{align}
G(\rho) = G \rho G^\dagger\ ,
\end{align}
where we follow convention and use $G$ both to denote the unitary matrix, as well as the CPTPM as indicated by context. 
When modeling noise in quantum gates it is customary to model a noisy implementation $\mathcal{E}$ as the ideal gate, followed by possibly time dependent noise $\mathcal{N}_t$. That is, $\mathcal{E} = \mathcal{N}_t \circ G$, where $G$ is the ideal implementation of the gate. We will follow this custom here.
 
As an example, consider a situation in which we perform the gate $G=X$, but then incur a waiting time of $t$ before the next quantum operation is applied. If the total noise process (inherent noise in the gate, plus noise due to waiting) is described by dephasing and damping noise $\mathcal{C}_t$, then the initial state $\rho_{\rm initial}$ is transformed to the noisy state 
\begin{align}
\rho_{\rm noisy} = \mathcal{C}_t\left(X\ \rho_{\rm initial}\ X^\dagger\right)\  
\end{align}
after the waiting time of $t$ has elapsed. 
 
\subsubsection{Entanglement}
Most quantum applications rely on a special property known as \emph{entanglement} that qubits can share. Mathematically, a state $\rho_{AB} \in \Complex^{d_A\times d_A} \otimes \Complex^{d_B \times d_B}$ of a combined quantum system of nodes (or qubits) $A$ and $B$ is called separable if and only if it can be written as a 
classical mixture (\ie, convex combination) of tensor products of single-node states (\ie, $\rho_{AB} = \sum_j p_j \sigma_A^j \otimes \tau_B^j$ for some distribution $\{p_j\}_j$, and states $\{\sigma_A^j\}_j$ on $A$ and $\{\tau_B^j\}_j$ on $B$).
Intuitively, separable states have only classical correlations between $A$ and $B$, since we may toss a coin according to $p_j$ and then prepare individual states on $A$ and $B$ without any form of quantum interaction between them.
 Any state $\rho_{AB}$ that is not separable is called \emph{entangled}.  
In general, $\ket{\Psi} = \frac{1}{\sqrt{d}} \sum_j \ket{j}_A \otimes \ket{j}_B$ is a maximally entangled state between two $d$-dimensional systems $A$ and $B$. 
Such entangled states form the primary building block of most quantum network applications.

\subsubsection{Quantum Quality Measure: Fidelity}\label{sec:fidelity}
 
 \paragraph{Fidelity of a state}
 The fidelity of a quantum state $\rho$ measures how well this state approximates a specific target state $\ket{\psi}$. It is the relevant quantity used to understand how a fixed noise process (\eg, during the preparation of the state) affects its quality, or how the quality of an already prepared state decreases as a function of a waiting time $t$ that this state spends in a quantum memory. 
 Specifically, the fidelity $F$ of a state $\rho$ to a target state $\ket{\psi}$ is defined as~\cite{gilchrist2005distance,van2014quantum}
 \begin{align}
 F = F(\rho,\ket{\psi}) = \langle \psi|\rho|\psi\rangle
 \end{align}
 such that $F=1$ iff $\rho$ is identical to the target state $|\psi\rangle$. 
The fidelity $F$ lies in the interval $[0,1]$ and larger values of $F$ indicate that $\rho$ is closer to the target state $|\psi\rangle$. 

\paragraph{Gate fidelity}
The average gate fidelity measures how well a real-world implementation $\mathcal{E}$ approximates a desired target gate $G$, and is defined as (see \eg,~\cite{nielsen2002simple})
\begin{align}
F_{\rm orig}(\mathcal{E},G) = \int d\Psi \bra{\Psi}G^\dagger \mathcal{E}(\proj{\Psi})G\ket{\Psi}\ ,
\end{align}
where $d\Psi$ is the Haar (uniform) measure on the set of quantum states. That is, the gate fidelity measures how well the implementation approximates the target gate when applied to a specific input state $\ket{\Psi}$, averaged over all possible input states.

When $\mathcal{E} = \mathcal{N}_t \circ G$ (see above) for some time-dependent noise $\mathcal{N}_t$, we will also use the shorthand
\begin{align}
\label{eq:GF}
F(\mathcal{N}_t,G) = F_{\rm orig}(\mathcal{N}_t \circ G,G)
\end{align}
to denote the resulting average fidelity. Eq.~(\ref{eq:GF}) is the relevant quantity when we are interested in the question: provided we had to wait time $t$ after executing the gate $G$ (\eg, due to a scheduling decision), what is the resulting gate fidelity? We remark that since $G$ is unitary, the case where time-dependent noise is applied \emph{before} the execution of the gate $G$ instead reduces to simply studying $\int d \Psi \bra{\Psi}\mathcal{N}_t(\proj{\Psi})\ket{\Psi}$. As we will see, our later formulas apply to both cases.

\paragraph{Entanglement fidelity}
The entanglement fidelity~\cite{originalPaper} measures the quality of an initially maximally entangled state after it was stored in a noisy memory on node $B$ (or $A$) for time $t$. Specifically, for a noise process $\id_A \otimes \mathcal{N}_t$ (no noise on $A$, and time-dependent noise $\mathcal{N}_t$ on system $B$), the entanglement fidelity is defined as 
\begin{align}
F_e(\mathcal{N}_t) = \bra{\Psi}\id_A \otimes \mathcal{N}_t\left(\proj{\Psi}\right) \ket{\Psi}\ , 
\end{align}
where $\ket{\Psi}$ is the maximally entangled state defined above. We remark that the case of noise on $A$ and $B$ can always be dealt with by observing that for any matrix $M$ applied to $A$, this can be translated to applying $M^T$ to $B$: $M_A \otimes \id_B \ket{\Psi} = \id_A \otimes M_B^T \ket{\Psi}$. That is, noise on $A$ and $B$ can be understood by applying both types of noise to the system $B$ in succession.  It turns out that the gate fidelity, and entanglement fidelity are related as~\cite{originalPaper} 
\begin{align}\label{eq:linkFe}
F_{\rm orig}(\mathcal{E},G) = \frac{d F_e(\mathcal{E}) + 1}{d+1}\ ,
\end{align}
where $d$ is the dimension of $A$ and $B$.  For qubits, $d=2$.

\section{Modeling the Architectures}
\label{sec:model}
\if{false}
\begin{table}[]
\centering
\caption{Variables used in the models of the NV architectures and their descriptions.}
\begin{tabular}{ c|l } 
Variable & Description\\
 \hline
$\lambda_e$ & entanglement request arrival rate\\
$\mu_e$ & entanglement generation rate\\
$\lambda_m$ & state transfer request arrival rate\\
$\mu_m$ & state transfer completion rate\\
$\lambda_c$ & computation request arrival rate\\
$\mu_c$ & computation completion rate
\end{tabular}
\label{tab:variables}
\end{table}
\begin{figure}
\centering
\includegraphics[width=0.4\textwidth]{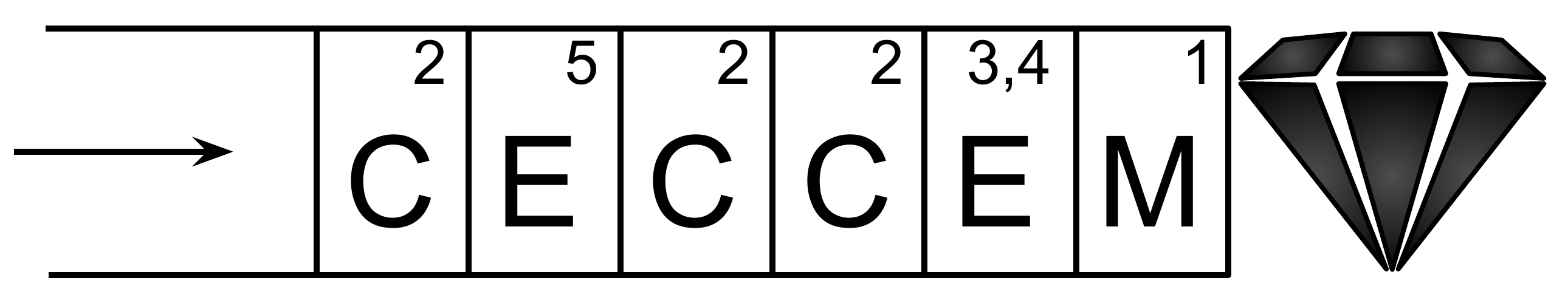}
\caption{Example queue occupancy: $C$, $E$, and $M$ represent computational, entanglement, and moving job, respectively. The numbers in each slot represent the processing order: the moving job is currently being executed, but all computational jobs may be processed before the next entanglement request's processing begins. The second entanglement request is fifth to be processed, since the first entanglement request must be followed by a moving operation (which is fourth to be processed).}
\label{fig:queueEx}
\end{figure}
\fi
\begin{figure}[t]
\centering
\begin{minipage}{0.4\textwidth}
{\begin{align*}
    X=\begin{bmatrix} 0 & 1 \\ 1 & 0\end{bmatrix}&\quad Y=\begin{bmatrix} 0 & -i\\ i & 0\end{bmatrix}\\\\ Z&=\begin{bmatrix} 1 & 0\\ 0 & -1\end{bmatrix}
\end{align*}}
\captionof{figure}{The Pauli operators.}
\label{fig:thepaulis}
\end{minipage}
\begin{minipage}{0.55\textwidth}
\includegraphics[width=\textwidth]{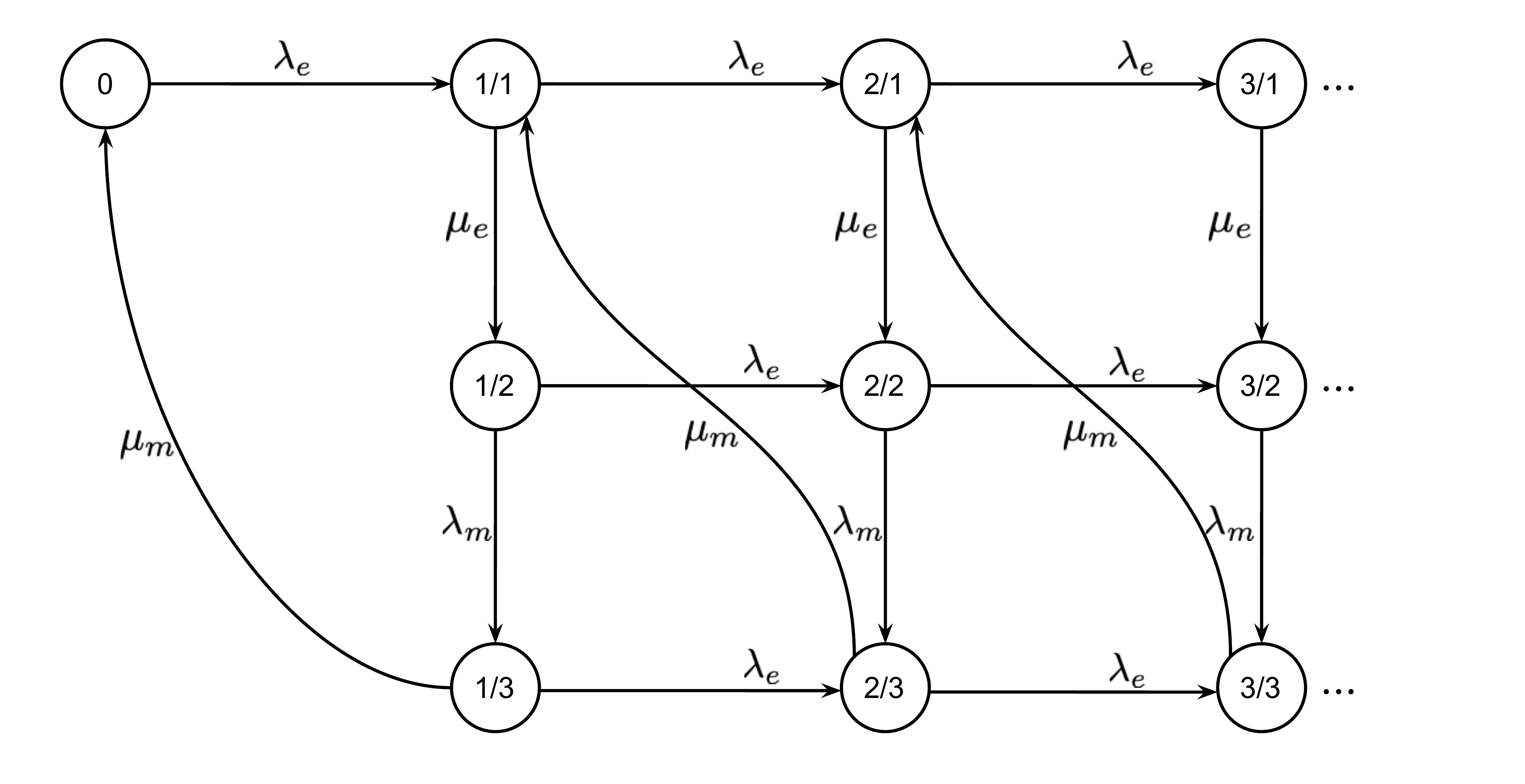}
\caption{A CTMC to model the SD and DD architectures.}
\label{fig:ctmcHypoK1}
\end{minipage}
\end{figure}
We first provide a summary of the architecture attributes to be considered in the modeling and analysis of our problem. First, motivated by the limits of implemented quantum devices, in the SD architecture all operations must be performed sequentially: \eg, computation may not be performed when entanglement generation or a state transfer are in progress. In the DD architecture, entanglement generation and computation (assumed to be independent of each other) may be performed in parallel, as these operations take place in separate devices. When a state transfer operation is in progress, however, both devices in the DD architecture must wait until its completion before servicing another computation or entanglement generation request. This is motivated by the same limit that prevented the simultaneous execution of network and computation operations in the SD architecture: a network operation is needed at the computation device in the DD architecture to transfer the entangled state, but this time only for the time needed to produce entanglement with the very close network device. In both architectures, a state transfer is required before a new entanglement request may be serviced, as this frees up the communication qubit required for further entanglement generation attempts.

To define the state space of our problem, it is helpful to classify the processes that make use of the quantum processors as follows: $(i)$ entanglement generation,
$(ii)$ state transfer operations (we interchangeably refer to these as moving operations), and
$(iii)$ computation. Each class of operations (or jobs) is associated with an arrival rate and a processing rate, as specified in Table~\ref{tab:variables}. We assume that all request arrivals are Poisson and all processing times are exponentially-distributed. $\lambda_e$ represents the demand for entanglement, \ie, the entanglement request rate either from a user or an application. The corresponding parameter $\mu_e$ represents the rate at which (remote) entanglement is generated -- this is a function of the link length. Entanglement generation attempts at the elementary link level are often modeled as Bernoulli trials with some fixed success probability $p_{\rm gen}$ for each attempt -- see, \eg, \cite{Guha_2015,Shchukin_2019,Brand_2020,Vardoyan_2020}. In \cite{Vardoyan_2021}, the authors model the time between successful generation attempts as an exponentially-distributed random variable (r.v.) to accommodate their use of a CTMC when modeling a quantum network node; we adopt their convention here.

When entanglement is successfully generated, we assume that the state of the entangled qubit is eventually moved from the networking component to the computing component for processing, \eg, in the single-NV example, the state is transferred from the electronic spin to the carbon spin. 
Since this moving operation may not be requested immediately, we introduce $\lambda_m$ as the moving request rate. The time to physically perform such moving operations is exponentially-distributed with parameter $\mu_m$; in general $\mu_m$ is lower for DD architectures than for SD ones, as the former requires more complex gate sequences to perform device-device state transfers: \eg, in the double-NV example, NV-NV entanglement generation is required. Finally, computational jobs arrive according to a Poisson process with parameter $\lambda_c$, and their processing times are exponentially-distributed with parameter $\mu_c$. Note that, save for entanglement generation, our use of the exponential distribution in modeling processing times is largely motivated by the resulting simplicity of the model and its analysis.
A more realistic way to model state transfers for an SD design implemented with an NV center in diamond, for instance, would be to assume that their service times are deterministic (albeit, for the double-NV design, the use of the exponential distribution is well-justified due to the need to generate on-chip entanglement when servicing state transfer requests). In practice, the sojourn time distribution in each state is determined not only by the application, but also by the physical platform (\eg, NV center in diamond, ion traps, atomic ensembles). Depending on the latter, it is possible that even the time it takes to perform a local state transfer is a random variable. In the interest of keeping the assumptions as general as possible and the results interpretable, we opt for the exponential distribution. However, if necessary, one may accommodate arbitrary service time distributions by modeling the architecture as a semi-Markov process. Depending on the specific application, it may also be appropriate to include computational jobs as an additional phase of the QBD process. Several other extensions of our model may also be considered depending on the use case -- see Section \ref{sec:conclusion} for further discussion -- but lie outside the scope of this work.
\begin{figure}[t]
{\centering
\begin{minipage}{0.4\textwidth}
  \centering
    \captionof{table}{Variables used in the quantum architecture model and their descriptions.}
\begin{tabular}{ c|l } 
Variable & Description\\
 \hline
$\lambda_e$ & entanglement request arrival rate\\
$\mu_e$ & entanglement generation rate\\
$\lambda_m$ & state transfer request arrival rate\\
$\mu_m$ & state transfer completion rate\\
$\lambda_c$ & computation request arrival rate\\
$\mu_c$ & computation completion rate
\end{tabular}
\label{tab:variables}
\end{minipage}
\qquad\qquad\quad
\begin{minipage}{0.46\textwidth}
  \centering
\includegraphics[width=0.95\textwidth]{QueueEx}
  \captionof{figure}{Example queue occupancy: $C$, $E$, and $M$ represent computational, entanglement, and moving job, respectively. 
  The numbers in each slot represent the processing order: job $M$ is currently being executed, but all $C$ jobs may be processed before the next $E$ request's processing begins. The second $E$ request is fifth to be processed, since the first $E$ request must be followed by a moving operation.}
\label{fig:queueEx}
\end{minipage}}\end{figure}

In summary, the processing rates $\mu_{e}$, $\mu_m$, and $\mu_c$ depend on the properties of the architecture, while the request arrival rates $\lambda_e$, $\lambda_m$, and $\lambda_c$ depend on application demands. When the $\lambda_e$ value is high or $\mu_e$ is low, the application may be thought of as networking(entanglement)-heavy, while high values of $\lambda_c$ and low values of $\mu_c$ correspond to a computation-heavy application.
In general, $\mu_c$ is much greater than $\mu_e$ and even $\mu_m$ for the DD architecture, as local gates are far less time-consuming than entanglement generation in a network where processors are separated by large distances. For this reason, when constructing the model we make a simplifying assumption that $\mu_c=\infty$, \ie, we assume that computational job processing times are negligible. As a consequence, computational jobs may be processed whenever the processor(s) would otherwise be idle (\ie, waiting for a moving request to arrive), as well as in-between events (\eg, immediately after the completion of a moving request but right before the entanglement generation of the next entanglement job) as their processing does not affect the rest of the system. As a result, a computational job need only to wait for the completion of a single event -- either entanglement generation or a moving request -- and as soon as this event has been completed, all computational jobs that are in the queue may be processed instantaneously\footnote{For some physical platforms, additional simplifying assumptions on computational jobs may be necessary -- see Appendix \ref{nv_in_diamond_app} for a discussion.}. An example is shown in Figure \ref{fig:queueEx}. These modeling assumptions ultimately allow us to obtain all necessary performance measures in closed form; however, in Section \ref{sec:SimNumerObs} we remove the assumption on negligible processing times for computational jobs and observe the effects on the average fidelity numerically. Lower values of $\mu_c$ may be used to model ``atomic'' gate sequences which must not be interrupted by any other task or operation.

With the aforementioned assumptions, we may model both the SD and DD architectures as $M/HYPO_3/1$ queueing systems, where the arrivals correspond to entanglement requests and are a Poisson process with rate $\lambda_e$, and the service times are hypo-exponentially distributed with three service stages each of which are exponentially-distributed with parameters $\mu_e$, $\lambda_m$, and $\mu_m$. Figure~\ref{fig:ctmcHypoK1} depicts the CTMC representing this queueing system: each state of the form $N/k$ corresponds to $N$ outstanding entanglement requests in the system, with the first job in the $k$th stage of its processing. Entanglement requests are processed according to a first-in, first-out (FIFO) policy: the first stage ($k=1$) is entanglement generation, the second ($k=2$) is awaiting the arrival of a moving request, and the third ($k=3$) is the execution of a moving request. Note that the next entanglement request cannot begin processing until all three stages of the previous request have been completed, since the communication qubit must be freed before entanglement generation may be attempted again. State $0$ corresponds to the case with no outstanding entanglement requests.

Note from the CTMC that a computational request only waits whenever its arrival coincides with the system being in states of the form $N/1$ and $N/3$ in the SD architecture, or if the system is in a state $N/3$ in the DD architecture. Thus, by the memoryless property of the exponential distribution, a computational job's waiting time distribution is given either by $f_e (t) = \mu_e e^{-\mu_e t}$ or by $f_m(t) = \mu_m e^{-\mu_m t}$, depending on the state of the system upon arrival. 
We may obtain the probabilities of arrival into states $N/1$ and $N/3$ from the stationary distribution of the CTMC. We remark that when computing the entanglement fidelity, we are interested in the amount of time a newly-entangled qubit must wait for a state transfer request to arrive, \ie, we require the \emph{waiting} time distribution, and not the sojourn time distribution of that qubit (which also includes the amount of time it takes to perform the moving operation). The reason is that the specific gate sequence used to perform the state transfer in a given architecture already implicitly accounts for the the time it takes to execute the gates.

A note on mathematical notation: in the remainder of the paper, we use superscripts $^{(1)}$ and $^{(2)}$ to denote parameters corresponding to the SD and DD architectures, respectively. \emph{E.g.}, $\mu_m^{(1)}$ corresponds to the moving rate in the SD architecture, while $T_1^{(2)}$ and $T_2^{(2)}$ refer to the memory lifetimes of the DD architecture.

\section{Waiting Time Distributions}
\label{sec:waitingTimeDistrs}
The CTMC shown in Figure \ref{fig:ctmcHypoK1} has been studied in literature: specifically, it is a quasi birth-death (QBD) process with the special property that one of the blocks in its generator matrix is a rank 1 matrix, allowing us to compute the rate matrix explicitly using the results in \cite{latouche1999introduction}. For completeness, we include the rate matrix derivation and ergodicity condition for this Markov chain in Appendix \ref{ctmc_analysis_app}. For the following computations, assume that the mean drift condition
\begin{align}
\frac{1}{\lambda_e} > \frac{1}{\mu_e}+\frac{1}{\lambda_m}+\frac{1}{\mu_m}
\label{eq:meandriftcondHypo}
\end{align}
is satisfied so that a stationary distribution exists. Recall from the discussion in Section \ref{sec:model} that to determine the waiting time distribution of a computational job, it suffices to compute the stationary probabilities of states $N/1$ and $N/3$, which we label $\pi_{N/1}$ and $\pi_{N/3}$, respectively. Specifically, rather than having to derive the individual stationary probability of each state in phase 1 or 3, we need only to compute the aggregate probabilities $\sum\limits_{N=1}^{\infty}\pi_{N/1}$ and $\sum\limits_{N=1}^{\infty}\pi_{N/3}$,
since a computational request's waiting time has no dependence on $N$, but only on the phase (stage) in which the QBD process has been found upon arrival. In Appendix \ref{app:stationary}, we show that $\sum\limits_{N=1}^{\infty}\pi_{N/1} = \lambda_e/\mu_e$ and $\sum\limits_{N=1}^{\infty}\pi_{N/3} = \lambda_e/\mu_m$.

We are now ready to compute the waiting time distributions of computational jobs in both types of architectures. Recall that in the SD architecture, a computational job may be processed immediately, as long as there is no ongoing entanglement generation or moving job in progress. Specifically, if a computational request arrives while the QBD process is in a state $N/2$, then it is processed immediately (the waiting time is zero), and otherwise, the request is queued behind the ongoing entanglement or moving request. Thus, the waiting time (conditioned on the stage $k$) for a computational job in the SD architecture is given by $W_1 = S_e\mathds{1}_{\{k=1\}} + S_{m_1}\mathds{1}_{\{k=3\}}$, where $\mathds{1}$ is the indicator function,
\if{false}
\begin{align}
W_1 &= \begin{cases}
S_e,& \text{if } k=1,\\
0, &\text{if } k=2,\\
S_{m_1}, &\text{if } k=3,
\end{cases}
\end{align}
where 
\fi
$S_e$ is an exponentially-distributed r.v. with mean $1/\mu_e$ and probability density function (p.d.f) $f_e(t)$, and $S_{m_1}$ is an exponentially-distributed r.v. with mean $1/\mu_{m}^{(1)}$ and p.d.f. $f_{m_1}(t)$, with $\mu_m^{(1)}$ the rate of completing a moving request (after its arrival, meaning that this is the rate of solely executing the gates to fulfill a moving request) in the SD architecture. 
Then, by the law of total probability (\emph{cf.} Eq. (2.26) in \cite{pinsky2010introduction}), the marginal p.d.f. for the waiting time of computational jobs in the SD architecture is given by
\begin{align}
f_{W_1}(t) &= \sum\limits_{N=1}^{\infty}\pi_{N/1}f_e(t) + \sum\limits_{N=1}^{\infty}\pi_{N/3}f_{m_1}(t) + \left(1-\sum\limits_{N=1}^{\infty}\pi_{N/1} - \sum\limits_{N=1}^{\infty}\pi_{N/3}\right)\delta(t)\\
&=\lambda_e e^{-\mu_e t} + \lambda_e e^{-\mu_m^{(1)} t}+ \left(1-\frac{\lambda_e}{\mu_e} - \frac{\lambda_e}{\mu_m^{(1)}}\right)\delta(t),
\label{eq:wdistr_arch1}
\end{align}
where $\delta(t)$ is the Dirac delta function, defined as
\begin{align*}
\delta(t) = \begin{cases}
+\infty, & t = 0,\\
0, & t\neq 0,
\end{cases}
\quad\text{and satisfies}\quad
\int\limits_{-\infty}^{\infty}\delta(t) dt = 1.
\end{align*}
Next, we consider the waiting time distribution of computational jobs in the DD architecture, wherein a computational request need only wait if a moving request is actively being executed. Letting $f_{m_2}$ be the p.d.f. of an $\sim Exp(\mu_m^{(2)})$ r.v., this distribution is given by
\begin{align}
f_{W_2}(t) &= \sum\limits_{N=1}^{\infty}\pi_{N/3}f_{m_2}(t) +\left(1-\sum\limits_{N=1}^{\infty}\pi_{N/3}\right)\delta(t) =  \lambda_e e^{-\mu_m^{(2)} t}
+\left(1-\frac{\lambda_e}{\mu_m^{(2)}}\right)\delta(t).
\label{eq:wdistr_arch2}
\end{align}
When using $f_{W_1}$ and $f_{W_2}$ to compute average gate fidelity in Section \ref{sec:fidelityDerivsAppToOurProb}, we will use the definition
\begin{align}
\int\limits_{a}^{b} f(x)\delta(x-c)dx = f(c),
\label{eq:diracDeltaIntegral}
\end{align}
where $a\leq c\leq b$ and $f(x)$ is a function continuous on the interval $[a,b]$.

Another quantity of interest is the waiting time distribution of a newly-entangled qubit while it awaits a state transfer request. For both architectures, this is given by $g(t) = \lambda_m e^{-\lambda_mt}$.
\if{false}
\begin{align}
    g(t) = \lambda_m e^{-\lambda_mt}.
    \label{eq:wdistr_move}
\end{align}
\fi

\section{Formulas for computing quantum fidelities}
\label{sec:fidelityDerivs}
We now provide several general formulas that can be used to link an understanding of waiting times $t$ to the gate and entanglement fidelities for the standard noise models used to describe quantum devices. Our formulas for the gate fidelities can be applied to the situations described in Section~\ref{sec:fidelity}, where we need to wait for a time $t$ before or after executing a quantum gate. In this work, this waiting time occurs since we need to suspend quantum processing when performing network operations (see Section~\ref{sec:model}). However, we remark that our formulas are applicable to any other situations where such waiting times arise, such as for example in the analysis of algorithms for scheduling gates on a quantum processor. 
We emphasize that our formulas also apply to a situation where one would simply want to understand the average reduction in quality when storing a qubit in memory, which corresponds to applying the trivial gate $G=\id$.

Using the known link between gate and entanglement fidelities~\cite{originalPaper} in Eq.~\eqref{eq:linkFe}, these formulas can also be directly applied to understand how the quality of an entangled link decays as a function of a waiting time $t$. The use case in this work is to understand the decay of entanglement due to waiting for gate or move operations (see Section~\ref{sec:model}). However, it can also be applied to any other situation where a waiting time $t$ is incurred before the entanglement can be processed.
\begin{lemma}
Let $\mathcal{D}_t$, $\mathcal{P}_t$, $\mathcal{A}_t$ and $\mathcal{C}_t$ denote the depolarizing channel, dephasing channel, amplitude damping and composite channel for a single qubit respectively, as defined in Section~\ref{sec:noise}. Let $t > 0$\, and $G$ be any quantum gate (unitary).
\begin{itemize}
\item[-] For depolarizing noise $\mathcal{D}_t$ with $p = \frac{1}{4}\left(1 - e^{-t/T}\right)$\ ,
\begin{align}
F(\mathcal{D}_t, G) = \frac{1}{2}\left(1 + e^{-\frac{t}{T}}\right)\ .
\end{align}
\item[-] For dephasing noise $\mathcal{P}_t$ with $p=\frac{1}{2}\left(1-e^{-t/T_2}\right)$\ , 
\begin{align}
F(\mathcal{P}_t, G) = \frac{1}{3}\left(2 + e^{-\frac{t}{T_2}}\right)\ .
\label{eq:dephasing_fidelity}
\end{align}
\item[-] For amplitude damping noise $\mathcal{A}_t$ with $\gamma = (1-e^{-\frac{t}{T_1}})$,
\begin{align}
F(\mathcal{A}_t, G) = \frac{1}{6}\left(3 + e^{-\frac{t}{T_1}} + 2e^{-\frac{t}{2T_1}}\right)\ .
\label{eq:amp_damp_fidelity}
\end{align}
\item[-] For the composite noise model $\mathcal{C}_t$ using $\gamma = (1-e^{-\frac{t}{T_1}})$ for $\mathcal{A}_t$ and $p=\frac{1}{2}\left(1-e^{-t\left(\frac{1}{T_2} - \frac{1}{2T_1}\right)}\right)$ for $\mathcal{P}_t$ (to account for dephasing effects of $\mathcal{A}_t$ \cite{nielsen&chuang,coopmans2020netsquid})\ ,
\begin{align}
F(\mathcal{C}_t, G) = \frac{1}{6}\left(3 + e^{-\frac{t}{T_1}} + 2e^{-\frac{t}{T_2}}\right),
\end{align}
where we assume $0 < T_2 \leq 2T_1$ so that Eqs.~\eqref{eq:dephasing_fidelity} and \eqref{eq:amp_damp_fidelity} are recovered as $T_1 \to \infty$ and $T_2 \to 2T_1$.
\end{itemize}
\end{lemma}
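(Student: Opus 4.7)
The plan is to first exploit the unitarity of $G$ to reduce $F(\mathcal{N}_t, G)$ to a quantity that is independent of $G$, and then apply the link~\eqref{eq:linkFe} with $d=2$ to convert each of the four claims into a computation of the entanglement fidelity $F_e(\mathcal{N}_t) = \bra{\Psi}\id_A \otimes \mathcal{N}_t(\proj{\Psi})\ket{\Psi}$ on the single maximally entangled state $\ket{\Psi} = \tfrac{1}{\sqrt 2}(\ket{00} + \ket{11})$. Concretely, I would substitute $\ket{\Phi} = G\ket{\Psi}$ in the defining integral~\eqref{eq:GF} and use the unitary invariance of the Haar measure to get $F(\mathcal{N}_t, G) = \int d\Phi\, \bra{\Phi}\mathcal{N}_t(\proj{\Phi})\ket{\Phi}$, which no longer depends on $G$. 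The link formula then gives $F(\mathcal{N}_t, G) = \tfrac{1}{3}\bigl(2 F_e(\mathcal{N}_t) + 1\bigr)$, so everything boils down to four computations of $F_e$ on $\ket{\Psi}$.

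For the Pauli-type channels ($\mathcal{D}_t$ and $\mathcal{P}_t$) the main observation is that $\bra{\Psi}(\id \otimes P)\ket{\Psi} = \tfrac{1}{2}\tr(P) = 0$ for every traceless Pauli $P$, so in the Kraus expansion $F_e = \sum_i |\bra{\Psi}(\id \otimes K_i)\ket{\Psi}|^2$ only the identity term survives. This gives $F_e(\mathcal{D}_t) = 1 - 3p$ and $F_e(\mathcal{P}_t) = 1 - p$; substituting the stated values of $p$ and simplifying through the link recovers $\tfrac{1}{2}(1 + e^{-t/T})$ and $\tfrac{1}{3}(2 + e^{-t/T_2})$. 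For amplitude damping I would compute directly with the Kraus operators: $(\id \otimes M_0)\ket{\Psi} = \tfrac{1}{\sqrt 2}(\ket{00} + \sqrt{1-\gamma}\ket{11})$, while $(\id \otimes M_1)\ket{\Psi} = \tfrac{\sqrt\gamma}{\sqrt 2}\ket{10}$ is orthogonal to $\ket{\Psi}$, so $F_e(\mathcal{A}_t) = \tfrac{1}{4}(1 + \sqrt{1-\gamma})^2$; plugging in $\sqrt{1-\gamma} = e^{-t/(2T_1)}$ and expanding the square reproduces Eq.~\eqref{eq:amp_damp_fidelity}.

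The harder case is the composite channel $\mathcal{C}_t = \mathcal{P}_t \circ \mathcal{A}_t$. Linearity in the dephasing mixing parameter gives
\begin{align}
F_e(\mathcal{C}_t) = (1-p)\,F_e(\mathcal{A}_t) + p\,\bra{\Psi^-}\mathcal{A}_t(\proj{\Psi})\ket{\Psi^-},
\end{align}
where $\ket{\Psi^-} = (\id \otimes Z)\ket{\Psi} = \tfrac{1}{\sqrt 2}(\ket{00}-\ket{11})$. Repeating the amplitude-damping Kraus calculation against $\ket{\Psi^-}$ instead of $\ket{\Psi}$ merely flips the sign of the surviving cross term, producing $\tfrac{1}{4}(1 - \sqrt{1-\gamma})^2$. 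Expanding both squares and collecting terms gives
\begin{align}
F_e(\mathcal{C}_t) = \tfrac{1}{4}\bigl[\,(2 - \gamma) + 2\sqrt{1 - \gamma}\,(1 - 2p)\,\bigr].
\end{align}
The specific choice $p = \tfrac{1}{2}\bigl(1 - e^{-t(1/T_2 - 1/(2T_1))}\bigr)$ is engineered precisely so that $(1 - 2p)\sqrt{1-\gamma} = e^{-t/T_2}$ exactly, collapsing the cross term and yielding $F_e(\mathcal{C}_t) = \tfrac{1}{4}(1 + e^{-t/T_1} + 2 e^{-t/T_2})$; via the link formula this produces the claimed $\tfrac{1}{6}(3 + e^{-t/T_1} + 2 e^{-t/T_2})$. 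The limits $T_1 \to \infty$ and $T_2 \to 2T_1$ then serve as immediate consistency checks against the dephasing and amplitude-damping expressions.

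The main obstacle will be the composite step: four Kraus contributions have to be tracked through both the $\ket{\Psi}$ and $\ket{\Psi^-}$ sectors, and one must verify that the algebraic cancellation really does eliminate the explicit $T_1$ dependence from the cross term. The range condition $0 < T_2 \leq 2T_1$ stated in the lemma is precisely what keeps $p \in [0,\tfrac{1}{2}]$ so that $\mathcal{P}_t$ remains a valid probabilistic mixture, and it is also the range on which the two degenerate limits are well defined.
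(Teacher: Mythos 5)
Your proof is correct, but it takes a genuinely different computational route from the paper's main derivation. You share the paper's first step, using unitary invariance of the Haar measure to strip out $G$ and reduce to $F_{\rm orig}(\mathcal{N}_t,\id)$; from there, however, the paper plugs the channels into the cited Bowdrey et al.\ closed form $F_{\rm orig}(\mathcal{E},G)=\frac{1}{2}+\frac{1}{12}\sum_{P\in\{X,Y,Z\}}\Tr\left[GPG^{\dagger}\mathcal{E}(P)\right]$ and finishes by matrix algebra (with a second, self-contained derivation in the appendix via the Choi--Jamiolkowski isomorphism and the symmetric-subspace $2$-design identity). You instead run the link formula \eqref{eq:linkFe} in the forward direction, reducing each claim to an entanglement-fidelity computation $F_e(\mathcal{N}_t)=\sum_i\left|\bra{\Psi}(\id\otimes K_i)\ket{\Psi}\right|^2$ against a single Bell state: the Pauli channels collapse to $1-3p$ and $1-p$ because traceless Kraus operators drop out, amplitude damping gives $\frac{1}{4}\left(1+\sqrt{1-\gamma}\right)^2$, and the composite case is handled by splitting over the dephasing mixture and observing that the engineered $p$ makes $(1-2p)\sqrt{1-\gamma}=e^{-t/T_2}$, which eliminates the explicit $T_1$ dependence of the cross term exactly as required. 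I verified all four computations and the final conversions through $F=\frac{2F_e+1}{3}$; they reproduce the stated formulas, and your handling of the composition order $\mathcal{P}_t\circ\mathcal{A}_t$ and of the constraint $0<T_2\leq 2T_1$ (keeping $p\in[0,\tfrac{1}{2}]$) is consistent with the paper. What your route buys is lighter algebra, since every channel is evaluated as a matrix element on one maximally entangled state, and it makes the cancellation behind the composite formula transparent; its cost is reliance on the cited Horodecki relation \eqref{eq:linkFe}, whereas the paper relies on the cited Bowdrey formula, so each argument rests on exactly one external closed-form result. A minor notational point: in the composite step your $\mathcal{A}_t(\proj{\Psi})$ should be read as $(\id_A\otimes\mathcal{A}_t)(\proj{\Psi})$, which is indeed what your computation uses.
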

To the non-quantum expert, it may come as a surprise that the formulas above only depend on the noise, but not on the specific gate $G$. As we will see, this is simply a consequence of $G$ being unitary, combined with the fact that we uniformly average over the set of all quantum states and this average does not change when a unitary is applied.

Given an understanding of the waiting time distribution, one may then readily compute the average gate fidelity due to waiting as
\begin{align}
F_{avg}(\mathcal{N}_t,G) = \int dw(t) F(\mathcal{N}_t,G)\ ,
\label{eq:Favg}
\end{align}
where $dw(t)$ is the measure over waiting times $t$ resulting from a specific model, and $\mathcal{N}_t$ is the quantum channel.
\subsection{Derivation}
We remark that there are several methods to obtain the same result, and for completeness we present a self-contained derivation using only elementary facts from quantum information theory in Appendix ~\ref{app:avg_fidelity_derivation}. Here, we make use of a result in ~\cite{bowdrey2002fidelity} for the case where the quantum channels are applied to qubits, as relevant for the standard noise models above.
From~\cite{bowdrey2002fidelity} we have that for any one qubit quantum channel $\mathcal{E}$ used to approximate a gate $G$ that
\begin{align}
F_{\rm orig}(\mathcal{E}, G) = \frac{1}{2} + \frac{1}{12}\left(\Tr\left[GXG^{\dagger}\mathcal{E}\left[X\right] \right] + \Tr\left[GYG^{\dagger}\mathcal{E}\left[Y\right] \right] + \Tr\left[GZG^{\dagger}\mathcal{E}\left[Z\right] \right]\right)\label{eq:bowen}
\end{align}
where $X$, $Y$ and $Z$ are the Pauli matrices defined in Section~\ref{sec:background}. When $\mathcal{E} = G \circ \mathcal{N}_t$ (\ie, noise in the gate is modeled by applying first a noisy channel $\mathcal{N}_t$ followed by the ideal implementation of $G$) we have that since $G$ is unitary ($G^\dagger G = G G^\dagger = \id$),
\begin{align*}
    F_{\rm orig}(G \circ \mathcal{N}_t,G) = \int d\psi \bra{\psi}G^\dagger G \mathcal{N}_t(\proj{\psi}) G^\dagger G \ket{\psi} = \int d\psi \bra{\psi}\mathcal{N}_t(\proj{\psi})\ket{\psi} = F_{\rm orig}(\mathcal{N}_t,\id)\ .
\end{align*}
Using~\eqref{eq:bowen}, the Pauli matrices, and the definitions of the quantum noise channels (see Section~\ref{sec:background}), matrix algebra then yields the claimed formulas above.
For the case where $\mathcal{E} = \mathcal{N}_t \circ G$ (\ie, the noisy gate is modeled by first applying the ideal gate $G$ and then applying a noise process $\mathcal{N}_t$, a common convention in quantum technologies), we also obtain
\begin{align*}
    F_{\rm orig}(\mathcal{N}_t \circ G,G) = \int d\psi \bra{\psi}G^\dagger \mathcal{N}_t(G \proj{\psi} G^\dagger) G \ket{\psi} = \int d\psi \bra{\psi}\mathcal{N}_t(\proj{\psi})\ket{\psi} = F_{\rm orig}(\mathcal{N}_t,\id)\ ,
\end{align*}
where this time we have made use of the fact that the Haar (uniform) measure $d\psi$ on the set of quantum states is invariant under the application of a unitary $G$, since a unitary simply permutes the set of quantum states $G\ket{\psi} = \ket{\psi'}$. We remark that in quantum information operations do not generally commute and $\mathcal{N}_t \circ G \neq G \circ \mathcal{N}_t$ for most choices of $G$ and $\mathcal{N}_t$.



\subsection{Application to our problem}
\label{sec:fidelityDerivsAppToOurProb}
As discussed in Section \ref{sec:quantum_bg}, fidelity acts as a measure by which we may evaluate the performance of single- and double-device architectures.  For computation requests, we determine an associated gate fidelity that reflects the quality of the quantum gate(s) that are applied in the computation.  For moving requests, we consider the fidelity of the entanglement that is delivered to applications. 

We evaluate the average gate fidelity (Eq. (\ref{eq:Favg})) for computation requests on the SD and DD architectures using the composite noise model $\mathcal{C}$ and the waiting time distributions in (\ref{eq:wdistr_arch1}), (\ref{eq:wdistr_arch2}):
\begin{align}
F^{(1)}_{avg}(\mathcal{C}, G) &= \int_0^{\infty} f_{W_1}(t)F(\mathcal{C}, G)dt\nonumber \\
&= 1 - \frac{\lambda_e}{2}\left(\frac{1}{\mu_e} + \frac{1}{\mu_m^{(1)}}\right) + \frac{\lambda_e}{6}\left(\frac{2T_2}{\mu_e T_2 + 1} + \frac{T_1}{\mu_e T_1 + 1} + \frac{2T_2}{\mu^{(1)}_m T_2 + 1} + \frac{T_1}{\mu^{(1)}_m T_1 + 1}\right),
\label{eq:Favg1_composite}\\
F^{(2)}_{avg}(\mathcal{C}, G) &= \int_0^{\infty} f_{W_2}(t)F(\mathcal{C}, G)dt
= 1 - \frac{\lambda_e}{2\mu_m^{(2)}} + \frac{\lambda_e}{6}\left(\frac{2T_2}{\mu^{(2)}_m T_2 + 1} + \frac{T_1}{\mu^{(2)}_m T_1 + 1}\right),
\label{eq:Favg2_composite}
\end{align}
where we use (\ref{eq:diracDeltaIntegral}) to evaluate the integrals above.
We also evaluate the average fidelity of the entanglement that is moved into memory. The waiting time distribution for moving requests for both architectures is given by $g(t) = \lambda_m e^{-\lambda_m t}$, so that
\begin{align*}
F_e(\mathcal{C}) =
\frac{3F_{avg}(\mathcal{C}, I)-1}{2} &=   \frac{1}{4}\int_0^{\infty} \lambda_m e^{-\lambda_m t}\left(3 + e^{-\frac{t}{T_1}} + 2e^{-\frac{t}{T_2}}\right)dt -\frac{1}{2}
= \frac{1}{4} + \frac{\lambda_m}{4}\left(\frac{T_1}{\lambda_m T_1 + 1} + \frac{2 T_2}{\lambda_m T_2 + 1}\right).
\end{align*}

\section{Analytical Evaluation}
\label{sec:analyticalEval}
In Section \ref{sec:waitingTimeDistrs}, we derived the waiting time distributions of computational jobs in each of the architectures, and in Section \ref{sec:fidelityDerivs} we determined how these distributions translate into the average gate fidelity. Using these results, it is possible to compare the average gate fidelity of the SD to that of the DD architecture. Specifically, we will show that when computational job processing times are negligible (\ie, when the queueing systems are both represented by the CTMC in Figure \ref{fig:ctmcHypoK1}) and the memories in both architectures are of identical manufacturing quality, then the DD architecture always outperforms the SD architecture in terms of the average gate fidelity. Our standard assumption that $\mu_e\leq\mu_m^{(2)}$ holds for the following discussion.

\begin{proposition}
\label{prop:avgGFcomp}
When $\mu_c=\infty$, the mean drift conditions (\ref{eq:meandriftcondHypo}) for the SD and DD architectures are satisfied, and $F(\mathcal{N}_t,G)$ is identical for both architectures, the DD architecture yields a higher average gate fidelity than the SD architecture.
\end{proposition}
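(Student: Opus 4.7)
The plan is to rewrite both average gate fidelities in a common form that isolates the role of each CTMC phase, and then compare them by pointwise monotonicity of the exponential. The driving intuition is that $F^{(2)}_{avg}$ and $F^{(1)}_{avg}$ differ precisely in whether the computational job must wait for an entanglement-generation phase: DD avoids this wait entirely, so once both formulas are put in the same canonical form, SD should inherit a strictly additional non-negative penalty.

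First, I would substitute $f_{W_1}$ and $f_{W_2}$ from \eqref{eq:wdistr_arch1} and \eqref{eq:wdistr_arch2} into \eqref{eq:Favg}. The key manipulation is to use $1/\mu = \int_0^\infty e^{-\mu t}\,dt$ to absorb the Dirac-delta coefficients into the exponential terms, via the identity $F(\mathcal{N}_0, G) = 1$ (which holds because every noise channel in Section~\ref{sec:noise} is trivial at $t=0$, as can be read off directly from the closed-form expressions in Lemma~1). This would convert both fidelities into
\begin{align*}
F^{(1)}_{avg} &= 1 - \lambda_e\int_0^\infty \bigl(e^{-\mu_e t} + e^{-\mu_m^{(1)} t}\bigr)\bigl[1 - F(\mathcal{N}_t, G)\bigr]\,dt,\\
F^{(2)}_{avg} &= 1 - \lambda_e\int_0^\infty e^{-\mu_m^{(2)} t}\bigl[1 - F(\mathcal{N}_t, G)\bigr]\,dt.
\end{align*}
The difference then splits as
\begin{align*}
F^{(2)}_{avg} - F^{(1)}_{avg} = \lambda_e\int_0^\infty \bigl(e^{-\mu_e t} - e^{-\mu_m^{(2)} t}\bigr)\bigl[1 - F(\mathcal{N}_t, G)\bigr]\,dt + \lambda_e\int_0^\infty e^{-\mu_m^{(1)} t}\bigl[1 - F(\mathcal{N}_t, G)\bigr]\,dt,
\end{align*}
and I would argue that both integrands are non-negative pointwise: the first because the standing assumption $\mu_e \leq \mu_m^{(2)}$ yields $e^{-\mu_e t} \geq e^{-\mu_m^{(2)} t}$ for all $t \geq 0$, and the second because $F(\mathcal{N}_t, G) \leq 1$ for any valid noise channel and unitary target. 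The inequality is strict whenever $\lambda_e > 0$ and the noise is non-trivial, \ie, $F(\mathcal{N}_t, G) < 1$ on a set of positive measure.

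The main obstacle is the bookkeeping in the rewriting step. One has to verify that the Dirac-delta coefficients $1 - \lambda_e/\mu_e - \lambda_e/\mu_m^{(1)}$ and $1 - \lambda_e/\mu_m^{(2)}$ are non-negative (so that $f_{W_1}, f_{W_2}$ are valid densities) and combine cleanly with the exponential parts to produce the $[1 - F(\mathcal{N}_t, G)]$ factor; the former is guaranteed by the mean-drift conditions \eqref{eq:meandriftcondHypo} and the latter uses \eqref{eq:diracDeltaIntegral} together with $F(\mathcal{N}_0, G) = 1$. Once this step is in place, the remaining argument is elementary monotonicity of $\mu \mapsto e^{-\mu t}$, plus the structural observation that the SD architecture pays an additional non-negative fidelity cost from the entanglement-generation waiting phase that the DD architecture never incurs.
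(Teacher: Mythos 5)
Your proposal is correct and follows essentially the same route as the paper's proof: both rest on the explicit waiting-time densities, the identity $1/\mu=\int_0^\infty e^{-\mu t}\,dt$, the standing assumption $\mu_e\le\mu_m^{(2)}$, and the bound $F(\mathcal{N}_t,G)\le 1$ (the paper phrases this via an intermediate quantity $\hat{F}$ and the inequality in Eq.~(\ref{eq:missingbit}), which is the same pointwise comparison you perform after rewriting both fidelities in the $1-\lambda_e\int(\cdot)\left[1-F\right]dt$ form). Your single decomposition of $F^{(2)}_{avg}-F^{(1)}_{avg}$ into two manifestly non-negative integrals is a slightly cleaner packaging of the identical argument, and your bookkeeping remarks (validity of the densities under the mean-drift condition and $F(\mathcal{N}_0,G)=1$) are consistent with what the paper uses implicitly.
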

See Appendix \ref{app:prop2Proof} for a proof of Proposition \ref{prop:avgGFcomp}.

It is worth emphasizing that the result in Prop. \ref{prop:avgGFcomp} holds when the two Markov chains modeling the architectures are stable; \ie, while the DD architecture yields better performance for average gate fidelity, \cf (\ref{prop:avgGFcomp}) it is also more difficult to ensure its stability, since in general, $\mu_m^{(1)} > \mu_m^{(2)}$.

For the remainder of this section, we focus on the composite noise model for storage introduced in Section \ref{sec:fidelityDerivsAppToOurProb}.
Now, suppose that the characteristic memory times of the two architectures are not identical. As discussed previously, in such cases it is possible that an SD architecture with higher-quality memories is the more cost-effective option while also yielding a higher average gate fidelity than a DD architecture with memories of poorer quality. This brings up a natural question: given a DD architecture with fixed characteristic memory times $T_1^{(2)}$ and $T_2^{(2)}$, what conditions must the memory lifetimes of an SD architecture satisfy in order to outperform the former in terms of average gate fidelity? From Eqs. (\ref{eq:Favg1_composite}) and (\ref{eq:Favg2_composite}), we see that $F_{avg}^{(1)} > F_{avg}^{(2)}$ when
{\footnotesize
\begin{align}
\frac{1}{\mu_e} + \frac{1}{\mu_m^{(1)}}
- \frac{1}{3}\left(\frac{2T_2^{(1)}}{\mu_e T_2^{(1)} + 1} + \frac{T_1^{(1)}}{\mu_e T_1^{(1)} + 1} + \frac{2T_2^{(1)}}{\mu^{(1)}_m T_2^{(1)} + 1} + \frac{T_1^{(1)}}{\mu^{(1)}_m T_1^{(1)} + 1}\right) 
<
\frac{1}{\mu_m^{(2)}}- \frac{1}{3}\left(\frac{2T_2^{(2)}}{\mu^{(2)}_m T_2^{(2)} + 1} + \frac{T_1^{(2)}}{\mu^{(2)}_m T_1^{(2)} + 1}\right).
\label{eq:compositeCond}
\end{align}}
For the following discussion, it is useful to keep in mind that for a constant $c>0$, $\lim\limits_{x\to\infty} x/(cx+1) = 1/c$. This implies that both sides of the inequality above are positive.
Recall from previous discussion that the remote entanglement generation rate $\mu_e$ is a smaller value than the moving rates $\mu_m^{(1)}$ and $\mu_m^{(2)}$. From (\ref{eq:compositeCond}), it is easy to see that the value of $\mu_e$ plays a significant role in determining how much the memory lifetimes must compensate to improve the performance of the SD architecture. 

To obtain a more interpretable and intuitive understanding of how high the SD architecture memory times must be, we derive a sufficient condition in terms of solely $T_2^{(1)}$, $T_1^{(2)}$, $\mu_e$, $\mu_m^{(1)}$, and $\mu_m^{(2)}$. This condition serves as a good bound to (\ref{eq:compositeCond}) when the memory lifetimes $T_1^{(2)}$ and $T_2^{(2)}$ are not too far apart from each other, and becomes tighter as $\mu_e$ increases.
\begin{proposition}
\label{prop:compositeBound}
Assume $\mu_e$ is greater than one and $\mu_e < \mu_m^{(2)}<\mu_m^{(1)}$.
Then the SD architecture on average achieves a higher gate fidelity than the DD architecture when
\begin{align}
T_2^{(1)} >\frac{\mu_m^{(2)}(\mu_m^{(2)}T_1^{(2)}+1)(\mu_e+\mu_m^{(1)})}{\left(\mu_e\mu_m^{(1)}\right)^2} - \frac{\mu_e+\mu_m^{(1)}}{\sqrt{2}\mu_e\mu_m^{(1)}}. 
\label{eq:compositeBound}
\end{align}
\end{proposition}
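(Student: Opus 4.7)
My plan is to obtain \eqref{eq:compositeBound} by starting from the exact condition \eqref{eq:compositeCond}, upper bounding its LHS and lower bounding its RHS so that the inequality depends only on $T_2^{(1)}$, $T_1^{(2)}$, and the service rates, and then solving the resulting single-variable inequality for $T_2^{(1)}$. The first move is an algebraic cleanup using the identity $T/(\mu T + 1) = 1/\mu - 1/(\mu(\mu T + 1))$: applied term by term, this cancels the bare $1/\mu_e + 1/\mu_m^{(1)}$ on the left and the bare $1/\mu_m^{(2)}$ on the right, putting \eqref{eq:compositeCond} in the equivalent form
\begin{equation*}
\tfrac{1}{3\mu_e}\!\left[\tfrac{2}{\mu_e T_2^{(1)}+1} + \tfrac{1}{\mu_e T_1^{(1)}+1}\right] + \tfrac{1}{3\mu_m^{(1)}}\!\left[\tfrac{2}{\mu_m^{(1)} T_2^{(1)}+1} + \tfrac{1}{\mu_m^{(1)} T_1^{(1)}+1}\right] < \tfrac{1}{3\mu_m^{(2)}}\!\left[\tfrac{2}{\mu_m^{(2)} T_2^{(2)}+1} + \tfrac{1}{\mu_m^{(2)} T_1^{(2)}+1}\right],
\end{equation*}
in which every summand is a clean $1/(\mu(\mu T + 1))$ kernel, which is a much more convenient starting point.

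The next step is to strip the $T_1^{(1)}$ dependence from the LHS and the $T_2^{(2)}$ dependence from the RHS. Because $T \mapsto 1/(\mu T+1)$ is monotonically decreasing, I will invoke the physical regime $T_2 \leq 2 T_1$ recalled in Section~\ref{sec:noise} on the SD side to upper bound $1/(\mu T_1^{(1)}+1)$ by $1/(\mu T_2^{(1)}+1)$, absorbing the $T_1^{(1)}$-terms into the $T_2^{(1)}$-terms. Symmetrically, I formalize the ``not too far apart'' hypothesis mentioned in the paragraph preceding the proposition by assuming $T_2^{(2)} \leq T_1^{(2)}$, which lower bounds the RHS by $1/(\mu_m^{(2)}(\mu_m^{(2)}T_1^{(2)}+1))$. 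The sufficient condition then reduces to the single-variable inequality
\begin{equation*}
\frac{1}{\mu_e(\mu_e T_2^{(1)} + 1)} + \frac{1}{\mu_m^{(1)}(\mu_m^{(1)} T_2^{(1)} + 1)} < \frac{1}{\mu_m^{(2)}(\mu_m^{(2)} T_1^{(2)} + 1)}.
\end{equation*}

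To isolate $T_2^{(1)}$, I plan to bound each term on the left by $1/(\mu^2 T_2^{(1)})$ using $\mu T_2^{(1)} + 1 \geq \mu T_2^{(1)}$, and then use the hypothesis $\mu_e > 1$ (hence $\mu_m^{(1)} > 1$ as well) to swap $1/\mu^2$ for something linear in $1/\mu$ so that the two summands can be placed over a common denominator $\mu_e \mu_m^{(1)}\,T_2^{(1)}$. This is exactly what produces the factor $(\mu_e + \mu_m^{(1)})$ in the numerator of the first term of \eqref{eq:compositeBound}. Solving the resulting inequality for $T_2^{(1)}$ yields a threshold of the form $\mu_m^{(2)}(\mu_m^{(2)}T_1^{(2)}+1)(\mu_e+\mu_m^{(1)})/(\mu_e\mu_m^{(1)})^2$; the corrective term $-(\mu_e+\mu_m^{(1)})/(\sqrt{2}\,\mu_e\mu_m^{(1)})$ I expect to trace back either to a Cauchy--Schwarz step of the form $a+b \leq \sqrt{2}\,\sqrt{a^2+b^2}$ (used to pass between $\mu_e+\mu_m^{(1)}$ and $\sqrt{\mu_e^2+(\mu_m^{(1)})^2}$) or, equivalently, to the adjustment that appears when cross-multiplying the previous display and solving the resulting quadratic in $T_2^{(1)}$ for its positive root.

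I expect the main obstacle to be calibrating these successive relaxations so that they are simultaneously loose enough to invert in closed form and tight enough that the final threshold on $T_2^{(1)}$ remains meaningful: slack introduced at each step accumulates, which is precisely why the text preceding the proposition flags that the bound is an approximation that ``becomes tighter as $\mu_e$ increases.'' A secondary subtlety is to invoke the hypothesis $\mu_e > 1$ only at the step where we swap $\mu^2$ for $\mu$ in a denominator, so that none of the preceding monotonicity arguments are needlessly restricted and the final expression in \eqref{eq:compositeBound} is not weakened by redundant assumptions.
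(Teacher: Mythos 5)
Your first two steps are sound and coincide with the paper's own reduction: rewriting \eqref{eq:compositeCond} via $T/(\mu T+1)=1/\mu-1/(\mu(\mu T+1))$ and then eliminating $T_1^{(1)}$ and $T_2^{(2)}$ lands you exactly on the paper's intermediate inequality \eqref{eq:prop2Ineq}, which the paper reaches by the monotonicity of $x\mapsto x/(ax+1)$ (Lemma~\ref{lemma:simpleFact}) after substituting $T_1^{(1)}\to T_2^{(1)}$ on the left and $T_2^{(2)}\to T_1^{(2)}$ on the right. One small repair there: the replacement of $1/(\mu T_1^{(1)}+1)$ by $1/(\mu T_2^{(1)}+1)$ needs $T_2^{(1)}\le T_1^{(1)}$ (the paper's standing assumption that $T_2<T_1$ for every device, used on both sides), not the weaker constraint $T_2\le 2T_1$ from Section~\ref{sec:noise}, which by itself does not give the required ordering.

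The genuine gap is your final step. Bounding each summand of \eqref{eq:prop2Ineq} by $1/(\mu^2 T_2^{(1)})$ throws away the ``$+1$'', which is precisely where the subtracted term $(\mu_e+\mu_m^{(1)})/(\sqrt{2}\mu_e\mu_m^{(1)})$ in \eqref{eq:compositeBound} originates, so this route can only produce a strictly larger threshold, i.e.\ a weaker statement than the proposition. Worse, the ``swap'' you propose to manufacture the numerator $\mu_e+\mu_m^{(1)}$ runs in the wrong logical direction: with $\mu_m^{(1)}>\mu_e>1$ one has $\frac{1}{\mu_e^2}+\frac{1}{(\mu_m^{(1)})^2}\ge\frac{\mu_e+\mu_m^{(1)}}{(\mu_e\mu_m^{(1)})^2}$, so replacing the former by the latter shrinks your upper bound on the left-hand side and therefore cannot yield a sufficient condition. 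The missing ingredient is the paper's Lemma~\ref{lemma:simpleFactTwo}: for all positive $a,b,x$, $\frac{1}{a(ax+1)}+\frac{1}{b(bx+1)}<\frac{2}{c(cx+1)}$ with $c=\sqrt{2}ab/(a+b)$, proved by cross-multiplying and comparing the coefficients of $x^2$, $x$ and the constant term. Applying it with $a=\mu_e$, $b=\mu_m^{(1)}$, $x=T_2^{(1)}$ keeps the $1/(c(cx+1))$ structure intact, and the resulting requirement $2/\bigl(c(cT_2^{(1)}+1)\bigr)<1/\bigl(\mu_m^{(2)}(\mu_m^{(2)}T_1^{(2)}+1)\bigr)$ is \emph{linear} in $T_2^{(1)}$, with solution $T_2^{(1)}>2\mu_m^{(2)}(\mu_m^{(2)}T_1^{(2)}+1)/c^2-1/c$ where $1/c=(\mu_e+\mu_m^{(1)})/(\sqrt{2}\mu_e\mu_m^{(1)})$; no Cauchy--Schwarz step and no quadratic in $T_2^{(1)}$ appear. (As a side remark, carrying out that last solve literally gives a first term proportional to $(\mu_e+\mu_m^{(1)})^2/(\mu_e\mu_m^{(1)})^2$, so the linear factor $\mu_e+\mu_m^{(1)}$ in the displayed bound is not produced by the hypothesis $\mu_e>1$ as you conjectured.) Without a two-kernels-to-one-kernel lemma of this harmonic-mean type, your plan cannot recover the stated threshold.
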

See Appendix \ref{app:boundProof} for a proof of this proposition.
\if{false}
\begin{lemma}
\label{lemma:SecondLemma}
For any positive reals $a$, $b$, $c$, and $x$, if $a>b>1$, then
\begin{align}
\frac{1}{a(ax+1)}+\frac{1}{b(bx+1)} < \frac{a+b}{\sqrt{ab}(\sqrt{ab}x+1)}.
\label{eq:2ndlemmaStatement}
\end{align}
\end{lemma}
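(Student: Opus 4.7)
\textbf{Proof plan for Lemma \ref{lemma:SecondLemma}.}
My plan is to clear denominators and reduce the claim to showing that a certain quadratic in $x$ has strictly positive coefficients. Since every denominator on both sides is positive for positive reals, I can cross-multiply freely. Writing $s = a+b$, $p = ab$, and $m = \sqrt{ab}$ (so $p = m^2$), the left-hand side consolidates to $\frac{(s^2 - 2p)x + s}{p(px^2 + sx + 1)}$, while the right-hand side is $\frac{s}{px + m}$. The inequality \eqref{eq:2ndlemmaStatement} is therefore equivalent to $\mathrm{RHS} - \mathrm{LHS}$ (after clearing denominators) being positive, which is a polynomial of degree two in $x$. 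The plan is to expand both cross-products and verify, coefficient by coefficient, that each of the three coefficients (of $x^0$, $x^1$, and $x^2$) in $\mathrm{RHS}-\mathrm{LHS}$ is strictly positive.

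The $x^0$ coefficient will reduce to $s(p-m) = sm(m-1)$, which is positive because $a,b > 1$ forces $m = \sqrt{ab} > 1$. The $x^2$ coefficient will reduce to $p\bigl[p(s+2) - s^2\bigr]$. Substituting back, $p(s+2) - s^2 = a^2(b-1) + b^2(a-1)$, which is again positive since $a,b > 1$. These two steps are short computations and will not be the obstacle.

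The hard part is the coefficient of $x$. It simplifies to $s^2(p-m) - p(s-2m)$, and after the substitution $u = \sqrt{a}$, $v = \sqrt{b}$ (so $s = u^2+v^2$, $m = uv$, $p-m = uv(uv-1)$, and $s - 2m = (u-v)^2$), it takes the form
\begin{equation*}
  uv\bigl[(u^2+v^2)^2(uv-1) - uv(u-v)^2\bigr].
\end{equation*}
So the crux is to prove $(u^2+v^2)^2(uv-1) > uv(u-v)^2$ for $u,v > 1$. I will bound $(u^2+v^2)^2 \geq (2uv)^2 = 4u^2v^2$ by AM--GM, reducing the task to $4uv(uv-1) > (u-v)^2$, i.e.\ $4u^2v^2 - (u+v)^2 > 0$ after expanding. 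This factors as $(2uv - (u+v))(2uv + (u+v))$, so the whole thing comes down to the elementary inequality $2uv > u+v$ for $u,v > 1$. A clean way to see the latter is $2uv - u - v = u(2v-1) - v \geq u - v + (u-1)(2v-2) > 0$ under $u,v > 1$, or equivalently $\tfrac{1}{2u} + \tfrac{1}{2v} < 1$.

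Once positivity of all three coefficients is established, $\mathrm{RHS} - \mathrm{LHS} > 0$ for every $x > 0$, which is exactly \eqref{eq:2ndlemmaStatement}. The only real obstacle is the middle coefficient; the substitution $u = \sqrt{a}$, $v = \sqrt{b}$ and the factorization $4u^2v^2 - (u+v)^2 = (2uv - u - v)(2uv + u + v)$ are what make it go through cleanly, and the hypothesis $a > b > 1$ is used only to guarantee $u,v > 1$ (the strict inequality $a > b$ is not needed, though it is consistent with the context of Proposition~\ref{prop:compositeBound}).
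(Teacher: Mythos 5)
Your proof is correct and follows essentially the same route as the paper's: clear denominators and verify the resulting degree-two polynomial inequality in $x$ coefficient by coefficient (the paper phrases this as a contradiction, you argue directly that each coefficient of the difference is positive). The only real deviation is your handling of the $x$-coefficient via $u=\sqrt{a}$, $v=\sqrt{b}$, AM--GM, and the factorization $4u^2v^2-(u+v)^2=(2uv-u-v)(2uv+u+v)$, where the paper instead compares terms pairwise (e.g.\ $a^3b>a^2\sqrt{ab}$ and $2a^2b^2>a^2b+ab^2$); both arguments use only $a,b>1$, as you correctly observe.
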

\begin{proof}{(Lemma \ref{lemma:SecondLemma})}
Assume for a contradiction that (\ref{eq:2ndlemmaStatement}) is not true. Then it must be that
\begin{align}
\frac{1}{a(ax+1)}+\frac{1}{b(bx+1)} &\geq \frac{a+b}{\sqrt{ab}(\sqrt{ab}x+1)},\\
\frac{(a^2+b^2)x+a+b}{ab(ax+1)(bx+1)} &\geq  \frac{a+b}{abx+\sqrt{ab}},\\
((a^2+b^2)x+a+b)(abx+\sqrt{ab}) &\geq ab(a+b)(ax+1)(bx+1),\\
ab(a^2+b^2)x^2 +(a^2b+ab^2+\sqrt{ab}(a^2+b^2))x + (a+b)\sqrt{ab} &\geq ab(a+b)(abx^2 + (a+b)x+1).
\label{eq:lemmaContr}
\end{align}
Next, consider the coefficients of the $x^2$ terms on each side of the inequality above.
On the left-hand side, we have $ab(a^2+b^2) = a^3b+ab^3$, which is less than $a^3b^2+a^2b^3$ on the right-hand side (recall that $a>b>1$). Next, note that the term $\sqrt{ab}(a+b)$ on the left-hand side is less than the term $ab(a+b)$ on the right side. Finally, consider the coefficients of the $x$ terms on each side of the inequality. On the left, we have
\begin{align*}
L = a^2b+ab^2+a^2\sqrt{ab}+b^2\sqrt{ab},
\end{align*}
while on the right side we have
\begin{align*}
R = ab(a+b)^2 = ab(a^2+2ab+b^2) = a^3b + 2a^2b^2 + ab^3.
\end{align*}
Note that $a^3 b > a^2\sqrt{ab}$ and $ab^3 > b^2\sqrt{ab}$. Finally, $2a^2b^2 > a^2b +a b^2$ since $a^2b^2 > a^2b$ and $a^2b^2 > ab^2$.
Together, these facts contradict (\ref{eq:lemmaContr}), proving the lemma.
\end{proof}
\fi

Condition (\ref{eq:compositeBound}) tells us that the faster the entanglement generation rate, the smaller the memory requirements are on the SD architecture to ensure that it outperforms the DD architecture. This is intuitive since in the SD architecture, entanglement generation is the most time-consuming and therefore the most detrimental operation to the gate fidelity. Condition (\ref{eq:compositeBound}) also tells us that the faster the state transfer requests are executed in the DD architecture, the better memories are required for the SD architecture -- an intuitive consequence of the fact that the most detrimental operations to gate fidelity in the DD architecture are moving requests.

\section{Simulation and Numerical Observations}
\label{sec:SimNumerObs}
\begin{figure}[t]
\centering
\subfloat[$\mu_e=10$, $\lambda_e=1$]{\includegraphics[width=0.4\textwidth]{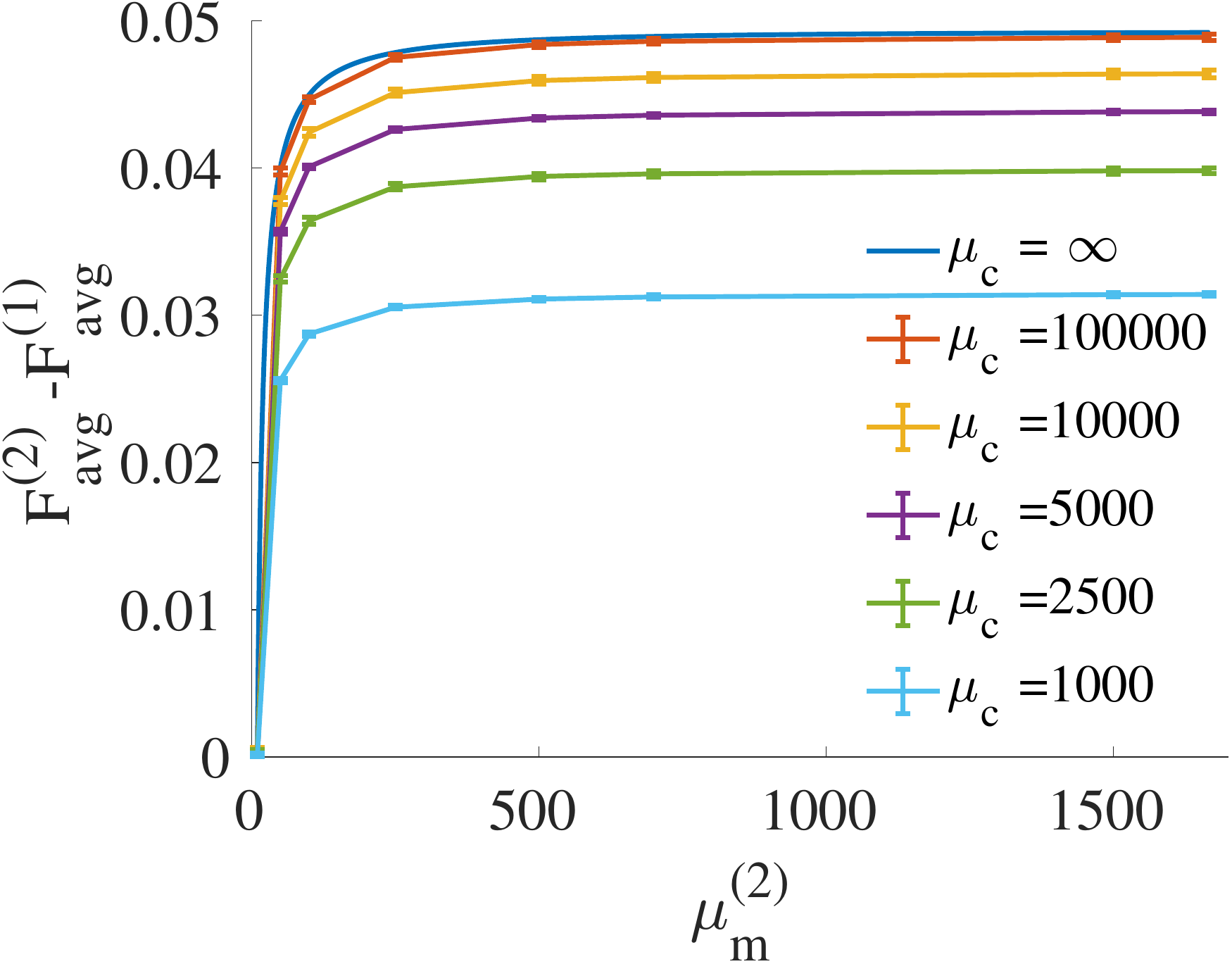}}\qquad\qquad
\subfloat[$\mu_e=500$, $\lambda_e=50$]{\includegraphics[width=0.4\textwidth]{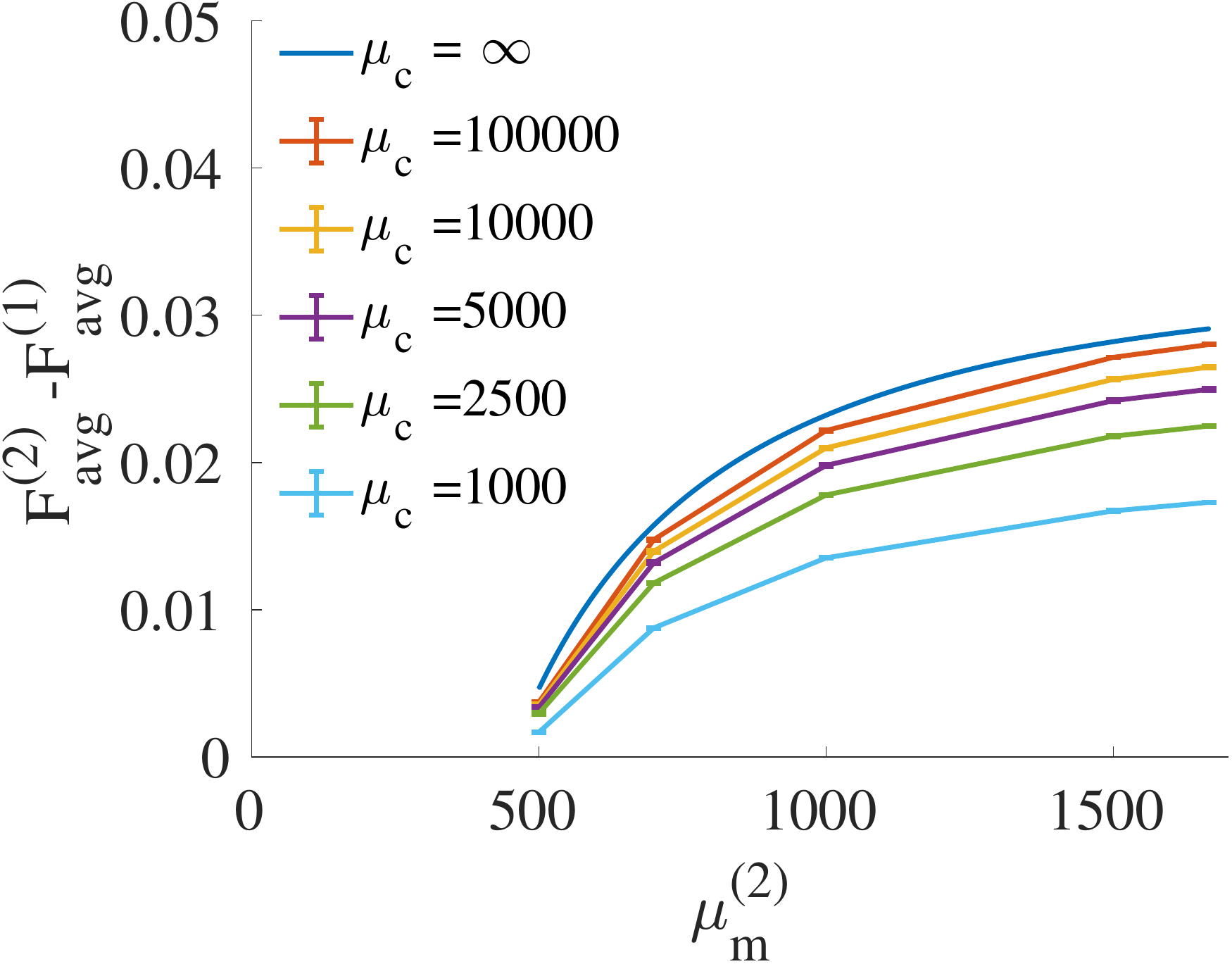}}
\vspace{-2mm}
\caption{Average gate fidelity differences in two entanglement request arrival and generation rate regimes. For all simulations, $T_1^{(1)}=T_1^{(2)}=0.00286$s and $T_2^{(1)}=T_2^{(2)}=0.001$s; $\mu_m^{(1)}=1667$Hz, $\lambda_c = 150$ and $\lambda_m = 1000$.}
\label{fig:sim_comp_fidelity_low_T1T2}
\end{figure}
Our goal in this section is to study the average gate and entanglement fidelities for the two architectures in a variety of settings in order to gain an understanding of regimes that are most suitable to each. We will also explore differences in manufacturing quality, and examine cases where it is preferable to use an SD design of better quality than a DD design with poorer quality. This question is especially relevant when cost-effectiveness is an important factor, as the DD design is expected to be the more expensive option (when comparing to an SD design of identical quality).

We use MATLAB to simulate the architectures and obtain the waiting time distributions for computational and entanglement jobs. We then use NetSquid \cite{coopmans2020netsquid} to simulate the storage of qubits according to the obtained waiting time distributions.
NetSquid is a discrete-event network simulator for quantum information; it provides a hardware-validated model of the NV center in diamond platform, and we use this model to evaluate the gate and entanglement fidelities.

Our analytical results apply to the case when computational jobs have negligible processing time ($\mu_c=\infty$); thus, simulating the case where $\mu_c < \infty$ provides additional insight. When $\mu_c<\infty$, computational jobs may no longer be processed instantaneously. Thus, we require an additional rule in handling multiple computational requests in the queue. Motivated by the fact that entanglement generation with remote nodes is the most time-consuming operation for our architectures, we endow state transfer (moving) jobs non-preemptive priority over computational jobs, \ie, when a moving job arrives while a computation is in progress, the former begins processing immediately after the completion of the computational request, even if other computational jobs were already in the queue prior to its arrival. In all other cases, jobs processed according to a FIFO policy.

For the following discussion, all rates are in terms of ($\#$ arrivals)/sec and ($\#$ jobs processed)/sec, unless otherwise specified. In all simulations, each run lasts for $10^5$ s, and each data point is an average of five runs -- a number that ensures sufficiently small error bars. We next motivate some of the parameter values used in the remainder of this section, many of which are inspired by the NV center in diamond platform. For state transfers within the SD architecture, we fix $\mu_m^{(1)}$ at $1667$Hz, since \cf Figure 4 in \cite{pompili2021realization}, a local swap to memory consumes $600\mu s$. State transfers within the DD architecture depend on the exact implementation of the transfer procedure across the inter-device interface. For the DD architecture, we often fix $\mu_m^{(2)}=700$Hz, as we expect the state transfer rate in this architecture to be between a third and a half of that of the SD architecture. Note that, according to \cite{pompili2021realization} and \cite{ruf2021quantum}, for the NV, this value is rather optimistic, since \eg, a Bell-state measurement -- an operation that is part of the state transfer gate sequence for the DD architecture --  alone consumes 1ms. Next, when exploring different values for the computation rate, we consider the variation in not only gate duration, but also the possibility of more time-consuming atomic gate sequences that must not be interrupted by any other operations; using \cite{linkLayerPaper} as a guide, we set $\mu_c$ to values in the range $[10^3,10^5]$. For $T_1$ and $T_2$ values we use \cite{linkLayerPaper} as a guide.
Finally, when choosing parameter values for the request arrival rates, we ensure that both the SD and DD systems are stable in the number of outstanding entanglement requests. 
\subsubsection*{Effects of Device Memory Lifetimes on Gate Fidelity}
Recall from our analytical evaluation that the DD architecture achieves higher gate fidelity than the SD architecture when both devices have the same $T_1$ and $T_2$ parameters characterizing their memory lifetime. This phenomenon can be observed in Figure \ref{fig:sim_comp_fidelity_low_T1T2} for two different entanglement generation regimes: $\mu_e=10$ corresponds to a quantum network setting, where the remote node(s) is(are) distant, while $\mu_e=500$ represents closely-located quantum nodes, \eg, as may be found in a quantum computing cluster. Observe that for the latter, the fidelity differences are less pronounced. This can be explained by the fact that faster entanglement generation rates are less detrimental to the SD architecture's gate fidelity than slower ones, as computational jobs wait less before being serviced.
\begin{figure}[t]
\centering
\subfloat[$\mu_e=10$, $\lambda_e=1$]{\includegraphics[width=0.4\textwidth]{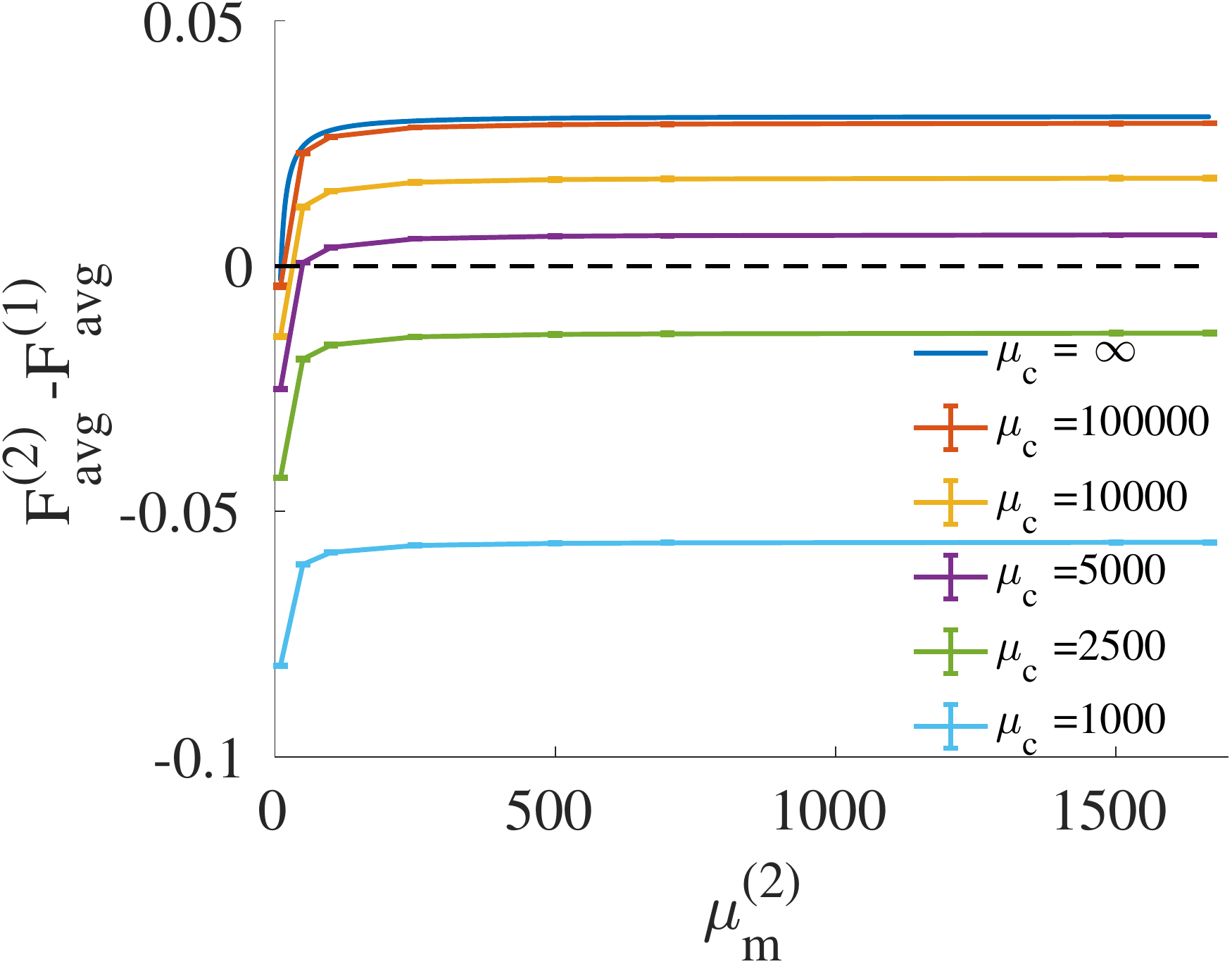}}\qquad\qquad
\subfloat[$\mu_e=500$, $\lambda_e=50$]{\includegraphics[width=0.4\textwidth]{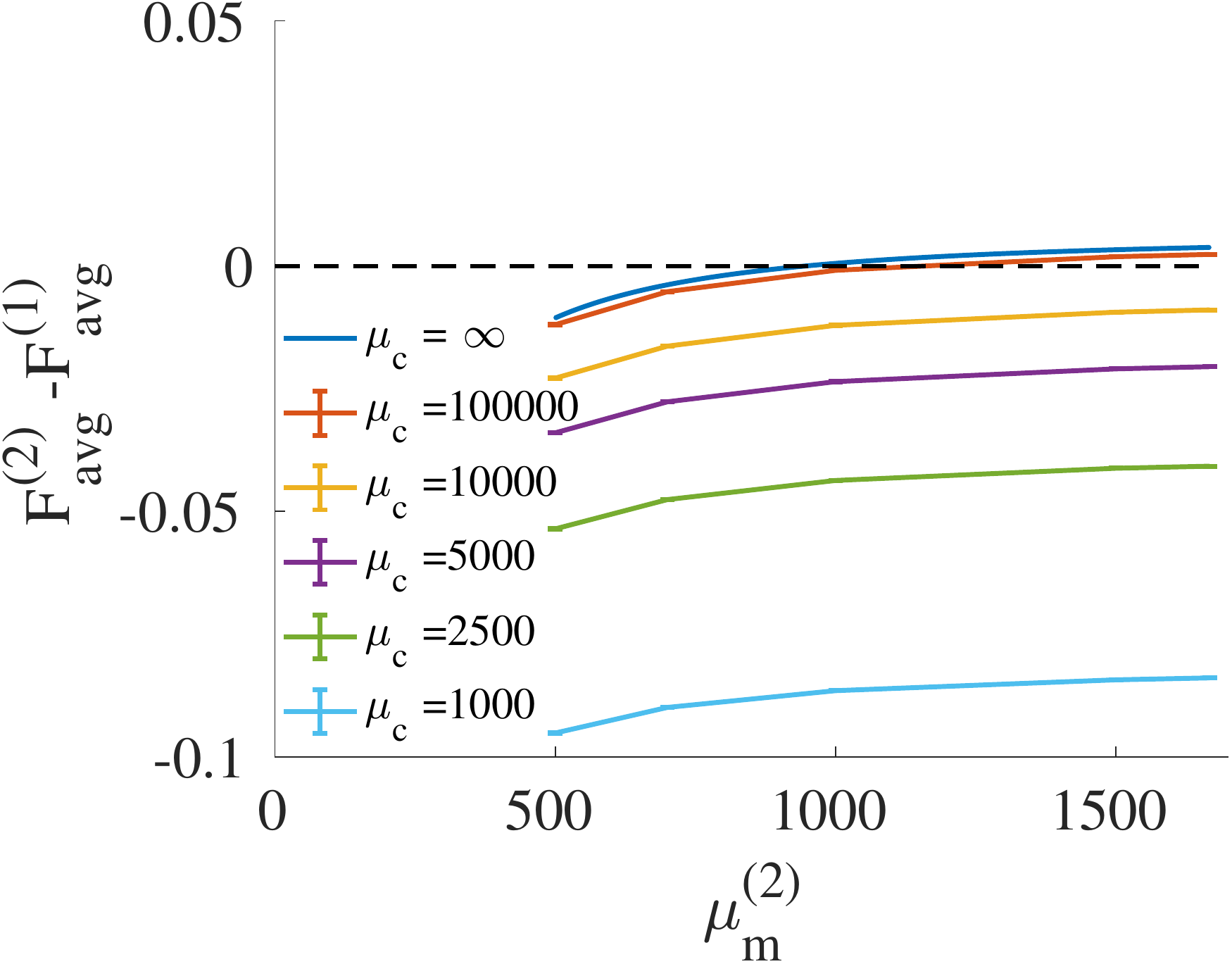}}
\caption{Average gate fidelity differences in two entanglement request arrival and generation rate regimes. For all simulations, $T_1^{(1)}=10$s, $T_2^{(1)}=0.01$s, $T_1^{(2)}=2$s, and $T_2^{(2)}=0.002$s; $\mu_m^{(1)}=1667$Hz, $\lambda_c = 150$ and $\lambda_m = 1000$.}
\label{fig:sim_comp_fidelity_high_T1T2}
\end{figure}
\begin{figure}[t]
\centering
\begin{minipage}[t]{0.47\textwidth}
\centering
\includegraphics[width=1.05\textwidth]{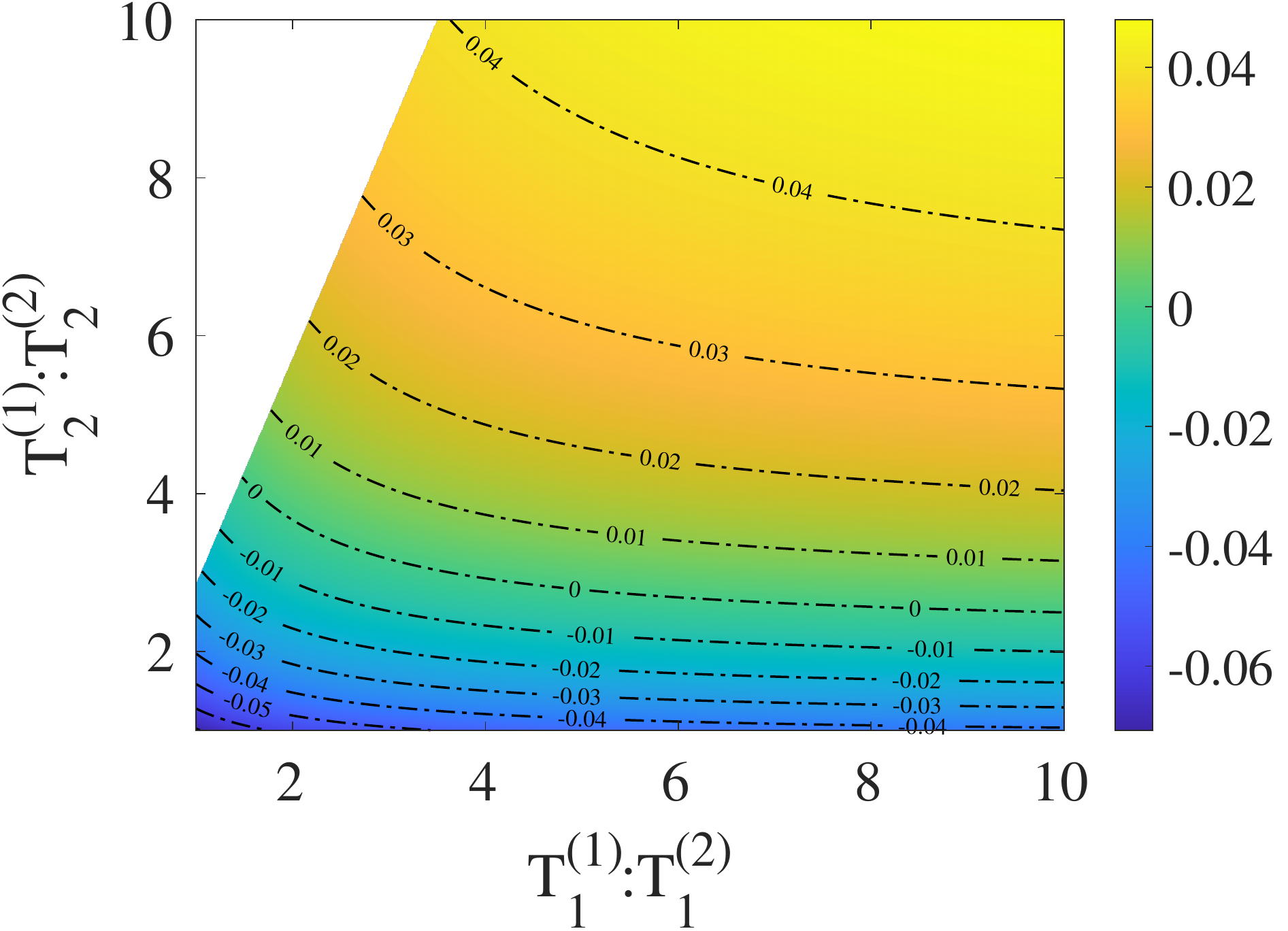}
\captionof{figure}{Differences in average gate fidelity for the SD and DD architectures. For the latter, $T_1^{(2)}$ and $T_2^{(2)}$ are fixed at 0.00286s and 0.001s, respectively, while the ratios of the $T_1$ and $T_2$ times are varied. Here, $\mu_c=\infty$, $\lambda_e=225$, $\mu_e=500$, $\mu_m^{(1)}=1667$Hz, $\mu_m^{(2)}=700$Hz, and $\lambda_m=1000$ to ensure stability.}
\label{fig:comp_fid_v_T1T2}
\end{minipage}\hfill
\begin{minipage}[t]{0.45\textwidth}
\centering
\includegraphics[width=\textwidth]{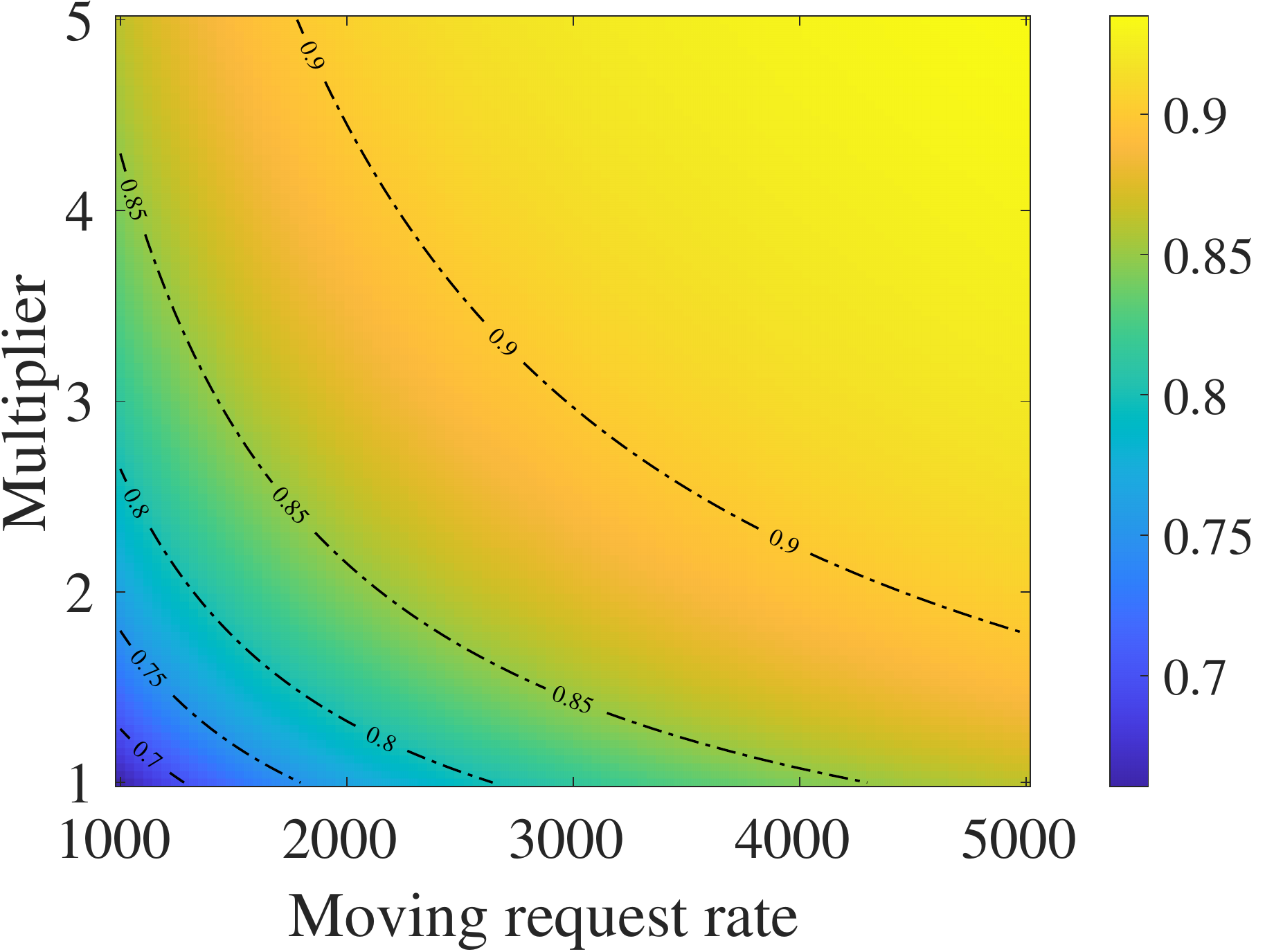}
\captionof{figure}{Post-move fidelity for the SD architecture, as a function of the moving request rate $\lambda_m$. The characteristic memory times are varied at a constant ratio $T_1^{(1)}:T_2^{(1)}$, with initial values (\ie, when the multiplier is set to one) of $0.00286$s and $0.001$s for $T_1^{(1)}$ and $T_2^{(1)}$, respectively.}
\label{fig:arch1_pm_fidelity_v_T1T2}
\end{minipage}
\end{figure}
Fabrication of networked quantum hardware is a complex task, and achieving comparable memory lifetimes between the SD and DD architectures may prove to be difficult.  Interfacing the two processors that make up the DD architecture may introduce additional sources of noise or complicate the process of properly shielding qubits in order to maintain adequate memory lifetimes. As a result, it is possible that a  high-quality SD design is both the more economical and functional choice, compared to a poorer-quality DD design. A potential instance of this is presented in Figure \ref{fig:sim_comp_fidelity_high_T1T2}, where the memory lifetimes of SD are five times those of the DD design. The advantages of the SD design are especially notable in the higher entanglement generation rate regime.

In Figure \ref{fig:comp_fid_v_T1T2}, we explore the effect of manufacturing differences on the average gate fidelity further, focusing only on the analytical case ($\mu_c=\infty$). Here, we assume that $T_1^{(1)} > T_1^{(2)}$ and $T_2^{(1)} > T_2^{(2)}$. Each data point in the figure represents the difference between the SD and DD average gate fidelities for given $T_{1/2}^{(1/2)}$ values, with positive values representing cases where the SD performs better than the DD architecture. Note that not all combinations of $T_1$ and $T_2$ values are possible, as per the usual requirement that $T_1> T_2$ for each architecture. From the figure, we observe that the $T_2$ time plays a significant role in improving gate fidelity: note in particular that for lower values of $T_2^{(1)}$, the SD architecture does not outperform the DD architecture, even for very high values of $T_1^{(1)}$.
\subsubsection*{Effects of Processing Rates on Gate Fidelity}
Processing rates impact the amount of time needed to complete requests and consequently result in longer waiting times. 
Our analytical evaluation assumed that $\mu_c=\infty$ so that computational requests do not interfere with entanglement and moving requests.
We use the computational job processing rate, $\mu_c$, to capture the behavior of different applications, where smaller processing rates may correspond to computation requests performing a sequence of atomic gates or 
computation requiring ``implicit'' state transfers (see Appendix \ref{app:NVoverview}).
In both Figures \ref{fig:sim_comp_fidelity_low_T1T2} and \ref{fig:sim_comp_fidelity_high_T1T2}, we observe that more time-consuming computations are more hospitable to the SD architecture's gate fidelity. This phenomenon arises from the non-preemptive priority of moving jobs, which interrupt computation for longer periods in the DD architecture.
To see how the individual average gate fidelities evolve as a function of $\mu_c$, see Appendix \ref{app:indivFidelityPlots}. 
\subsubsection*{Effects of Arrival Rates on Gate Fidelity}
The computation, entanglement, and moving request arrival rates impact the number of jobs that are issued to the system and consequently the probability that jobs block one another when non-zero processing time is assumed.  Since no jobs can be processed in parallel on the SD architecture, this leads to an overall increase in waiting time for all requests.  In contrast, computation jobs in the DD architecture are only blocked by moving jobs, while entanglement jobs may be processed in parallel to computation jobs.
When considering the results in Figures \ref{fig:sim_comp_fidelity_low_T1T2} and \ref{fig:sim_comp_fidelity_high_T1T2}, one key observation is that both the entanglement request and moving request arrival rates impact the the gate fidelity of computation jobs on the DD architecture.  By increasing the arrival rate of entanglement requests, additional moving requests are submitted which leads to a decrease in the gate fidelity of computation requests (as further evidenced by Figures \ref{fig:sep_sim_comp_fidelity_low_T1T2} and \ref{fig:sep_sim_comp_fidelity_high_T1T2} in Appendix \ref{app:indivFidelityPlots}).  In contrast, we observe an increase in the gate fidelity of the SD architecture here as the entanglement generation rate also increases, leading to a decrease in the amount of time that queued computation requests are blocked from processing.
\begin{figure}[t]
\centering
\subfloat[DD architecture pre- and post-move fidelities.]{\includegraphics[width=0.45\textwidth]{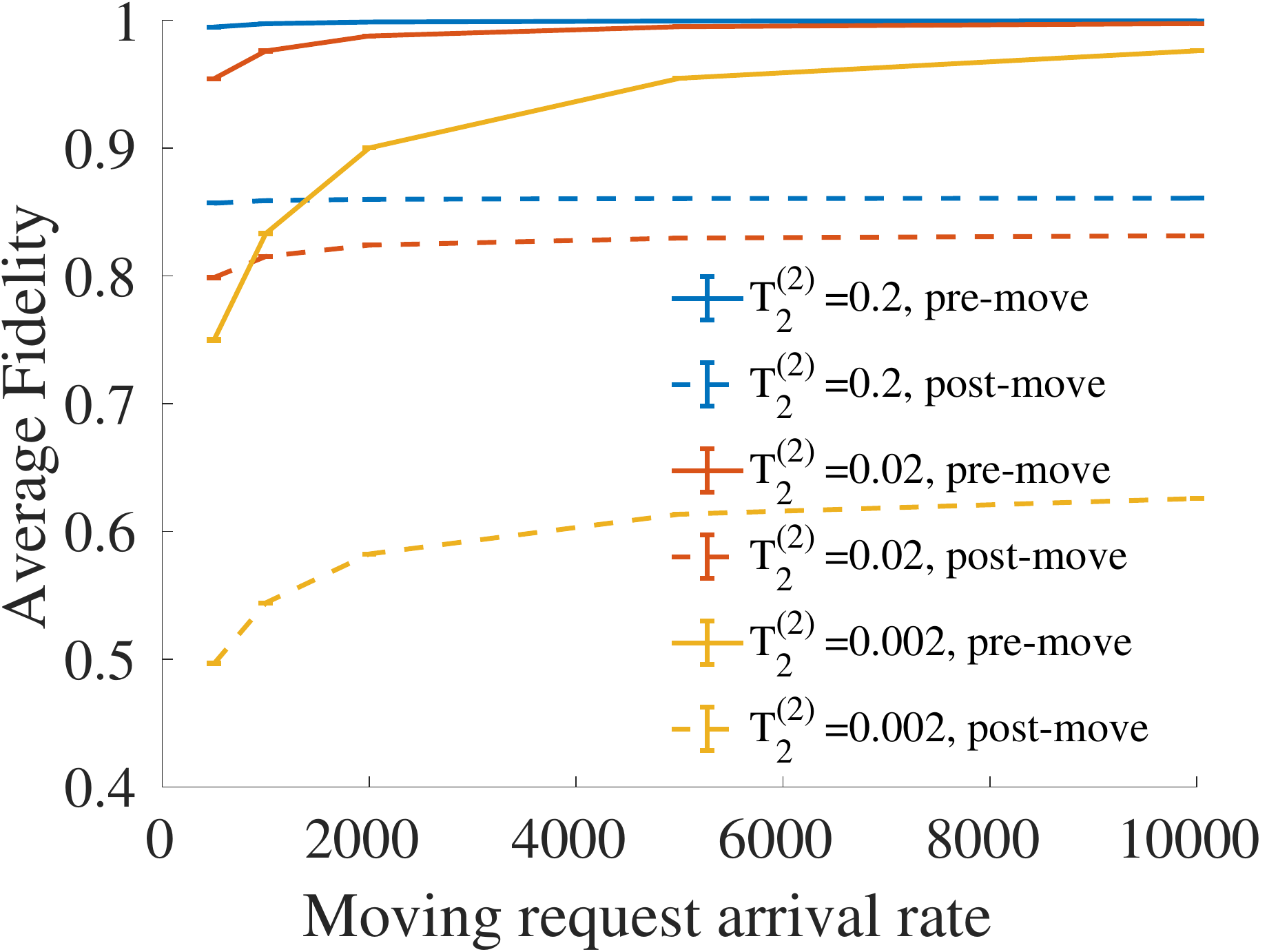}\label{fig:arch12_sim_pm_ent_fidelity_DD}}
\qquad\quad
\subfloat[SD and DD architecture post-move fidelities. ]{\includegraphics[width=0.45\textwidth]{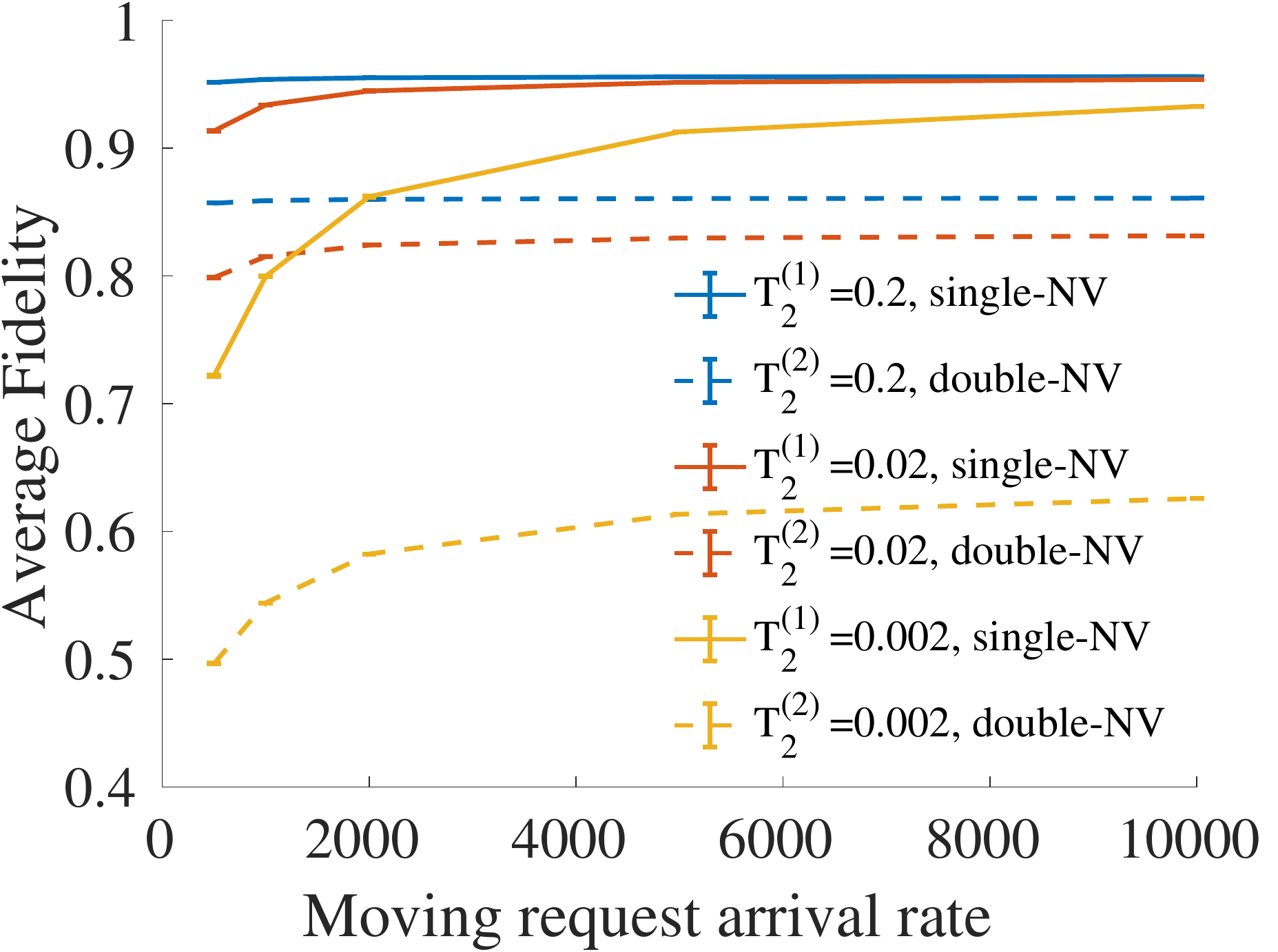}\label{fig:arch12_sim_pm_ent_fidelity_comp}}
\caption{Pre- and post-move fidelities for the DD architecture, $(a)$, and post-move fidelities for the two architectures, $(b)$. Here, $T_1^{(1/2)}=2$s, $\lambda_e=1$, $\mu_e=10$, $\lambda_c = 150$, $\mu_c=10^5$,  $\mu_m^{(1)}=1667$Hz, and $\mu_m^{(2)}=700$Hz.}
\label{fig:arch12_sim_pm_ent_fidelity}
\end{figure}
\subsubsection*{Effects of Moving Entanglement}
We now examine the effects of waiting time on the pre- and post-move entanglement fidelity in both architectures when they are realized with the NV center in diamond platform. Recall from Section \ref{sec:quantum_bg} that the SD and DD architectures process moving requests in different ways. The SD architecture performs a sequence of local gates whereas the DD architecture must execute a network operation to transfer entanglement from the networking to the computing device (see Appendix \ref{sec:nv_state_transfer} for further details).  Since quantum gates in both architectures are imperfect, it is important to highlight the differences in entanglement fidelity once a moving request has been processed. The \emph{pre-move} entanglement fidelity is computed immediately before the transfer request is processed. \emph{I.e.}, this measure incorporates the effects of waiting time both from needing to wait for the arrival of the transfer request, as well as the possible extra waiting time due an in-progress computation (recall that transfer requests have non-preemptive priority over computational jobs). The \emph{post-move} entanglement fidelity is computed immediately after the entanglement has been moved from the networking to the computing component.

For the SD design, we are able to obtain the post-move fidelity in closed form (see Appendix \ref{app:postMoveEntFidSingleNV}). We present it in Figure \ref{fig:arch1_pm_fidelity_v_T1T2}, as a function of the moving request rate $\lambda_m$ and the memory lifetimes of the architecture. As a reference, basic QKD demands entanglement fidelity of at least 0.81 \cite{Gottesman_2003}. 
Figure \ref{fig:arch1_pm_fidelity_v_T1T2} shows that, as expected, the moving request arrival rate (\ie, the application's responsiveness to processing newly-generated entanglement) may be reduced if higher memory lifetimes are available. For the DD architecture, we obtain the entanglement fidelity via NetSquid. Figure \ref{fig:arch12_sim_pm_ent_fidelity_DD} presents its pre- and post-move fidelities for varying $T_2^{(2)}$ times, a parameter to which the post-move fidelity is highly sensitive. Indeed, we observe that the moving request arrival rate is not nearly as impactful to the fidelity as $T_2^{(2)}$. Figure \ref{fig:arch12_sim_pm_ent_fidelity_comp} presents a comparison of the two architectures' post-move fidelities; the SD design clearly outperforms the DD design for each of the $T_2$ values.

\begin{figure}[t]
\centering
\subfloat[$\mu_e = 10,~\lambda_e=1$]{\includegraphics[width=0.45\textwidth]{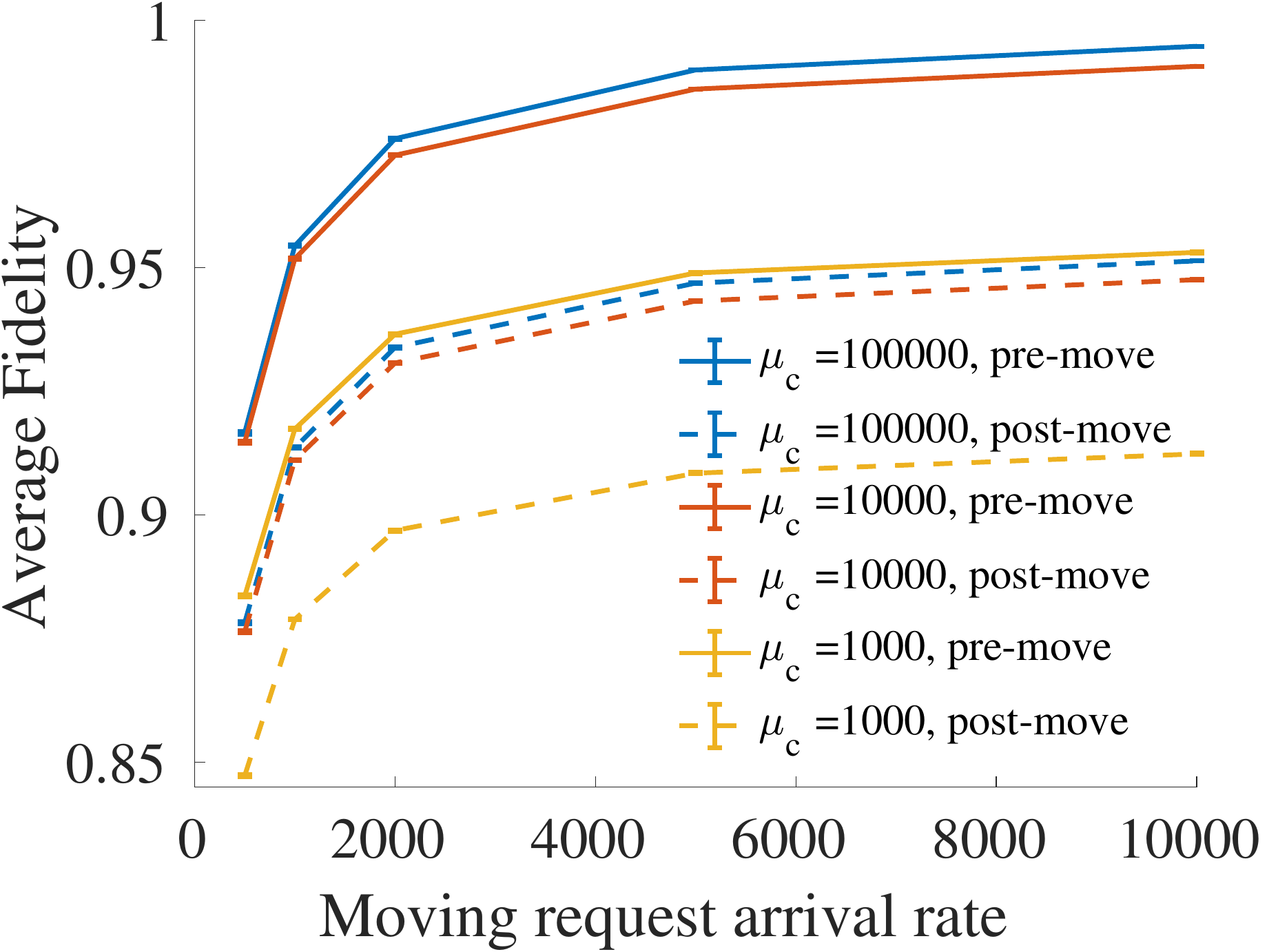}}\qquad\quad
\subfloat[$\mu_e = 1000,~\lambda_e=100$]{\includegraphics[width=0.45\textwidth]{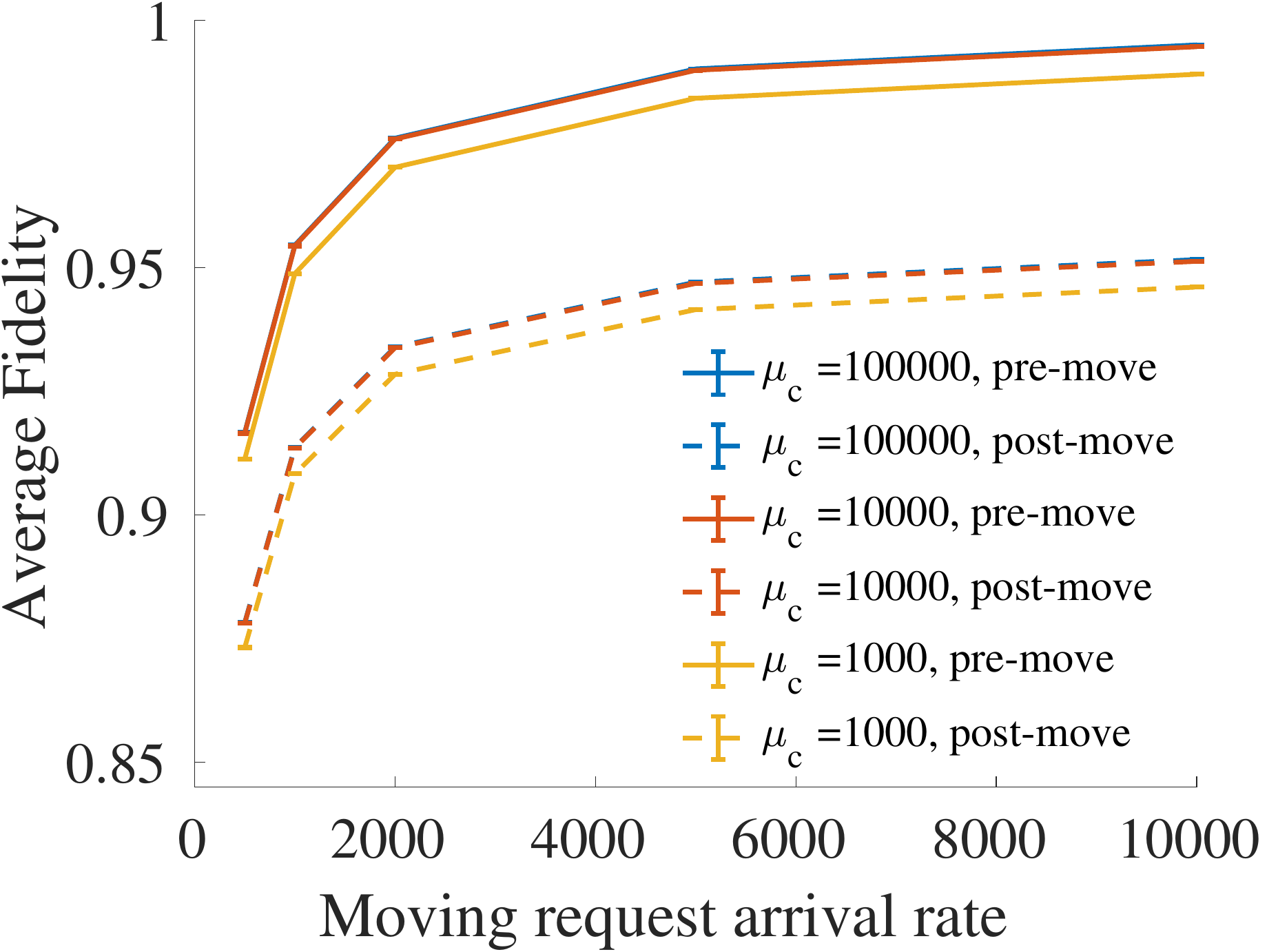}}
\caption{Pre-move and post-move fidelities for varying computational job processing rates in the SD architecture, as functions of the moving request arrival rate $\lambda_m$. Here, $T_1^{(1)}=10$s and $T_2^{(1)}=0.01$s; $\lambda_c=150$ and $\mu_m^{(1)}=1667$Hz for all SD simulations and we assume the qubits are initially in a perfect entangled state.}
\label{fig:sim_ent_fidelity_high_T1T2}
\end{figure}
Figure \ref{fig:sim_ent_fidelity_high_T1T2} shows the reduction between the pre-move and post-move entanglement fidelities for the SD architecture, in two entanglement generation regimes. It is immediately evident that in the high entanglement generation rate regime (\eg, one that may be representative of a distributed quantum cluster setting), the pre- and post-move fidelities fair far better than in the low entanglement generation rate regime (\eg, one that is more representative of a quantum network with distant nodes).
A reason for this is that faster processing of entanglement requests in the SD design frees up processing time for computation. Consequently, there are  fewer computation requests left in the queue, so that new entanglement requests can be processed more quickly as well.

\subsubsection*{Summary of findings} From our analysis and numerical observations, we find stark contrasts between the two architectures. On the one hand, when implemented with memories of identical quality, the DD design dominates in terms of gate fidelity. However, in a more practical scenario, wherein the DD design's more complex manufacturing would impair its memory lifetimes, the SD design can yield higher gate fidelities, and is more robust to longer computation times. Further, for present-day parameters, the SD design is more hospitable to the entanglement fidelity. The advantages of the SD design are especially evident in the high entanglement generation rate regime. We thus conclude that 
 the DD design is more suitable for 
 settings such as long-distance quantum communication, with lower entanglement generation rates and lighter computational demands. In contrast, the SD design is better suited for settings such as a distributed quantum computing cluster where high entanglement generation rates can be achieved and longer computations must be performed.

\section{Conclusion}
\label{sec:conclusion}
Quantum distributed applications impose quality constraints on the quantum states that they consume. When such applications are executed on architectures with physical limitations, such as imperfect gates or a limited amount of parallelism, resource contention can significantly impact performance, as some quantum states may be forced to wait in storage while others are being processed.
In this work, we studied the effects of waiting times on the gate and entanglement fidelities for two distributed quantum architectures. We accomplished this by deriving formulas for average fidelity as a function of the waiting time distribution for a quantum state awaiting processing, as well as a noise model that governs quantum state evolution during storage. We obtained the waiting time distributions from the analysis of a Markov chain that models both of the quantum architectures in a regime where computation consumes a negligible amount of time; we later relaxed this assumption to study the effects of more time-consuming computation via simulation. We discovered that certain architecture implementations are more suitable for environments that are computation-heavy, while others are suitable for entanglement-heavy applications. Our average fidelity formulas are applicable in scenarios beyond those studied in this work, and may serve as a useful tool for performance evaluation experts.

Several extensions of our problem formulation are possible. First, we examined only two possible architecture implementations in this manuscript, but other realizations of distributed quantum architectures can be proposed. For instance, in the DD architecture, one could equip both of the devices with an interface to the outside world, so that both devices can perform local computation as well as remote entanglement generation. It is not entirely obvious what advantages one would gain in such a setup, and much like we have observed in the current work, the performance of such a system will depend on entanglement generation rates and the quality of the interface between the two devices (for transporting qubits from one to the other), as well as the application type (\eg, computation-heavy vs. entanglement-heavy).
Also, in the current work we assumed that each architecture has a single link to the outside world; an extension of the problem would be to consider multiple links, each associated with a different entanglement generation rate.
Second, even with the two architectures examined in the current work, we have not studied all possible use cases. For instance, in the SD design, one could take further advantage of processor idle time by allowing computation to occur during entanglement generation, while the device is awaiting a heralding signal from the link. This would require the state of the (not yet entangled) qubit to be moved to a storage qubit, and then back again to the communication qubit; thus, for a rigorous analysis, one would have to account for how these state transfer operations would reflect on the final entanglement fidelity. Finally, from a modeling perspective, one could relax several of our assumptions, \eg, that computational jobs consume zero time -- something that we have so far only explored via simulation.
\begin{acks}
This work was supported in part by the NWO ZK QSC Ada Lovelace Fellowship. The authors thank Filip Rozp{\k{e}}dek, Guus Avis, Francisco Ferreira da Silva, and David Maier for useful discussions and careful reading of an earlier version of the manuscript.
\end{acks}

\bibliographystyle{ACM-Reference-Format}
\bibliography{references}


\begin{thebibliography}{60}


\ifx \showCODEN    \undefined \def \showCODEN     #1{\unskip}     \fi
\ifx \showDOI      \undefined \def \showDOI       #1{#1}\fi
\ifx \showISBNx    \undefined \def \showISBNx     #1{\unskip}     \fi
\ifx \showISBNxiii \undefined \def \showISBNxiii  #1{\unskip}     \fi
\ifx \showISSN     \undefined \def \showISSN      #1{\unskip}     \fi
\ifx \showLCCN     \undefined \def \showLCCN      #1{\unskip}     \fi
\ifx \shownote     \undefined \def \shownote      #1{#1}          \fi
\ifx \showarticletitle \undefined \def \showarticletitle #1{#1}   \fi
\ifx \showURL      \undefined \def \showURL       {\relax}        \fi
\providecommand\bibfield[2]{#2}
\providecommand\bibinfo[2]{#2}
\providecommand\natexlab[1]{#1}
\providecommand\showeprint[2][]{arXiv:#2}

\bibitem[\protect\citeauthoryear{Abobeih, Cramer, Bakker, Kalb, Markham,
  Twitchen, and Taminiau}{Abobeih et~al\mbox{.}}{2018}]%
        {Abobeih_2018}
\bibfield{author}{\bibinfo{person}{Mohamed~H. Abobeih}, \bibinfo{person}{Julia
  Cramer}, \bibinfo{person}{Michiel~A. Bakker}, \bibinfo{person}{Norbert Kalb},
  \bibinfo{person}{Matthew Markham}, \bibinfo{person}{Daniel~J. Twitchen},
  {and} \bibinfo{person}{Tim~H. Taminiau}.} \bibinfo{year}{2018}\natexlab{}.
\newblock \showarticletitle{One-second coherence for a single electron spin
  coupled to a multi-qubit nuclear-spin environment}.
\newblock \bibinfo{journal}{\emph{Nature Communications}} \bibinfo{volume}{9},
  \bibinfo{number}{1} (\bibinfo{date}{jun} \bibinfo{year}{2018}).
\newblock
\urldef\tempurl%
\url{https://doi.org/10.1038/s41467-018-04916-z}
\showDOI{\tempurl}


\bibitem[\protect\citeauthoryear{Abruzzo, Bratzik, Bernardes, Kampermann,
  Van~Loock, and Bru{\ss}}{Abruzzo et~al\mbox{.}}{2013}]%
        {abruzzo2013quantum}
\bibfield{author}{\bibinfo{person}{Silvestre Abruzzo}, \bibinfo{person}{Sylvia
  Bratzik}, \bibinfo{person}{Nadja~K. Bernardes}, \bibinfo{person}{Hermann
  Kampermann}, \bibinfo{person}{Peter Van~Loock}, {and} \bibinfo{person}{Dagmar
  Bru{\ss}}.} \bibinfo{year}{2013}\natexlab{}.
\newblock \showarticletitle{Quantum repeaters and quantum key distribution:
  Analysis of secret-key rates}.
\newblock \bibinfo{journal}{\emph{Physical Review A}} \bibinfo{volume}{87},
  \bibinfo{number}{5} (\bibinfo{year}{2013}), \bibinfo{pages}{052315}.
\newblock


\bibitem[\protect\citeauthoryear{Bennett and Brassard}{Bennett and
  Brassard}{2014}]%
        {bb84}
\bibfield{author}{\bibinfo{person}{Charles~H. Bennett} {and}
  \bibinfo{person}{Gilles Brassard}.} \bibinfo{year}{2014}\natexlab{}.
\newblock \showarticletitle{{Quantum cryptography: Public key distribution and
  coin tossing}}.
\newblock \bibinfo{journal}{\emph{Theoretical Computer Science}}
  \bibinfo{volume}{560} (\bibinfo{date}{dec} \bibinfo{year}{2014}),
  \bibinfo{pages}{7--11}.
\newblock
\urldef\tempurl%
\url{https://doi.org/10.1016/j.tcs.2014.05.025}
\showDOI{\tempurl}


\bibitem[\protect\citeauthoryear{Bennett, Brassard, Popescu, Schumacher,
  Smolin, and Wootters}{Bennett et~al\mbox{.}}{1996}]%
        {Bennett_1996}
\bibfield{author}{\bibinfo{person}{Charles~H. Bennett}, \bibinfo{person}{Gilles
  Brassard}, \bibinfo{person}{Sandu Popescu}, \bibinfo{person}{Benjamin
  Schumacher}, \bibinfo{person}{John~A. Smolin}, {and}
  \bibinfo{person}{William~K. Wootters}.} \bibinfo{year}{1996}\natexlab{}.
\newblock \showarticletitle{Purification of Noisy Entanglement and Faithful
  Teleportation via Noisy Channels}.
\newblock \bibinfo{journal}{\emph{Physical Review Letters}}
  \bibinfo{volume}{76}, \bibinfo{number}{5} (\bibinfo{date}{jan}
  \bibinfo{year}{1996}), \bibinfo{pages}{722--725}.
\newblock
\urldef\tempurl%
\url{https://doi.org/10.1103/physrevlett.76.722}
\showDOI{\tempurl}


\bibitem[\protect\citeauthoryear{Bowdrey, Oi, Short, Banaszek, and
  Jones}{Bowdrey et~al\mbox{.}}{2002}]%
        {bowdrey2002fidelity}
\bibfield{author}{\bibinfo{person}{Mark~D. Bowdrey},
  \bibinfo{person}{Daniel~K.L. Oi}, \bibinfo{person}{Anthony~J. Short},
  \bibinfo{person}{Konrad Banaszek}, {and} \bibinfo{person}{Jonathan~A.
  Jones}.} \bibinfo{year}{2002}\natexlab{}.
\newblock \showarticletitle{Fidelity of single qubit maps}.
\newblock \bibinfo{journal}{\emph{Physics Letters A}} \bibinfo{volume}{294},
  \bibinfo{number}{5-6} (\bibinfo{year}{2002}), \bibinfo{pages}{258--260}.
\newblock


\bibitem[\protect\citeauthoryear{Bradley, Randall, Abobeih, Berrevoets, Degen,
  Bakker, Markham, Twitchen, and Taminiau}{Bradley et~al\mbox{.}}{2019}]%
        {bradley2019ten}
\bibfield{author}{\bibinfo{person}{Conor Bradley}, \bibinfo{person}{Joe
  Randall}, \bibinfo{person}{Mohamed~H. Abobeih}, \bibinfo{person}{Remon
  Berrevoets}, \bibinfo{person}{Maarten Degen}, \bibinfo{person}{Michiel
  Bakker}, \bibinfo{person}{Matthew Markham}, \bibinfo{person}{Daniel
  Twitchen}, {and} \bibinfo{person}{Tim~H. Taminiau}.}
  \bibinfo{year}{2019}\natexlab{}.
\newblock \showarticletitle{A ten-qubit solid-state spin register with quantum
  memory up to one minute}.
\newblock \bibinfo{journal}{\emph{Physical Review X}} \bibinfo{volume}{9},
  \bibinfo{number}{3} (\bibinfo{year}{2019}), \bibinfo{pages}{031045}.
\newblock


\bibitem[\protect\citeauthoryear{Brand, Coopmans, and Elkouss}{Brand
  et~al\mbox{.}}{2020}]%
        {Brand_2020}
\bibfield{author}{\bibinfo{person}{Sebastiaan Brand}, \bibinfo{person}{Tim
  Coopmans}, {and} \bibinfo{person}{David Elkouss}.}
  \bibinfo{year}{2020}\natexlab{}.
\newblock \showarticletitle{Efficient computation of the waiting time and
  fidelity in quantum repeater chains}. In \bibinfo{booktitle}{\emph{{OSA}
  Quantum 2.0 Conference}}. \bibinfo{publisher}{{OSA}}.
\newblock
\urldef\tempurl%
\url{https://doi.org/10.1364/quantum.2020.qth7a.11}
\showDOI{\tempurl}


\bibitem[\protect\citeauthoryear{Bratzik, Abruzzo, Kampermann, and
  Bru{\ss}}{Bratzik et~al\mbox{.}}{2013}]%
        {bratzik2013quantum}
\bibfield{author}{\bibinfo{person}{Sylvia Bratzik}, \bibinfo{person}{Silvestre
  Abruzzo}, \bibinfo{person}{Hermann Kampermann}, {and} \bibinfo{person}{Dagmar
  Bru{\ss}}.} \bibinfo{year}{2013}\natexlab{}.
\newblock \showarticletitle{Quantum repeaters and quantum key distribution: The
  impact of entanglement distillation on the secret key rate}.
\newblock \bibinfo{journal}{\emph{Physical Review A}} \bibinfo{volume}{87},
  \bibinfo{number}{6} (\bibinfo{year}{2013}), \bibinfo{pages}{062335}.
\newblock


\bibitem[\protect\citeauthoryear{Briegel, Dür, Cirac, and Zoller}{Briegel
  et~al\mbox{.}}{1998}]%
        {Briegel_1998}
\bibfield{author}{\bibinfo{person}{Hans~J. Briegel}, \bibinfo{person}{Wolfgang
  Dür}, \bibinfo{person}{Juan~I. Cirac}, {and} \bibinfo{person}{Peter
  Zoller}.} \bibinfo{year}{1998}\natexlab{}.
\newblock \showarticletitle{Quantum Repeaters: The Role of Imperfect Local
  Operations in Quantum Communication}.
\newblock \bibinfo{journal}{\emph{Physical Review Letters}}
  \bibinfo{volume}{81}, \bibinfo{number}{26} (\bibinfo{date}{dec}
  \bibinfo{year}{1998}), \bibinfo{pages}{5932--5935}.
\newblock
\urldef\tempurl%
\url{https://doi.org/10.1103/physrevlett.81.5932}
\showDOI{\tempurl}


\bibitem[\protect\citeauthoryear{Broadbent, Fitzsimons, and Kashefi}{Broadbent
  et~al\mbox{.}}{2009}]%
        {blindQC}
\bibfield{author}{\bibinfo{person}{Anne Broadbent}, \bibinfo{person}{Joseph
  Fitzsimons}, {and} \bibinfo{person}{Elham Kashefi}.}
  \bibinfo{year}{2009}\natexlab{}.
\newblock \showarticletitle{{Universal Blind Quantum Computation}}. In
  \bibinfo{booktitle}{\emph{2009 50th Annual {IEEE} Symposium on Foundations of
  Computer Science}}. \bibinfo{publisher}{{IEEE}}.
\newblock
\urldef\tempurl%
\url{https://doi.org/10.1109/focs.2009.36}
\showDOI{\tempurl}


\bibitem[\protect\citeauthoryear{Bruzewicz, Chiaverini, McConnell, and
  Sage}{Bruzewicz et~al\mbox{.}}{2019}]%
        {bruzewicz2019trapped}
\bibfield{author}{\bibinfo{person}{Colin~D. Bruzewicz}, \bibinfo{person}{John
  Chiaverini}, \bibinfo{person}{Robert McConnell}, {and}
  \bibinfo{person}{Jeremy~M. Sage}.} \bibinfo{year}{2019}\natexlab{}.
\newblock \showarticletitle{Trapped-ion quantum computing: Progress and
  challenges}.
\newblock \bibinfo{journal}{\emph{Applied Physics Reviews}}
  \bibinfo{volume}{6}, \bibinfo{number}{2} (\bibinfo{year}{2019}),
  \bibinfo{pages}{021314}.
\newblock


\bibitem[\protect\citeauthoryear{Coopmans, Knegjens, Dahlberg, Maier, Nijsten,
  Oliveira, Papendrecht, Rabbie, Rozp{\k{e}}dek, Skrzypczyk,
  et~al\mbox{.}}{Coopmans et~al\mbox{.}}{2020}]%
        {coopmans2020netsquid}
\bibfield{author}{\bibinfo{person}{Tim Coopmans}, \bibinfo{person}{Robert
  Knegjens}, \bibinfo{person}{Axel Dahlberg}, \bibinfo{person}{David Maier},
  \bibinfo{person}{Loek Nijsten}, \bibinfo{person}{Julio Oliveira},
  \bibinfo{person}{Martijn Papendrecht}, \bibinfo{person}{Julian Rabbie},
  \bibinfo{person}{Filip Rozp{\k{e}}dek}, \bibinfo{person}{Matthew Skrzypczyk},
  {et~al\mbox{.}}} \bibinfo{year}{2020}\natexlab{}.
\newblock \showarticletitle{NetSquid, a discrete-event simulation platform for
  quantum networks}.
\newblock \bibinfo{journal}{\emph{arXiv preprint arXiv:2010.12535}}
  (\bibinfo{year}{2020}).
\newblock


\bibitem[\protect\citeauthoryear{Dahlberg, Skrzypczyk, Coopmans, Wubben,
  Rozp{\k{e}}dek, Pompili, Stolk, Pawe{\l}czak, Knegjens, de~Oliveira~Filho,
  Hanson, and Wehner}{Dahlberg et~al\mbox{.}}{2019}]%
        {linkLayerPaper}
\bibfield{author}{\bibinfo{person}{Axel Dahlberg}, \bibinfo{person}{Matthew
  Skrzypczyk}, \bibinfo{person}{Tim Coopmans}, \bibinfo{person}{Leon Wubben},
  \bibinfo{person}{Filip Rozp{\k{e}}dek}, \bibinfo{person}{Matteo Pompili},
  \bibinfo{person}{Arian Stolk}, \bibinfo{person}{Przemys{\l}aw Pawe{\l}czak},
  \bibinfo{person}{Robert Knegjens}, \bibinfo{person}{Julio de Oliveira~Filho},
  \bibinfo{person}{Ronald Hanson}, {and} \bibinfo{person}{Stephanie Wehner}.}
  \bibinfo{year}{2019}\natexlab{}.
\newblock \showarticletitle{A link layer protocol for quantum networks}. In
  \bibinfo{booktitle}{\emph{Proceedings of the {ACM} Special Interest Group on
  Data Communication}}. \bibinfo{publisher}{{ACM}}.
\newblock
\urldef\tempurl%
\url{https://doi.org/10.1145/3341302.3342070}
\showDOI{\tempurl}


\bibitem[\protect\citeauthoryear{Deutsch, Ekert, Jozsa, Macchiavello, Popescu,
  and Sanpera}{Deutsch et~al\mbox{.}}{1996}]%
        {Deutsch_1996}
\bibfield{author}{\bibinfo{person}{David Deutsch}, \bibinfo{person}{Artur
  Ekert}, \bibinfo{person}{Richard Jozsa}, \bibinfo{person}{Chiara
  Macchiavello}, \bibinfo{person}{Sandu Popescu}, {and} \bibinfo{person}{Anna
  Sanpera}.} \bibinfo{year}{1996}\natexlab{}.
\newblock \showarticletitle{Quantum Privacy Amplification and the Security of
  Quantum Cryptography over Noisy Channels}.
\newblock \bibinfo{journal}{\emph{Physical Review Letters}}
  \bibinfo{volume}{77}, \bibinfo{number}{13} (\bibinfo{date}{sep}
  \bibinfo{year}{1996}), \bibinfo{pages}{2818--2821}.
\newblock
\urldef\tempurl%
\url{https://doi.org/10.1103/physrevlett.77.2818}
\showDOI{\tempurl}


\bibitem[\protect\citeauthoryear{D{\"u}r and Briegel}{D{\"u}r and
  Briegel}{2007}]%
        {dur2007entanglement}
\bibfield{author}{\bibinfo{person}{Wolfgang D{\"u}r} {and}
  \bibinfo{person}{Hans~J. Briegel}.} \bibinfo{year}{2007}\natexlab{}.
\newblock \showarticletitle{Entanglement purification and quantum error
  correction}.
\newblock \bibinfo{journal}{\emph{Reports on Progress in Physics}}
  \bibinfo{volume}{70}, \bibinfo{number}{8} (\bibinfo{year}{2007}),
  \bibinfo{pages}{1381}.
\newblock


\bibitem[\protect\citeauthoryear{Ekert}{Ekert}{1991}]%
        {e91}
\bibfield{author}{\bibinfo{person}{Artur~K. Ekert}.}
  \bibinfo{year}{1991}\natexlab{}.
\newblock \showarticletitle{{Quantum cryptography based on Bell's theorem}}.
\newblock \bibinfo{journal}{\emph{Physical Review Letters}}
  \bibinfo{volume}{67}, \bibinfo{number}{6} (\bibinfo{date}{aug}
  \bibinfo{year}{1991}), \bibinfo{pages}{661--663}.
\newblock
\urldef\tempurl%
\url{https://doi.org/10.1103/physrevlett.67.661}
\showDOI{\tempurl}


\bibitem[\protect\citeauthoryear{Gilchrist, Langford, and Nielsen}{Gilchrist
  et~al\mbox{.}}{2005}]%
        {gilchrist2005distance}
\bibfield{author}{\bibinfo{person}{Alexei Gilchrist},
  \bibinfo{person}{Nathan~K. Langford}, {and} \bibinfo{person}{Michael~A.
  Nielsen}.} \bibinfo{year}{2005}\natexlab{}.
\newblock \showarticletitle{Distance measures to compare real and ideal quantum
  processes}.
\newblock \bibinfo{journal}{\emph{Physical Review A}} \bibinfo{volume}{71},
  \bibinfo{number}{6} (\bibinfo{year}{2005}), \bibinfo{pages}{062310}.
\newblock


\bibitem[\protect\citeauthoryear{Gisin and Thew}{Gisin and Thew}{2010}]%
        {Gisin_2010}
\bibfield{author}{\bibinfo{person}{Nicolas Gisin} {and} \bibinfo{person}{Rob~T.
  Thew}.} \bibinfo{year}{2010}\natexlab{}.
\newblock \showarticletitle{Quantum communication technology}.
\newblock \bibinfo{journal}{\emph{Electronics Letters}} \bibinfo{volume}{46},
  \bibinfo{number}{14} (\bibinfo{year}{2010}), \bibinfo{pages}{965}.
\newblock
\urldef\tempurl%
\url{https://doi.org/10.1049/el.2010.1626}
\showDOI{\tempurl}


\bibitem[\protect\citeauthoryear{Gottesman and Lo}{Gottesman and Lo}{2003}]%
        {Gottesman_2003}
\bibfield{author}{\bibinfo{person}{D. Gottesman} {and}
  \bibinfo{person}{Hoi-Kwong Lo}.} \bibinfo{year}{2003}\natexlab{}.
\newblock \showarticletitle{Proof of security of quantum key distribution with
  two-way classical communications}.
\newblock \bibinfo{journal}{\emph{{IEEE} Transactions on Information Theory}}
  \bibinfo{volume}{49}, \bibinfo{number}{2} (\bibinfo{date}{feb}
  \bibinfo{year}{2003}), \bibinfo{pages}{457--475}.
\newblock
\urldef\tempurl%
\url{https://doi.org/10.1109/tit.2002.807289}
\showDOI{\tempurl}


\bibitem[\protect\citeauthoryear{Guha, Krovi, Fuchs, Dutton, Slater, Simon, and
  Tittel}{Guha et~al\mbox{.}}{2015}]%
        {Guha_2015}
\bibfield{author}{\bibinfo{person}{Saikat Guha}, \bibinfo{person}{Hari Krovi},
  \bibinfo{person}{Christopher~A. Fuchs}, \bibinfo{person}{Zachary Dutton},
  \bibinfo{person}{Joshua~A. Slater}, \bibinfo{person}{Christoph Simon}, {and}
  \bibinfo{person}{Wolfgang Tittel}.} \bibinfo{year}{2015}\natexlab{}.
\newblock \showarticletitle{Rate-loss analysis of an efficient quantum repeater
  architecture}.
\newblock \bibinfo{journal}{\emph{Physical Review A}} \bibinfo{volume}{92},
  \bibinfo{number}{2} (\bibinfo{date}{aug} \bibinfo{year}{2015}).
\newblock
\urldef\tempurl%
\url{https://doi.org/10.1103/physreva.92.022357}
\showDOI{\tempurl}


\bibitem[\protect\citeauthoryear{Hensen, Bernien, Dr{\'{e}}au, Reiserer, Kalb,
  Blok, Ruitenberg, Vermeulen, Schouten, Abell{\'{a}}n, Amaya, Pruneri,
  Mitchell, Markham, Twitchen, Elkouss, Wehner, Taminiau, and Hanson}{Hensen
  et~al\mbox{.}}{2015}]%
        {Hensen_2015}
\bibfield{author}{\bibinfo{person}{Bas Hensen}, \bibinfo{person}{Hannes
  Bernien}, \bibinfo{person}{Ana\"{i}s~E. Dr{\'{e}}au},
  \bibinfo{person}{Andreas Reiserer}, \bibinfo{person}{Norbert Kalb},
  \bibinfo{person}{Machiel~S. Blok}, \bibinfo{person}{Just Ruitenberg},
  \bibinfo{person}{Raymond F.~L. Vermeulen}, \bibinfo{person}{Raymond~N.
  Schouten}, \bibinfo{person}{Carlos Abell{\'{a}}n}, \bibinfo{person}{Waldimar
  Amaya}, \bibinfo{person}{Valerio Pruneri}, \bibinfo{person}{Morgan~W.
  Mitchell}, \bibinfo{person}{Matthew Markham}, \bibinfo{person}{Daniel~J.
  Twitchen}, \bibinfo{person}{David Elkouss}, \bibinfo{person}{Stephanie
  Wehner}, \bibinfo{person}{Tim~H. Taminiau}, {and} \bibinfo{person}{R.
  Hanson}.} \bibinfo{year}{2015}\natexlab{}.
\newblock \showarticletitle{{Loophole-free Bell inequality violation using
  electron spins separated by 1.3 kilometres}}.
\newblock \bibinfo{journal}{\emph{Nature}} \bibinfo{volume}{526},
  \bibinfo{number}{7575} (\bibinfo{date}{oct} \bibinfo{year}{2015}),
  \bibinfo{pages}{682--686}.
\newblock
\urldef\tempurl%
\url{https://doi.org/10.1038/nature15759}
\showDOI{\tempurl}


\bibitem[\protect\citeauthoryear{Humphreys, Kalb, Morits, Schouten, Vermeulen,
  Twitchen, Markham, and Hanson}{Humphreys et~al\mbox{.}}{2018}]%
        {Humphreys_2018}
\bibfield{author}{\bibinfo{person}{Peter~C. Humphreys},
  \bibinfo{person}{Norbert Kalb}, \bibinfo{person}{Jaco P.~J. Morits},
  \bibinfo{person}{Raymond~N. Schouten}, \bibinfo{person}{Raymond F.~L.
  Vermeulen}, \bibinfo{person}{Daniel~J. Twitchen}, \bibinfo{person}{Matthew
  Markham}, {and} \bibinfo{person}{Ronald Hanson}.}
  \bibinfo{year}{2018}\natexlab{}.
\newblock \showarticletitle{Deterministic delivery of remote entanglement on a
  quantum network}.
\newblock \bibinfo{journal}{\emph{Nature}} \bibinfo{volume}{558},
  \bibinfo{number}{7709} (\bibinfo{date}{jun} \bibinfo{year}{2018}),
  \bibinfo{pages}{268--273}.
\newblock
\urldef\tempurl%
\url{https://doi.org/10.1038/s41586-018-0200-5}
\showDOI{\tempurl}


\bibitem[\protect\citeauthoryear{Jiang, Taylor, Nemoto, Munro, Van~Meter, and
  Lukin}{Jiang et~al\mbox{.}}{2009}]%
        {jiang2009quantum}
\bibfield{author}{\bibinfo{person}{Liang Jiang}, \bibinfo{person}{Jacob~M
  Taylor}, \bibinfo{person}{Kae Nemoto}, \bibinfo{person}{William~J Munro},
  \bibinfo{person}{Rodney Van~Meter}, {and} \bibinfo{person}{Mikhail~D Lukin}.}
  \bibinfo{year}{2009}\natexlab{}.
\newblock \showarticletitle{Quantum repeater with encoding}.
\newblock \bibinfo{journal}{\emph{Physical Review A}} \bibinfo{volume}{79},
  \bibinfo{number}{3} (\bibinfo{year}{2009}), \bibinfo{pages}{032325}.
\newblock


\bibitem[\protect\citeauthoryear{Jiang, Taylor, S{\o}rensen, and Lukin}{Jiang
  et~al\mbox{.}}{2007}]%
        {Jiang_2007}
\bibfield{author}{\bibinfo{person}{Liang Jiang}, \bibinfo{person}{Jacob~M.
  Taylor}, \bibinfo{person}{Anders~S. S{\o}rensen}, {and}
  \bibinfo{person}{Mikhail~D. Lukin}.} \bibinfo{year}{2007}\natexlab{}.
\newblock \showarticletitle{Distributed quantum computation based on small
  quantum registers}.
\newblock \bibinfo{journal}{\emph{Physical Review A}} \bibinfo{volume}{76},
  \bibinfo{number}{6} (\bibinfo{date}{dec} \bibinfo{year}{2007}).
\newblock
\urldef\tempurl%
\url{https://doi.org/10.1103/physreva.76.062323}
\showDOI{\tempurl}


\bibitem[\protect\citeauthoryear{Jobez, Timoney, Laplane, Etesse, Ferrier,
  Goldner, Gisin, and Afzelius}{Jobez et~al\mbox{.}}{2016}]%
        {jobez2016towards}
\bibfield{author}{\bibinfo{person}{Pierre Jobez}, \bibinfo{person}{Nuala
  Timoney}, \bibinfo{person}{Cyril Laplane}, \bibinfo{person}{Jean Etesse},
  \bibinfo{person}{Alban Ferrier}, \bibinfo{person}{Philippe Goldner},
  \bibinfo{person}{Nicolas Gisin}, {and} \bibinfo{person}{Mikael Afzelius}.}
  \bibinfo{year}{2016}\natexlab{}.
\newblock \showarticletitle{Towards highly multimode optical quantum memory for
  quantum repeaters}.
\newblock \bibinfo{journal}{\emph{Physical Review A}} \bibinfo{volume}{93},
  \bibinfo{number}{3} (\bibinfo{year}{2016}), \bibinfo{pages}{032327}.
\newblock


\bibitem[\protect\citeauthoryear{Kalb, Humphreys, Slim, and Hanson}{Kalb
  et~al\mbox{.}}{2018}]%
        {kalb2018dephasing}
\bibfield{author}{\bibinfo{person}{Norbert Kalb}, \bibinfo{person}{Peter~C
  Humphreys}, \bibinfo{person}{Jesse~J. Slim}, {and} \bibinfo{person}{Ronald
  Hanson}.} \bibinfo{year}{2018}\natexlab{}.
\newblock \showarticletitle{Dephasing mechanisms of diamond-based nuclear-spin
  memories for quantum networks}.
\newblock \bibinfo{journal}{\emph{Physical Review A}} \bibinfo{volume}{97},
  \bibinfo{number}{6} (\bibinfo{year}{2018}), \bibinfo{pages}{062330}.
\newblock


\bibitem[\protect\citeauthoryear{Kalb, Reiserer, Humphreys, Bakermans,
  Kamerling, Nickerson, Benjamin, Twitchen, Markham, and Hanson}{Kalb
  et~al\mbox{.}}{2017}]%
        {kalb2017entanglement}
\bibfield{author}{\bibinfo{person}{Norbert Kalb}, \bibinfo{person}{Andreas~A.
  Reiserer}, \bibinfo{person}{Peter~C. Humphreys}, \bibinfo{person}{Jacob~J.W.
  Bakermans}, \bibinfo{person}{Sten~J. Kamerling}, \bibinfo{person}{Naomi~H.
  Nickerson}, \bibinfo{person}{Simon~C. Benjamin}, \bibinfo{person}{Daniel~J.
  Twitchen}, \bibinfo{person}{Matthew Markham}, {and} \bibinfo{person}{Ronald
  Hanson}.} \bibinfo{year}{2017}\natexlab{}.
\newblock \showarticletitle{Entanglement distillation between solid-state
  quantum network nodes}.
\newblock \bibinfo{journal}{\emph{Science}} \bibinfo{volume}{356},
  \bibinfo{number}{6341} (\bibinfo{year}{2017}), \bibinfo{pages}{928--932}.
\newblock


\bibitem[\protect\citeauthoryear{Kawakami, Jullien, Scarlino, Ward, Savage,
  Lagally, Dobrovitski, Friesen, Coppersmith, Eriksson, and
  Vandersypen}{Kawakami et~al\mbox{.}}{2016}]%
        {Kawakami_2016}
\bibfield{author}{\bibinfo{person}{Erika Kawakami}, \bibinfo{person}{Thibaut
  Jullien}, \bibinfo{person}{Pasquale Scarlino}, \bibinfo{person}{Daniel~R.
  Ward}, \bibinfo{person}{Donald~E. Savage}, \bibinfo{person}{Max~G. Lagally},
  \bibinfo{person}{Viatcheslav~V. Dobrovitski}, \bibinfo{person}{Mark Friesen},
  \bibinfo{person}{Susan~N. Coppersmith}, \bibinfo{person}{Mark~A. Eriksson},
  {and} \bibinfo{person}{Lieven M.~K. Vandersypen}.}
  \bibinfo{year}{2016}\natexlab{}.
\newblock \showarticletitle{{Gate fidelity and coherence of an electron spin in
  an Si/{SiGe} quantum dot with micromagnet}}.
\newblock \bibinfo{journal}{\emph{Proceedings of the National Academy of
  Sciences}} \bibinfo{volume}{113}, \bibinfo{number}{42} (\bibinfo{date}{oct}
  \bibinfo{year}{2016}), \bibinfo{pages}{11738--11743}.
\newblock
\urldef\tempurl%
\url{https://doi.org/10.1073/pnas.1603251113}
\showDOI{\tempurl}


\bibitem[\protect\citeauthoryear{Kimble}{Kimble}{2008}]%
        {kimble}
\bibfield{author}{\bibinfo{person}{Jeff~H. Kimble}.}
  \bibinfo{year}{2008}\natexlab{}.
\newblock \showarticletitle{The quantum internet}.
\newblock \bibinfo{journal}{\emph{Nature}} \bibinfo{volume}{453},
  \bibinfo{number}{7198} (\bibinfo{date}{jun} \bibinfo{year}{2008}),
  \bibinfo{pages}{1023--1030}.
\newblock
\urldef\tempurl%
\url{https://doi.org/10.1038/nature07127}
\showDOI{\tempurl}


\bibitem[\protect\citeauthoryear{K{\'{o}}m{\'{a}}r, Kessler, Bishof, Jiang,
  S{\o}rensen, Ye, and Lukin}{K{\'{o}}m{\'{a}}r et~al\mbox{.}}{2014}]%
        {lukin}
\bibfield{author}{\bibinfo{person}{P\'{e}ter K{\'{o}}m{\'{a}}r},
  \bibinfo{person}{Eric~M. Kessler}, \bibinfo{person}{Michael Bishof},
  \bibinfo{person}{Liang Jiang}, \bibinfo{person}{Anders~S. S{\o}rensen},
  \bibinfo{person}{Jun Ye}, {and} \bibinfo{person}{Mikhail~D. Lukin}.}
  \bibinfo{year}{2014}\natexlab{}.
\newblock \showarticletitle{A quantum network of clocks}.
\newblock \bibinfo{journal}{\emph{Nature Physics}} \bibinfo{volume}{10},
  \bibinfo{number}{8} (\bibinfo{date}{jun} \bibinfo{year}{2014}),
  \bibinfo{pages}{582--587}.
\newblock
\urldef\tempurl%
\url{https://doi.org/10.1038/nphys3000}
\showDOI{\tempurl}


\bibitem[\protect\citeauthoryear{Krutyanskiy, Meraner, Schupp, Krcmarsky,
  Hainzer, and Lanyon}{Krutyanskiy et~al\mbox{.}}{2019}]%
        {InnsbruckPhoton50kms}
\bibfield{author}{\bibinfo{person}{Viktor Krutyanskiy}, \bibinfo{person}{Martin
  Meraner}, \bibinfo{person}{Josef Schupp}, \bibinfo{person}{Vojtech
  Krcmarsky}, \bibinfo{person}{Helene Hainzer}, {and} \bibinfo{person}{Ben~P.
  Lanyon}.} \bibinfo{year}{2019}\natexlab{}.
\newblock \showarticletitle{Light-matter entanglement over
  50{\hspace{0.167em}}km of optical fibre}.
\newblock \bibinfo{journal}{\emph{npj Quantum Information}}
  \bibinfo{volume}{5}, \bibinfo{number}{1} (\bibinfo{date}{aug}
  \bibinfo{year}{2019}).
\newblock
\urldef\tempurl%
\url{https://doi.org/10.1038/s41534-019-0186-3}
\showDOI{\tempurl}


\bibitem[\protect\citeauthoryear{Latouche and Ramaswami}{Latouche and
  Ramaswami}{1999}]%
        {latouche1999introduction}
\bibfield{author}{\bibinfo{person}{Guy Latouche} {and}
  \bibinfo{person}{Vaidyanathan Ramaswami}.} \bibinfo{year}{1999}\natexlab{}.
\newblock \bibinfo{booktitle}{\emph{Introduction to matrix analytic methods in
  stochastic modeling}}.
\newblock \bibinfo{publisher}{SIAM}.
\newblock


\bibitem[\protect\citeauthoryear{Li, Coopmans, and Elkouss}{Li
  et~al\mbox{.}}{2020}]%
        {Li_2020}
\bibfield{author}{\bibinfo{person}{Boxi Li}, \bibinfo{person}{Tim Coopmans},
  {and} \bibinfo{person}{David Elkouss}.} \bibinfo{year}{2020}\natexlab{}.
\newblock \showarticletitle{{Efficient Optimization of Cut-offs in Quantum
  Repeater Chains}}. In \bibinfo{booktitle}{\emph{2020 {IEEE} International
  Conference on Quantum Computing and Engineering ({QCE})}}.
  \bibinfo{publisher}{{IEEE}}.
\newblock
\urldef\tempurl%
\url{https://doi.org/10.1109/qce49297.2020.00029}
\showDOI{\tempurl}


\bibitem[\protect\citeauthoryear{Marin and Bul{\`o}}{Marin and
  Bul{\`o}}{2011}]%
        {marin2011explicit}
\bibfield{author}{\bibinfo{person}{Andrea Marin} {and}
  \bibinfo{person}{Samuel~Rota Bul{\`o}}.} \bibinfo{year}{2011}\natexlab{}.
\newblock \showarticletitle{Explicit solutions for queues with hypo-exponential
  service time and applications to product-form analysis}. In
  \bibinfo{booktitle}{\emph{Proceedings of the 5th international ICST
  conference on performance evaluation methodologies and tools}}.
  \bibinfo{pages}{166--175}.
\newblock


\bibitem[\protect\citeauthoryear{Munro, Azuma, Tamaki, and Nemoto}{Munro
  et~al\mbox{.}}{2015}]%
        {Munro_2015}
\bibfield{author}{\bibinfo{person}{William~J. Munro}, \bibinfo{person}{Koji
  Azuma}, \bibinfo{person}{Kiyoshi Tamaki}, {and} \bibinfo{person}{Kae
  Nemoto}.} \bibinfo{year}{2015}\natexlab{}.
\newblock \showarticletitle{Inside Quantum Repeaters}.
\newblock \bibinfo{journal}{\emph{{IEEE} Journal of Selected Topics in Quantum
  Electronics}} \bibinfo{volume}{21}, \bibinfo{number}{3} (\bibinfo{date}{may}
  \bibinfo{year}{2015}), \bibinfo{pages}{78--90}.
\newblock
\urldef\tempurl%
\url{https://doi.org/10.1109/jstqe.2015.2392076}
\showDOI{\tempurl}


\bibitem[\protect\citeauthoryear{Neuts}{Neuts}{1994}]%
        {neuts1994matrix}
\bibfield{author}{\bibinfo{person}{Marcel~F. Neuts}.}
  \bibinfo{year}{1994}\natexlab{}.
\newblock \bibinfo{booktitle}{\emph{{Matrix-geometric solutions in stochastic
  models: an algorithmic approach}}}.
\newblock \bibinfo{publisher}{Courier Corporation}.
\newblock


\bibitem[\protect\citeauthoryear{Nichol, Orona, Harvey, Fallahi, Gardner,
  Manfra, and Yacoby}{Nichol et~al\mbox{.}}{2017}]%
        {Nichol_2017}
\bibfield{author}{\bibinfo{person}{John~M. Nichol}, \bibinfo{person}{Lucas~A.
  Orona}, \bibinfo{person}{Shannon~P. Harvey}, \bibinfo{person}{Saeed Fallahi},
  \bibinfo{person}{Geoffrey~C. Gardner}, \bibinfo{person}{Michael~J. Manfra},
  {and} \bibinfo{person}{Amir Yacoby}.} \bibinfo{year}{2017}\natexlab{}.
\newblock \showarticletitle{High-fidelity entangling gate for
  double-quantum-dot spin qubits}.
\newblock \bibinfo{journal}{\emph{npj Quantum Information}}
  \bibinfo{volume}{3}, \bibinfo{number}{1} (\bibinfo{date}{jan}
  \bibinfo{year}{2017}).
\newblock
\urldef\tempurl%
\url{https://doi.org/10.1038/s41534-016-0003-1}
\showDOI{\tempurl}


\bibitem[\protect\citeauthoryear{Nielsen}{Nielsen}{2002}]%
        {nielsen2002simple}
\bibfield{author}{\bibinfo{person}{Michael~A. Nielsen}.}
  \bibinfo{year}{2002}\natexlab{}.
\newblock \showarticletitle{A simple formula for the average gate fidelity of a
  quantum dynamical operation}.
\newblock \bibinfo{journal}{\emph{Physics Letters A}} \bibinfo{volume}{303},
  \bibinfo{number}{4} (\bibinfo{year}{2002}), \bibinfo{pages}{249--252}.
\newblock


\bibitem[\protect\citeauthoryear{Nielsen and Chuang}{Nielsen and
  Chuang}{[n.d.]}]%
        {nielsen&chuang}
\bibfield{author}{\bibinfo{person}{Michael~A. Nielsen} {and}
  \bibinfo{person}{Isaac~L. Chuang}.} \bibinfo{year}{[n.d.]}\natexlab{}.
\newblock In \bibinfo{booktitle}{\emph{Quantum Computation and Quantum
  Information}}. \bibinfo{publisher}{Cambridge University Press}.
\newblock
\urldef\tempurl%
\url{https://doi.org/10.1017/cbo9780511976667.016}
\showDOI{\tempurl}


\bibitem[\protect\citeauthoryear{Pfaff, Hensen, Bernien, van Dam, Blok,
  Taminiau, Tiggelman, Schouten, Markham, Twitchen, and Hanson}{Pfaff
  et~al\mbox{.}}{2014}]%
        {Pfaff_2014}
\bibfield{author}{\bibinfo{person}{Wolfgang Pfaff}, \bibinfo{person}{Bas~J.
  Hensen}, \bibinfo{person}{Hannes Bernien}, \bibinfo{person}{Suzanne~B. van
  Dam}, \bibinfo{person}{Machiel~S. Blok}, \bibinfo{person}{Tim~H. Taminiau},
  \bibinfo{person}{Marinus~J. Tiggelman}, \bibinfo{person}{Raymond~N.
  Schouten}, \bibinfo{person}{Matthew Markham}, \bibinfo{person}{Daniel~J.
  Twitchen}, {and} \bibinfo{person}{Ronald Hanson}.}
  \bibinfo{year}{2014}\natexlab{}.
\newblock \showarticletitle{Unconditional quantum teleportation between distant
  solid-state quantum bits}.
\newblock \bibinfo{journal}{\emph{Science}} \bibinfo{volume}{345},
  \bibinfo{number}{6196} (\bibinfo{date}{may} \bibinfo{year}{2014}),
  \bibinfo{pages}{532--535}.
\newblock
\urldef\tempurl%
\url{https://doi.org/10.1126/science.1253512}
\showDOI{\tempurl}


\bibitem[\protect\citeauthoryear{Pfister, Salz, Hettrich, Poschinger, and
  Schmidt-Kaler}{Pfister et~al\mbox{.}}{2016}]%
        {pfister2016quantum}
\bibfield{author}{\bibinfo{person}{Andreas~Daniel Pfister},
  \bibinfo{person}{Marcel Salz}, \bibinfo{person}{Max Hettrich},
  \bibinfo{person}{Ulrich~Georg Poschinger}, {and} \bibinfo{person}{Ferdinand
  Schmidt-Kaler}.} \bibinfo{year}{2016}\natexlab{}.
\newblock \showarticletitle{A quantum repeater node with trapped ions: a
  realistic case example}.
\newblock \bibinfo{journal}{\emph{Applied Physics B}} \bibinfo{volume}{122},
  \bibinfo{number}{4} (\bibinfo{year}{2016}), \bibinfo{pages}{89}.
\newblock


\bibitem[\protect\citeauthoryear{Pinsky and Karlin}{Pinsky and Karlin}{2010}]%
        {pinsky2010introduction}
\bibfield{author}{\bibinfo{person}{Mark Pinsky} {and} \bibinfo{person}{Samuel
  Karlin}.} \bibinfo{year}{2010}\natexlab{}.
\newblock \bibinfo{booktitle}{\emph{{An introduction to stochastic modeling}}}.
\newblock \bibinfo{publisher}{Academic press}.
\newblock


\bibitem[\protect\citeauthoryear{Pompili, Hermans, Baier, Beukers, Humphreys,
  Schouten, Vermeulen, Tiggelman, dos Santos~Martins, Dirkse,
  et~al\mbox{.}}{Pompili et~al\mbox{.}}{2021}]%
        {pompili2021realization}
\bibfield{author}{\bibinfo{person}{Matteo Pompili},
  \bibinfo{person}{Sophie~L.N. Hermans}, \bibinfo{person}{Simon Baier},
  \bibinfo{person}{Hans~K.C. Beukers}, \bibinfo{person}{Peter~C. Humphreys},
  \bibinfo{person}{Raymond~N. Schouten}, \bibinfo{person}{Raymond~F.L.
  Vermeulen}, \bibinfo{person}{Marijn~J. Tiggelman}, \bibinfo{person}{Laura dos
  Santos~Martins}, \bibinfo{person}{Bas Dirkse}, {et~al\mbox{.}}}
  \bibinfo{year}{2021}\natexlab{}.
\newblock \showarticletitle{Realization of a multinode quantum network of
  remote solid-state qubits}.
\newblock \bibinfo{journal}{\emph{Science}} \bibinfo{volume}{372},
  \bibinfo{number}{6539} (\bibinfo{year}{2021}), \bibinfo{pages}{259--264}.
\newblock


\bibitem[\protect\citeauthoryear{Reiserer, Kalb, Blok, van Bemmelen, Taminiau,
  Hanson, Twitchen, and Markham}{Reiserer et~al\mbox{.}}{2016}]%
        {Reiserer_2016}
\bibfield{author}{\bibinfo{person}{Andreas Reiserer}, \bibinfo{person}{Norbert
  Kalb}, \bibinfo{person}{Machiel~S. Blok},
  \bibinfo{person}{Koen~J.{\hspace{0.167em}}M. van Bemmelen},
  \bibinfo{person}{Tim~H. Taminiau}, \bibinfo{person}{Ronald Hanson},
  \bibinfo{person}{Daniel~J. Twitchen}, {and} \bibinfo{person}{Matthew
  Markham}.} \bibinfo{year}{2016}\natexlab{}.
\newblock \showarticletitle{{Robust Quantum-Network Memory Using
  Decoherence-Protected Subspaces of Nuclear Spins}}.
\newblock \bibinfo{journal}{\emph{Physical Review X}} \bibinfo{volume}{6},
  \bibinfo{number}{2} (\bibinfo{date}{jun} \bibinfo{year}{2016}).
\newblock
\urldef\tempurl%
\url{https://doi.org/10.1103/physrevx.6.021040}
\showDOI{\tempurl}


\bibitem[\protect\citeauthoryear{Renes, Blume-Kohout, Scott, and Caves}{Renes
  et~al\mbox{.}}{2004}]%
        {renes2004symmetric}
\bibfield{author}{\bibinfo{person}{Joseph~M. Renes}, \bibinfo{person}{Robin
  Blume-Kohout}, \bibinfo{person}{Andrew~J. Scott}, {and}
  \bibinfo{person}{Carlton~M. Caves}.} \bibinfo{year}{2004}\natexlab{}.
\newblock \showarticletitle{Symmetric informationally complete quantum
  measurements}.
\newblock \bibinfo{journal}{\emph{J. Math. Phys.}} \bibinfo{volume}{45},
  \bibinfo{number}{6} (\bibinfo{year}{2004}), \bibinfo{pages}{2171--2180}.
\newblock


\bibitem[\protect\citeauthoryear{Rozp{\k{e}}dek, Yehia, Goodenough, Ruf,
  Humphreys, Hanson, Wehner, and Elkouss}{Rozp{\k{e}}dek et~al\mbox{.}}{2019}]%
        {rozpkedek2019near}
\bibfield{author}{\bibinfo{person}{Filip Rozp{\k{e}}dek}, \bibinfo{person}{Raja
  Yehia}, \bibinfo{person}{Kenneth Goodenough}, \bibinfo{person}{Maximilian
  Ruf}, \bibinfo{person}{Peter~C Humphreys}, \bibinfo{person}{Ronald Hanson},
  \bibinfo{person}{Stephanie Wehner}, {and} \bibinfo{person}{David Elkouss}.}
  \bibinfo{year}{2019}\natexlab{}.
\newblock \showarticletitle{Near-term quantum-repeater experiments with
  nitrogen-vacancy centers: Overcoming the limitations of direct transmission}.
\newblock \bibinfo{journal}{\emph{Physical Review A}} \bibinfo{volume}{99},
  \bibinfo{number}{5} (\bibinfo{year}{2019}), \bibinfo{pages}{052330}.
\newblock


\bibitem[\protect\citeauthoryear{Ruf, Wan, Choi, Englund, and Hanson}{Ruf
  et~al\mbox{.}}{2021}]%
        {ruf2021quantum}
\bibfield{author}{\bibinfo{person}{Maximilian Ruf}, \bibinfo{person}{Noel~H.
  Wan}, \bibinfo{person}{Hyeongrak Choi}, \bibinfo{person}{Dirk Englund}, {and}
  \bibinfo{person}{Ronald Hanson}.} \bibinfo{year}{2021}\natexlab{}.
\newblock \showarticletitle{Quantum networks based on color centers in
  diamond}.
\newblock \bibinfo{journal}{\emph{arXiv preprint arXiv:2105.04341}}
  (\bibinfo{year}{2021}).
\newblock


\bibitem[\protect\citeauthoryear{Sangouard, Dubessy, and Simon}{Sangouard
  et~al\mbox{.}}{2009}]%
        {sangouard2009quantum}
\bibfield{author}{\bibinfo{person}{Nicolas Sangouard}, \bibinfo{person}{Romain
  Dubessy}, {and} \bibinfo{person}{Christoph Simon}.}
  \bibinfo{year}{2009}\natexlab{}.
\newblock \showarticletitle{Quantum repeaters based on single trapped ions}.
\newblock \bibinfo{journal}{\emph{Physical Review A}} \bibinfo{volume}{79},
  \bibinfo{number}{4} (\bibinfo{year}{2009}), \bibinfo{pages}{042340}.
\newblock


\bibitem[\protect\citeauthoryear{Sangouard, Simon, de~Riedmatten, and
  Gisin}{Sangouard et~al\mbox{.}}{2011}]%
        {RMPensembleRepeaters}
\bibfield{author}{\bibinfo{person}{Nicolas Sangouard},
  \bibinfo{person}{Christoph Simon}, \bibinfo{person}{Hugues de Riedmatten},
  {and} \bibinfo{person}{Nicolas Gisin}.} \bibinfo{year}{2011}\natexlab{}.
\newblock \showarticletitle{Quantum repeaters based on atomic ensembles and
  linear optics}.
\newblock \bibinfo{journal}{\emph{Reviews of Modern Physics}}
  \bibinfo{volume}{83}, \bibinfo{number}{1} (\bibinfo{date}{mar}
  \bibinfo{year}{2011}), \bibinfo{pages}{33--80}.
\newblock
\urldef\tempurl%
\url{https://doi.org/10.1103/revmodphys.83.33}
\showDOI{\tempurl}


\bibitem[\protect\citeauthoryear{Scarani, Bechmann-Pasquinucci, Cerf,
  Du{\v{s}}ek, L{\"u}tkenhaus, and Peev}{Scarani et~al\mbox{.}}{2009}]%
        {scarani2009security}
\bibfield{author}{\bibinfo{person}{Valerio Scarani}, \bibinfo{person}{Helle
  Bechmann-Pasquinucci}, \bibinfo{person}{Nicolas~J. Cerf},
  \bibinfo{person}{Miloslav Du{\v{s}}ek}, \bibinfo{person}{Norbert
  L{\"u}tkenhaus}, {and} \bibinfo{person}{Momtchil Peev}.}
  \bibinfo{year}{2009}\natexlab{}.
\newblock \showarticletitle{The security of practical quantum key
  distribution}.
\newblock \bibinfo{journal}{\emph{Reviews of modern physics}}
  \bibinfo{volume}{81}, \bibinfo{number}{3} (\bibinfo{year}{2009}),
  \bibinfo{pages}{1301}.
\newblock


\bibitem[\protect\citeauthoryear{Schumacher}{Schumacher}{1996}]%
        {originalPaper}
\bibfield{author}{\bibinfo{person}{Benjamin Schumacher}.}
  \bibinfo{year}{1996}\natexlab{}.
\newblock \showarticletitle{Sending entanglement through noisy quantum
  channels}.
\newblock \bibinfo{journal}{\emph{Physical Review A}} \bibinfo{volume}{54},
  \bibinfo{number}{4} (\bibinfo{date}{oct} \bibinfo{year}{1996}),
  \bibinfo{pages}{2614--2628}.
\newblock
\urldef\tempurl%
\url{https://doi.org/10.1103/physreva.54.2614}
\showDOI{\tempurl}


\bibitem[\protect\citeauthoryear{Shchukin, Schmidt, and van Loock}{Shchukin
  et~al\mbox{.}}{2019}]%
        {Shchukin_2019}
\bibfield{author}{\bibinfo{person}{Evgeny Shchukin}, \bibinfo{person}{Frank
  Schmidt}, {and} \bibinfo{person}{Peter van Loock}.}
  \bibinfo{year}{2019}\natexlab{}.
\newblock \showarticletitle{Waiting time in quantum repeaters with
  probabilistic entanglement swapping}.
\newblock \bibinfo{journal}{\emph{Physical Review A}} \bibinfo{volume}{100},
  \bibinfo{number}{3} (\bibinfo{date}{sep} \bibinfo{year}{2019}).
\newblock
\urldef\tempurl%
\url{https://doi.org/10.1103/physreva.100.032322}
\showDOI{\tempurl}


\bibitem[\protect\citeauthoryear{Taminiau, Cramer, van~der Sar, Dobrovitski,
  and Hanson}{Taminiau et~al\mbox{.}}{2014}]%
        {taminiau2014universal}
\bibfield{author}{\bibinfo{person}{Tim~H. Taminiau}, \bibinfo{person}{Julia
  Cramer}, \bibinfo{person}{Toeno van~der Sar}, \bibinfo{person}{Viatcheslav~V.
  Dobrovitski}, {and} \bibinfo{person}{Ronald Hanson}.}
  \bibinfo{year}{2014}\natexlab{}.
\newblock \showarticletitle{Universal control and error correction in
  multi-qubit spin registers in diamond}.
\newblock \bibinfo{journal}{\emph{Nature nanotechnology}} \bibinfo{volume}{9},
  \bibinfo{number}{3} (\bibinfo{year}{2014}), \bibinfo{pages}{171}.
\newblock


\bibitem[\protect\citeauthoryear{Tempel and Aspuru-Guzik}{Tempel and
  Aspuru-Guzik}{2011}]%
        {tempel2011relaxation}
\bibfield{author}{\bibinfo{person}{David~G. Tempel} {and}
  \bibinfo{person}{Al{\'a}n Aspuru-Guzik}.} \bibinfo{year}{2011}\natexlab{}.
\newblock \showarticletitle{Relaxation and dephasing in open quantum systems
  time-dependent density functional theory: Properties of exact functionals
  from an exactly-solvable model system}.
\newblock \bibinfo{journal}{\emph{Chemical Physics}} \bibinfo{volume}{391},
  \bibinfo{number}{1} (\bibinfo{year}{2011}), \bibinfo{pages}{130--142}.
\newblock


\bibitem[\protect\citeauthoryear{Van~Meter}{Van~Meter}{2014}]%
        {van2014quantum}
\bibfield{author}{\bibinfo{person}{Rodney Van~Meter}.}
  \bibinfo{year}{2014}\natexlab{}.
\newblock \bibinfo{booktitle}{\emph{Quantum networking}}.
\newblock \bibinfo{publisher}{John Wiley \& Sons}.
\newblock


\bibitem[\protect\citeauthoryear{Vardoyan, Guha, Nain, and Towsley}{Vardoyan
  et~al\mbox{.}}{2020a}]%
        {vardoyan2020capacity}
\bibfield{author}{\bibinfo{person}{Gayane Vardoyan}, \bibinfo{person}{Saikat
  Guha}, \bibinfo{person}{Philippe Nain}, {and} \bibinfo{person}{Don Towsley}.}
  \bibinfo{year}{2020}\natexlab{a}.
\newblock \showarticletitle{{On the Capacity Region of Bipartite and Tripartite
  Entanglement Switching}}. In \bibinfo{booktitle}{\emph{Performance 2020-38th
  IFIP International Symposium on Computer Performance, Modeling, Measurements
  and Evaluation}}. \bibinfo{pages}{1--6}.
\newblock


\bibitem[\protect\citeauthoryear{Vardoyan, Guha, Nain, and Towsley}{Vardoyan
  et~al\mbox{.}}{2020b}]%
        {Vardoyan_2020}
\bibfield{author}{\bibinfo{person}{Gayane Vardoyan}, \bibinfo{person}{Saikat
  Guha}, \bibinfo{person}{Philippe Nain}, {and} \bibinfo{person}{Don Towsley}.}
  \bibinfo{year}{2020}\natexlab{b}.
\newblock \showarticletitle{On the exact analysis of an idealized quantum
  switch}.
\newblock \bibinfo{journal}{\emph{Performance Evaluation}}
  \bibinfo{volume}{144} (\bibinfo{date}{dec} \bibinfo{year}{2020}),
  \bibinfo{pages}{102141}.
\newblock
\urldef\tempurl%
\url{https://doi.org/10.1016/j.peva.2020.102141}
\showDOI{\tempurl}


\bibitem[\protect\citeauthoryear{Vardoyan, Guha, Nain, and Towsley}{Vardoyan
  et~al\mbox{.}}{2021}]%
        {Vardoyan_2021}
\bibfield{author}{\bibinfo{person}{Gayane Vardoyan}, \bibinfo{person}{Saikat
  Guha}, \bibinfo{person}{Philippe Nain}, {and} \bibinfo{person}{Don Towsley}.}
  \bibinfo{year}{2021}\natexlab{}.
\newblock \showarticletitle{{On the Stochastic Analysis of a Quantum
  Entanglement Distribution Switch}}.
\newblock \bibinfo{journal}{\emph{{IEEE} Transactions on Quantum Engineering}}
  \bibinfo{volume}{2} (\bibinfo{year}{2021}), \bibinfo{pages}{1--16}.
\newblock
\urldef\tempurl%
\url{https://doi.org/10.1109/tqe.2021.3058058}
\showDOI{\tempurl}


\bibitem[\protect\citeauthoryear{Wehner, Elkouss, and Hanson}{Wehner
  et~al\mbox{.}}{2018}]%
        {stagesPaper}
\bibfield{author}{\bibinfo{person}{Stephanie Wehner}, \bibinfo{person}{David
  Elkouss}, {and} \bibinfo{person}{Ronald Hanson}.}
  \bibinfo{year}{2018}\natexlab{}.
\newblock \showarticletitle{Quantum internet: A vision for the road ahead}.
\newblock \bibinfo{journal}{\emph{Science}} \bibinfo{volume}{362},
  \bibinfo{number}{6412} (\bibinfo{date}{oct} \bibinfo{year}{2018}),
  \bibinfo{pages}{eaam9288}.
\newblock
\urldef\tempurl%
\url{https://doi.org/10.1126/science.aam9288}
\showDOI{\tempurl}


\bibitem[\protect\citeauthoryear{Wolf}{Wolf}{2012}]%
        {wolf2012quantum}
\bibfield{author}{\bibinfo{person}{Michael~M. Wolf}.}
  \bibinfo{year}{2012}\natexlab{}.
\newblock \showarticletitle{Quantum channels \& operations: Guided tour}.
\newblock \bibinfo{journal}{\emph{Lecture notes available at http://www-m5. ma.
  tum. de/foswiki/pub M}}  \bibinfo{volume}{5} (\bibinfo{year}{2012}).
\newblock


\end{thebibliography}

\appendix
\section{CTMC Analysis}
\subsection{Rate Matrix}
\label{ctmc_analysis_app}
Define the following useful variables:
\begin{align}
\alpha \coloneqq \lambda_e+\mu_e, \quad \beta\coloneqq \lambda_e+\lambda_m, \quad\text{and}\quad \gamma\coloneqq \lambda_e+\mu_m.
\end{align}
The infinitesimal generator of the CTMC in Figure \ref{fig:ctmcHypoK1} is given by
\begin{align*}
Q &= \begin{bmatrix}
B_{00} & B_{01} & \vect{0}& \cdots\\
B_{10} & A_1 & A_2 & \vect{0}& \cdots\\
\vect{0} & A_0 & A_1 & A_2 & \vect{0}& \cdots\\
\vect{0} & \vect{0} & A_0 & A_1 & A_2 & \vect{0}& \cdots\\
\vdots & \vdots & \quad\ddots & \quad\ddots & \quad\ddots & \quad\ddots
\end{bmatrix},
\end{align*}
where 
\begin{align*}
B_{00} &= -\lambda_e,\quad B_{01} = \begin{bmatrix}\lambda_e & 0 & 0\end{bmatrix},\quad B_{10} = \begin{bmatrix} 0 \\ 0\\ \mu_m\end{bmatrix},\\\\
A_0 &= \begin{bmatrix}0 & 0 & 0\\
0 & 0 & 0\\
\mu_m & 0 & 0\end{bmatrix},
\quad A_1=\begin{bmatrix}-\alpha & \mu_e & 0\\
0 & -\beta & \lambda_m\\
0 & 0 & -\gamma
\end{bmatrix},
\quad\text{and}\quad A_2 = \begin{bmatrix}
\lambda_e & 0 & 0\\
0 & \lambda_e & 0\\
0 & 0 & \lambda_e
\end{bmatrix},
\end{align*}
and the $\vect{0}$'s are vectors or matrices of appropriate dimensions. The ergodicity condition for Markov chains whose generators have tridiagonal block structure are well-studied in literature, see, \eg, \cite{neuts1994matrix}; we derive it here for completeness. The QBD process driven by $Q$ is ergodic if and only if 
\begin{align}
\vect{\omega}A_2\vect{e} < \vect{\omega}A_0\vect{e},
\label{eq:ergodicitycond}
\end{align}
where $\vect{\omega}$ is the equilibrium distribution of the generator $A_0+A_1+A_2$ and $\vect{e}$ is a vector of all ones. To find $\vect{\omega}$, we use the relation
\begin{align*}
\vect{\omega}(A_0+A_1+A_2) = \vect{0},
\end{align*}
or
\begin{align*}
\begin{bmatrix}\omega_1 & \omega_2 & \omega_3\end{bmatrix}\begin{bmatrix}-\mu_e & \mu_e & 0\\
0 & -\lambda_m & \lambda_m\\
\mu_m & 0 & -\mu_m
\end{bmatrix} = \begin{bmatrix}0 & 0 &0\end{bmatrix},
\end{align*}
which, along with the normalizing condition on $\omega_i$, yield
\begin{align*}
\omega_1 = \frac{\lambda_m\mu_m }{\lambda_m\mu_m+\lambda_m\mu_e+\mu_e\mu_m},
\quad \omega_2 = \frac{\mu_e\mu_m }{\lambda_m\mu_m+\lambda_m\mu_e+\mu_e\mu_m},\quad\text{and}\quad
\omega_3 = \frac{\lambda_m\mu_e}{\lambda_m\mu_m+\lambda_m\mu_e+\mu_e\mu_m}.
\end{align*}
Thus, after defining $D \coloneqq \lambda_m\mu_m+\lambda_m\mu_e+\mu_e\mu_m$ and with the assumption that $D>0$ since all rates are positive, (\ref{eq:ergodicitycond}) becomes
\begin{align*}
\frac{1}{D}\begin{bmatrix}\lambda_m\mu_m & \mu_e\mu_m & \lambda_m\mu_e\end{bmatrix}\begin{bmatrix}
\lambda_e & 0 & 0\\
0 & \lambda_e & 0\\
0 & 0 & \lambda_e
\end{bmatrix}\begin{bmatrix}1\\1\\1\end{bmatrix} &<
\frac{1}{D}\begin{bmatrix}\lambda_m\mu_m & \mu_e\mu_m & \lambda_m\mu_e\end{bmatrix}\begin{bmatrix}0 & 0 & 0\\
0 & 0 & 0\\
\mu_m & 0 & 0\end{bmatrix}\begin{bmatrix}1\\1\\1\end{bmatrix},\\\\
\lambda_e\begin{bmatrix}\lambda_m\mu_m & \mu_e\mu_m & \lambda_m\mu_e\end{bmatrix}\begin{bmatrix}1\\1\\1\end{bmatrix} &<
\begin{bmatrix}\lambda_m\mu_m & \mu_e\mu_m & \lambda_m\mu_e\end{bmatrix}\begin{bmatrix}0\\0\\\mu_m\end{bmatrix},\\
\lambda_e(\lambda_m\mu_m + \mu_e\mu_m + \lambda_m\mu_e) &<
\lambda_m\mu_e\mu_m,
\end{align*}
or, written another way,
\begin{align}
\frac{1}{\lambda_e} > \frac{1}{\mu_e}+\frac{1}{\lambda_m}+\frac{1}{\mu_m}.
\label{eq:stabcondHypo}
\end{align}
Intuitively, (\ref{eq:stabcondHypo}) indicates that for ergodicity, the average time between entanglement request arrivals must exceed the average processing times summed over all three stages (entanglement generation, waiting for a moving request to arrive, and performing the moving operation).

Henceforth, assume that (\ref{eq:stabcondHypo}) is satisfied. Next, we obtain the rate matrix of the generator. Note that $A_0$ is of rank 1; we may rewrite it as follows:
\begin{align*}
A_0 = \begin{bmatrix}0 \\ 0\\ \mu_m\end{bmatrix}\begin{bmatrix}1 & 0 & 0\end{bmatrix}.
\end{align*}
By the results in \cite{latouche1999introduction}, this means that the rate matrix may be computed explicitly; it is given by
\begin{align}
R &= -A_2(A_1+A_2\vect{e}\begin{bmatrix}1 & 0 & 0\end{bmatrix})^{-1}.
\end{align}
After some algebra, we obtain
\begin{align}
R = \frac{\lambda_e}{\lambda_m\mu_e\mu_m}\begin{bmatrix}
\beta\gamma & \gamma\mu_e & \mu_e\lambda_m\\
\lambda_e(\gamma+\lambda_m) & \gamma\mu_e & \mu_e\lambda_m\\
\lambda_e\beta & \lambda_e\mu_e & \mu_e\lambda_m
\end{bmatrix},
\end{align}
which matches the explicit rate matrix computation in \cite{marin2011explicit} for the $M/HYPO_K/1$ queue when $K=3$.

\subsection{Probability that a Computational Job Must Wait}
\label{app:stationary}
Our goal here is to compute $\sum\limits_{N=1}^{\infty}\pi_{N/1}$ and $\sum\limits_{N=1}^{\infty}\pi_{N/3}$.
To derive these quantities, we use the global balance principle to ``cut'' the chain  three different ways (the first cut isolates the states $N/3$, the second isolates the states $0$ and $N/1$, and the third isolates the states $N/2$), obtaining the following balance equations (below, $\pi_{N/2}$ is the stationary probability of state $N/2$):
\begin{align}
\mu_m\sum\limits_{N=1}^{\infty}\pi_{N/3} &= \lambda_m\sum\limits_{N=1}^{\infty}\pi_{N/2},\label{eq:globBal1}\\
\lambda_m\sum\limits_{N=1}^{\infty}\pi_{N/2} &= \mu_e\sum\limits_{N=1}^{\infty}\pi_{N/1},\label{eq:globBal2}\\
\mu_e\sum\limits_{N=1}^{\infty}\pi_{N/1} &= \mu_m\sum\limits_{N=1}^{\infty}\pi_{N/3}.\label{eq:globBal3}
\end{align}
We can rewrite (\ref{eq:globBal1}) as follows:
\begin{align}
\mu_m\sum\limits_{N=1}^{\infty}\pi_{N/3} &= \lambda_m\left(1-\sum\limits_{N=1}^{\infty}\pi_{N/1}-\sum\limits_{N=1}^{\infty}\pi_{N/3}-\pi_0\right),
\label{eq:pil3_1}
\end{align}
where $\pi_0$ is the stationary probability of state $0$,
and using (\ref{eq:globBal3}), (\ref{eq:pil3_1}) becomes
\begin{align}
\mu_m\sum\limits_{N=1}^{\infty}\pi_{N/3} &= \lambda_m\left(1-\frac{\mu_m}{\mu_e}\sum\limits_{N=1}^{\infty}\pi_{N/3}-\sum\limits_{N=1}^{\infty}\pi_{N/3}-\pi_0\right),\\
\left(\mu_m+\frac{\lambda_m\mu_m}{\mu_e}+\lambda_m\right)\sum\limits_{N=1}^{\infty}\pi_{N/3} &= \lambda_m\left(1-\pi_0\right),\\
\sum\limits_{N=1}^{\infty}\pi_{N/3} &= \frac{\lambda_m\left(1-\pi_0\right)}{\mu_m+\frac{\lambda_m\mu_m}{\mu_e}+\lambda_m}
= \frac{\lambda_m\left(1-\pi_0\right)\mu_e}{\mu_m\mu_e+\lambda_m\mu_m+\lambda_m\mu_e}.
\label{eq:pil3sum}
\end{align}
To obtain $\pi_0$, we use the balance equations of the QBD process along with the normalizing condition:
\begin{align}
\begin{bmatrix}\pi_0 & \vect{\pi_1}\end{bmatrix}\begin{bmatrix}B_{00} & B_{01}\\
B_{10} & A_1+RA_0\end{bmatrix} &= \begin{bmatrix}0 & \vect{0}\end{bmatrix},\label{eq:pi01_hypo}\\
\pi_0 + \vect{\pi_1}(I-R)^{-1}\vect{e} & = 1,\label{eq:pi02_hypo}
\end{align}
where $\vect{\pi_1}\equiv \begin{bmatrix}\pi_{1/1} & \pi_{1/2} & \pi_{1/3}\end{bmatrix}$;
$B_{00}$, $B_{01}$, $B_{10}$, $A_1$, and $A_0$ are blocks of the CTMC generator matrix defined in Appendix \ref{ctmc_analysis_app}, $R$ is the rate matrix of the QBD process derived in Appendix \ref{ctmc_analysis_app}, $\vect{e}$ is a vector of all ones, and $I$ is an identity matrix of the same dimensions as $R$.

From (\ref{eq:pi01_hypo}) and (\ref{eq:pi02_hypo}), we obtain
\begin{align}
\pi_0 &= \frac{\Delta}{\lambda_m\mu_e\mu_m},\quad 
\pi_{1/1} = \frac{\lambda_e(\lambda_e+\lambda_m)(\lambda_e+\mu_m)\Delta}{(\lambda_m\mu_e\mu_m)^2},\quad
\pi_{1/2} = \frac{\lambda_e(\lambda_e+\mu_m)\Delta}{\lambda_m^2\mu_e\mu_m^2},\quad
\pi_{1/3} = \frac{\lambda_e\Delta}{\lambda_m\mu_e\mu_m^2},
\label{eq:statdistrpi1}
\end{align}
where $\Delta \equiv (\lambda_m\mu_e\mu_m - \lambda_e\lambda_m\mu_e - \lambda_e\lambda_m\mu_m - \lambda_e\mu_e\mu_m )$. Note that $\Delta>0$ follows directly from the ergodicity of the chain.
Using (\ref{eq:pil3sum}), (\ref{eq:statdistrpi1}) and the definition of $\Delta$, we obtain
\begin{align}
\sum\limits_{N=1}^{\infty}\pi_{N/3} &= \frac{\lambda_m\left(1-\frac{\lambda_m\mu_e\mu_m - \lambda_e\lambda_m\mu_e - \lambda_e\lambda_m\mu_m - \lambda_e\mu_e\mu_m}{\lambda_m\mu_e\mu_m}\right)\mu_e}{\mu_m\mu_e+\lambda_m\mu_m+\lambda_m\mu_e}
= \frac{\lambda_e}{\mu_m}.
\end{align}
Using (\ref{eq:globBal3}), we have
\begin{align}
\sum\limits_{N=1}^{\infty}\pi_{N/1} &= \frac{\mu_m}{\mu_e}\sum\limits_{N=1}^{\infty}\pi_{N/3} = \frac{\mu_m}{\mu_e}\frac{\lambda_e}{\mu_m} = \frac{\lambda_e}{\mu_e}.
\end{align}

\section{Proof of Proposition \ref{prop:avgGFcomp}}
\label{app:prop2Proof}
\begin{proof}
Recall that the SD architecture's average gate fidelity is given by
\begin{align}
F_{avg}^{(1)}(\mathcal{N}_t,G) &=  \int\limits_{0}^{\infty}F(\mathcal{N}_t,G)f_{W_1}(t)dt\nonumber\\
&= \int\limits_{0}^{\infty}F(\mathcal{N}_t,G)\left(\lambda_e e^{-\mu_e t} +\lambda_e e^{-\mu_m^{(1)}t} + \left(1-\frac{\lambda_e}{\mu_e}-\frac{\lambda_e}{\mu_m^{(1)}}\right)\delta(t)\right)dt\nonumber\\
&= \lambda_e\int\limits_{0}^{\infty}F(\mathcal{N}_t,G)\left(e^{-\mu_e t} +e^{-\mu_m^{(1)}t} \right)dt +1-\frac{\lambda_e}{\mu_e}-\frac{\lambda_e}{\mu_m^{(1)}},
\label{eq:FavgArch1}
\end{align}
while for the DD architecture,
\begin{align}
F_{avg}^{(2)} (\mathcal{N}_t,G) &=  \int\limits_{0}^{\infty}F(\mathcal{N}_t,G)f_{W_2}(t)dt\nonumber\\
&= \int\limits_{0}^{\infty}F(\mathcal{N}_t,G)\left(\lambda_e e^{-\mu_m^{(2)}t} + \left(1-\frac{\lambda_e}{\mu_m^{(2)}}\right)\delta(t)\right)dt\nonumber\\
&= \lambda_e \int\limits_{0}^{\infty}F(\mathcal{N}_t,G)e^{-\mu_m^{(2)}t}dt+ 1-\frac{\lambda_e}{\mu_m^{(2)}}.
\end{align}
Next, note that for a function $0\leq g(t)\leq 1$ and $x,y>0$ with $x\geq y$,
\begin{align}
\int\limits_{0}^{\infty}g(t) e^{-xt}dt -\frac{1}{x} \geq \int\limits_{0}^{\infty}g(t) e^{-yt}dt -\frac{1}{y},
\label{eq:missingbit}
\end{align}
since
\begin{align*}
\int\limits_{0}^{\infty}g(t) (e^{-yt}- e^{-xt})dt  \leq \int\limits_{0}^{\infty}(e^{-yt}- e^{-xt})dt = \frac{1}{y} - \frac{1}{x}.
\end{align*}
Since $\mu_m^{(2)}\geq \mu_e$, it follows from (\ref{eq:missingbit}) that 
\begin{align}
F_{avg}^{(2)} \geq \lambda_e \int\limits_{0}^{\infty}F(\mathcal{N}_t,G)e^{-\mu_et}dt+ 1-\frac{\lambda_e}{\mu_e}.
\label{eq:needtoprove}
\end{align}
Call the quantity on the right-hand of (\ref{eq:needtoprove}) $\hat{F}$. Note that
\begin{align}
\hat{F} - F_{avg}^{(1)}
&= \lambda_e \int\limits_{0}^{\infty}F(\mathcal{N}_t,G)e^{-\mu_et}dt+ 1-\frac{\lambda_e}{\mu_e} - \left(\lambda_e\int\limits_{0}^{\infty}F(\mathcal{N}_t,G)\left(e^{-\mu_e t} +e^{-\mu_m^{(1)}t} \right)dt +1-\frac{\lambda_e}{\mu_e}-\frac{\lambda_e}{\mu_m^{(1)}}\right)\nonumber\\
&= \frac{\lambda_e}{\mu_m^{(1)}}- \lambda_e\int\limits_{0}^{\infty}F(\mathcal{N}_t,G)e^{-\mu_m^{(1)}t}dt \geq 0,
\label{eq:fhat}
\end{align}
where the last inequality follows from the fact that $F(\mathcal{N}_t,G)\leq 1$, for any $t$. By (\ref{eq:fhat}) and (\ref{eq:needtoprove}), we conclude that $F_{avg}^{(2)}\geq F_{avg}^{(1)}$.
\end{proof}

\section{Proof of Bound (\ref{eq:compositeBound})}
\label{app:boundProof}
To prove (\ref{eq:compositeBound}), we first prove the following useful facts.
\begin{lemma}
\label{lemma:simpleFact}
For any reals $a$, $x$, and $y$ all greater than zero, if $x>y$, then
\begin{align}
\frac{x}{ax+1} > \frac{y}{ay+1}.
\label{eq:simpleFact}
\end{align}
\end{lemma}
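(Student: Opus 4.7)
The plan is to reduce the inequality to the given hypothesis $x > y$ by clearing denominators. Since $a, x, y > 0$, both $ax+1$ and $ay+1$ are strictly positive, so multiplying \eqref{eq:simpleFact} through by $(ax+1)(ay+1)$ preserves the direction of the inequality. This turns the claim into $x(ay+1) > y(ax+1)$, which after expanding becomes $axy + x > axy + y$. Cancelling the common term $axy$ yields $x > y$, which is exactly the hypothesis. Running this chain in reverse gives the lemma.

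An equivalent and equally short approach would be to define $f(t) = t/(at+1)$ for $t > 0$ and compute
\begin{align*}
f'(t) = \frac{(at+1) - t \cdot a}{(at+1)^2} = \frac{1}{(at+1)^2} > 0,
\end{align*}
so $f$ is strictly increasing on $(0,\infty)$, and therefore $x > y$ implies $f(x) > f(y)$. I expect no real obstacle here; the statement is a one-line algebraic manipulation, and the only thing to be careful about is noting that the denominators are strictly positive so that the direction of the inequality is preserved when cross-multiplying. The reason for isolating this fact as a lemma is presumably its repeated use in the proof of Proposition \ref{prop:compositeBound} in Appendix \ref{app:boundProof}, where terms of the form $T/(\mu T + 1)$ appear on both sides of \eqref{eq:compositeCond} and need to be compared as the memory lifetimes vary.
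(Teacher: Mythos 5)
Your proof is correct and rests on the same cross-multiplication step as the paper's proof, which merely phrases the argument as a proof by contradiction instead of a direct chain (the paper assumes $\frac{x}{ax+1}\leq\frac{y}{ay+1}$, clears denominators, and derives $x\leq y$, contradicting $x>y$). The monotonicity-of-$f(t)=t/(at+1)$ variant you mention is a fine alternative but adds nothing beyond the same one-line algebra.
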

\begin{proof}{(Lemma \ref{lemma:simpleFact})}
Assume for a contradiction that (\ref{eq:simpleFact}) is not true. Then, it must be that
\begin{align*}
\frac{x}{ax+1} &\leq \frac{y}{ay+1},\implies
x(ay+1) \leq y(ax+1),\implies
x \leq y,
\end{align*}
which contradicts the hypothesis.
\end{proof}
\begin{lemma}
\label{lemma:simpleFactTwo}
For any reals $a$, $b$, and $x$ all greater than zero,
\begin{align}
    \frac{1}{a(ax+1)} + \frac{1}{b(bx+1)} < \frac{2}{\frac{\sqrt{2}ab}{a+b}\left(\frac{\sqrt{2}ab}{a+b}x+1\right)}.
    \label{eq:lemma5Eq}
\end{align}
\end{lemma}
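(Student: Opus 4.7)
The plan is to reduce the claim to a polynomial inequality in $x$ by clearing denominators, and then to verify the inequality coefficient by coefficient. Since all of $a$, $b$, $x$ are positive, every denominator appearing below is strictly positive, so cross-multiplication preserves the direction of the inequality.

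First I would combine the left-hand side over a common denominator, obtaining
\begin{align*}
\frac{1}{a(ax+1)}+\frac{1}{b(bx+1)}
=\frac{(a^2+b^2)x+(a+b)}{ab\bigl(abx^2+(a+b)x+1\bigr)}.
\end{align*}
For the right-hand side, using $c\coloneqq\frac{\sqrt{2}ab}{a+b}$ and the identity $c^2=\frac{2a^2b^2}{(a+b)^2}$, a short calculation gives
\begin{align*}
\frac{2}{c(cx+1)}=\frac{(a+b)^2}{ab\!\left(abx+\dfrac{a+b}{\sqrt{2}}\right)}.
\end{align*}
Canceling the common factor $ab$ in both denominators and cross-multiplying, the claim is equivalent to
\begin{align*}
\bigl[(a^2+b^2)x+(a+b)\bigr]\!\left[abx+\tfrac{a+b}{\sqrt{2}}\right]
<(a+b)^2\bigl[abx^2+(a+b)x+1\bigr].
\end{align*}

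Both sides expand into quadratics in $x$ with nonnegative coefficients, so it suffices to show a strict inequality between corresponding coefficients (the inequality then follows for every $x>0$). The $x^2$ coefficients compare as $ab(a^2+b^2)$ versus $ab(a+b)^2$, and the difference $2a^2b^2$ is positive. The constant terms compare as $\frac{(a+b)^2}{\sqrt{2}}$ versus $(a+b)^2$, which is immediate since $\frac{1}{\sqrt{2}}<1$. The main step is the $x$ coefficient: after dividing by $(a+b)>0$, the required inequality becomes
\begin{align*}
\frac{a^2+b^2}{\sqrt{2}}+ab<(a+b)^2=a^2+2ab+b^2,\quad\text{i.e.,}\quad \frac{a^2+b^2}{\sqrt{2}}<a^2+ab+b^2,
\end{align*}
and this follows by chaining $\frac{a^2+b^2}{\sqrt{2}}<a^2+b^2<a^2+ab+b^2$, using $\frac{1}{\sqrt{2}}<1$ and $ab>0$.

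The main (minor) obstacle is the algebraic bookkeeping in re-expressing the right-hand side in terms of $a$, $b$, $x$ cleanly; once that is done the coefficient-wise comparison is immediate. Since every coefficient comparison is strict and $x>0$, combining them yields the strict inequality claimed.
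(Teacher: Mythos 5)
Your proposal is correct and follows essentially the same route as the paper: both clear denominators (the paper keeps $c=\sqrt{2}ab/(a+b)$ symbolic while you rewrite the right-hand side directly in terms of $a,b$) and then verify the resulting quadratic inequality in $x$ by a strict coefficient-by-coefficient comparison. The only differences are cosmetic algebraic bookkeeping, so nothing further is needed.
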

\begin{proof}{(Lemma \ref{lemma:simpleFactTwo})} Let $c=\sqrt{2}ab/(a+b)$. Then
(\ref{eq:lemma5Eq}) is equivalent to
\begin{align}
\frac{(a^2+b^2)x + a+ b}{ab(ax+1)(bx+1)} &< \frac{2}{c\left(cx+1\right)},\\
((a^2+b^2)x + a+b)c\left(cx+1\right) &< 2ab(ax+1)(bx+1),\\
(a^2+b^2)c^2x^2 + \left((a+b)c+
a^2+b^2\right)cx + (a+b)c
&< 2ab(abx^2 + (a+b)x +1)\label{eq:60}.
\end{align}
Let us verify (\ref{eq:60}) by comparing the coefficients of terms $x^j$, $j=2,1,0$, on each side. First, consider the coefficients of $x^2$: on the left-hand side of (\ref{eq:60}), we have
\begin{align}
    (a^2+b^2)c^2 = (a^2+b^2)\frac{2a^2b^2}{(a+b)^2},
\end{align}
and it is easy to see that this quantity is less than $2a^2b^2$, the coefficient of $x^2$ on the right side of (\ref{eq:60}). Similarly, for the coefficient of $x$ on the left hand side of (\ref{eq:60}), we have
\begin{align}
    \left((a+b)c+
a^2+b^2\right)c = \left(\sqrt{2}ab+
a^2+b^2\right)\frac{\sqrt{2}ab}{a+b},
\end{align}
which is less than $2ab(a+b)$, the coefficient of $x$ on the right side of (\ref{eq:60}). Finally, the constant term on the left side of (\ref{eq:60}) is $(a+b)c = \sqrt{2}ab$, while the constant term on the right is larger ($2ab$).
\end{proof}
\begin{proof}{(Proposition \ref{prop:compositeBound})}
For this proof, it is useful to keep in mind that $T_1$ is always greater than $T_2$ for any architecture. From this fact, and by Lemma \ref{lemma:simpleFact}, it follows that to satisfy (\ref{eq:compositeCond}), it suffices to ensure that
\begin{align}
\frac{1}{\mu_e} + \frac{1}{\mu_m^{(1)}}
- \frac{T_2^{(1)}}{\mu_e T_2^{(1)} + 1} - \frac{T_2^{(1)}}{\mu^{(1)}_m T_2^{(1)} + 1}
<\frac{1}{\mu_m^{(2)}}- \frac{T_1^{(2)}}{\mu^{(2)}_m T_1^{(2)} + 1},
\label{eq:compositeBound1}
\end{align}
which we obtained by substituting each $T_1^{(1)}$ on the left-hand side of (\ref{eq:compositeCond}) with $T_2^{(1)}$ and each $T_2^{(2)}$ with $T_1^{(2)}$ on the right-hand side of (\ref{eq:compositeCond}). Next, (\ref{eq:compositeBound1}) is equivalent to
\begin{align}
\frac{1}{\mu_e(\mu_e T_2^{(1)} + 1)} + \frac{1}{\mu_m^{(1)}(\mu^{(1)}_m T_2^{(1)} + 1)}
<\frac{1}{\mu_m^{(2)}(\mu^{(2)}_m T_1^{(2)} + 1)}.
\label{eq:prop2Ineq}
\end{align}
To ensure (\ref{eq:prop2Ineq}) holds, it is sufficient to have 
\begin{align}
    \frac{2}{\frac{\sqrt{2}\mu_e \mu_m^{(1)}}{\mu_e+\mu_m^{(1)}}\left(\frac{\sqrt{2}\mu_e \mu_m^{(1)}}{\mu_e+\mu_m^{(1)}}T_2^{(1)}+1\right)}
    <\frac{1}{\mu_m^{(2)}(\mu^{(2)}_m T_1^{(2)} + 1)}.
    \label{eq:65}
\end{align}
Solving (\ref{eq:65}) for $T_2^{(1)}$ yields (\ref{eq:compositeBound}).
\end{proof}

\section{Nitrogen-Vacancy Center in Diamond}
\label{nv_in_diamond_app}
\subsection{Overview}
\label{app:NVoverview}
Here, we provide a brief overview of the Nitrogen-Vacancy (NV) center in diamond platform and its characteristics as relevant to our problem. For additional information, we refer the reader to, for example, \cite{linkLayerPaper}. The NV center in diamond platform is a few-qubit (at present, no more than 10 \cite{bradley2019ten}) quantum processor capable of executing arbitrary quantum gates and measurements and has demonstrated entanglement establishment over a distance of $1.3$km \cite{Hensen_2015}.  This hardware has also been used to demonstrate key quantum network protocols required for long-distance networking such as entanglement swapping and distillation \cite{pompili2021realization,kalb2017entanglement}. Qubits in the NV center in diamond platform may be divided into two types, \emph{communication qubits} and \emph{storage qubits}.  Communication qubits are those that are equipped with optical interfaces, allowing them to establish entanglement with communication qubits in other quantum processors while storage qubits, on the other hand, are only capable of storing quantum states and having quantum gates applied to them. In the particular case of the NV center in diamond, the set of quantum gates that can be applied to storage qubits is limited to a time-dependent rotation about the $Z$ axis, while arbitrary quantum gates may be applied to the communication qubit.

Interactions between qubits in the NV center in diamond platform are mediated by an electronic spin (the NV) that acts as a communication qubit.  The remaining qubits are C$^{13}$ spins that act as storage qubits which are magnetically coupled to the electronic spin.  This property of NV in diamond hardware restricts the parallelism of quantum gates on different qubits as multiple quantum gates may not be applied to the NV at the same time, resulting in serial execution of quantum gates.  Furthermore, most quantum gates may only be performed on quantum states held by the NV qubit, while C$^{13}$ spins may only be initialized and undergo $Z$-rotations.  This means that the application of a quantum gate to a state held by a storage qubit requires exchanging the quantum state of the NV with said storage qubit.

One implication of these physical restrictions on computation with the NV center in diamond is that
for our model of the single-NV architecture (Section \ref{sec:model}),
computational jobs may require state transfers before they can be serviced. Specifically, such a situation may arise whenever the system is awaiting a state transfer request (to move the state of a newly-entangled qubit from the NV to a storage qubit), while one or more computation requests are in the queue. Recall that during such ``idle'' periods, according to our modeling framework computation is allowed in the SD architecture. However, depending on the type of computation that is being requested (see discussion above), the processor may need to perform a state transfer prior to the computation (NV $\to$ storage qubit) to free up the NV qubit, perform the computation (recall that when $\mu_c=\infty$, all computational jobs may be processed instantaneously), and finally, perform another state transfer (storage qubit $\to$ NV) to move the state of the entangled qubit back to the NV. One may wonder: why perform the latter move when the state of the entangled qubit must eventually be moved to storage? The reason is that future computation requests may require the use of the entangled qubit in the NV. On the other hand, it may be possible that all computation requests involve the entangled qubit in the NV, in which case the extra transfers are not required. Since we have no knowledge of the computational request requirements a priori, we simply assume within our model that ``implicit'' state transfers are performed when necessary, and that they also (akin to computation requests) consume a negligible amount of time.
In a manner, our simulations in Section~\ref{sec:SimNumerObs} relax this assumption by varying computation request processing rates.  Applications that infrequently shuffle quantum states between qubits correspond to the case where $\mu_c$ is very large while applications that move quantum states more frequently correspond to the case when $\mu_c$ is smaller and comparable to $\mu_m$.

In addition to the limitations on quantum gates, the NV qubit is the sole communication qubit that may be equipped with an optical interface for establishing entanglement with qubits in other quantum processors.  This further restricts parallelism of operations as quantum gates may not be applied to any qubits when the processor is being used to establish entanglement.  Studies on experimental realizations of such hardware have shown that the fidelity and the rate at which entangled states may be established between such processors decrease as the distance over which entanglement must be established increases \cite{linkLayerPaper,rozpkedek2019near}. Combined with existing challenges in emitted photon collection \cite{ruf2021quantum}, this means that establishing entanglement will dominate the execution time of applications and incur additional latency for performing local quantum gates between qubits. Furthermore, the process of establishing entanglement introduces noise on quantum states that are stored in the remaining qubits of the system \cite{kalb2018dephasing}, which can severely reduce the fidelity of stored states.

\subsection{Transferring Quantum States}
\label{sec:nv_state_transfer}
\subsubsection{Single-Device NV}
Our NetSquid implementation for simulating the transfer of a quantum state from the NV electron spin qubit into carbon storage qubit is performed through a sequence of gates shown in the circuit of Figure \ref{fig:arch1_move} \cite{kalb2017entanglement}.

\begin{figure}[ht!]
    \centering
    \includegraphics[width=0.75\linewidth]{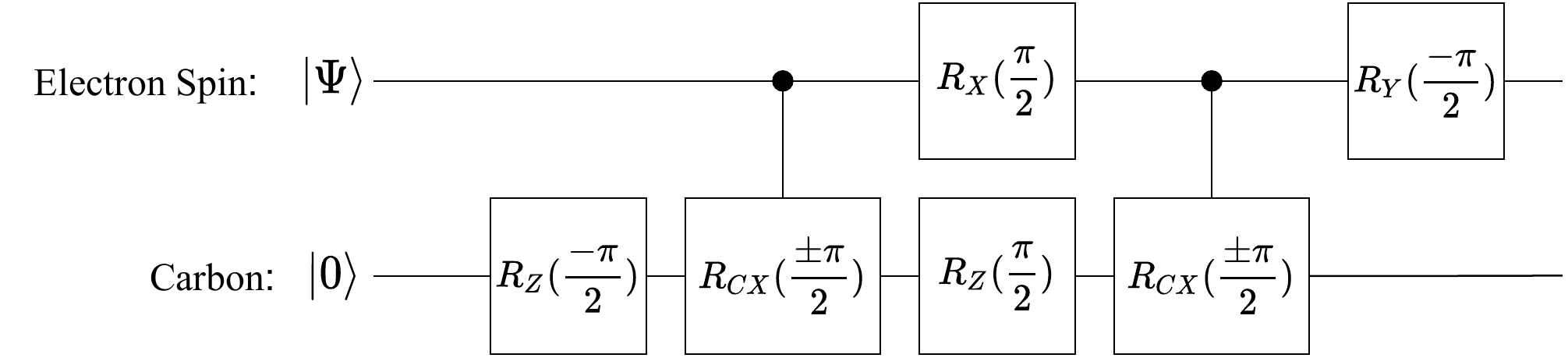}
    \caption{Quantum circuit for moving a quantum state from the NV into carbon storage. Each gate applies noise onto the qubits it interacts with.}
    \label{fig:arch1_move}
\end{figure}

Here, the electron spin is in some quantum state $|\Psi\rangle$ and the carbon storage is initialized to the state $|0\rangle$. Gates are applied to each qubit in order from left to right and lines from the NV to a gate on the carbon denote a controlled quantum gate. For reference, the quantum gates used in our simulations are defined as
\begin{align}
R_X(\theta) &= \begin{bmatrix} \cos(\frac{\theta}{2}) & -i\sin(\frac{\theta}{2})\\ -i\sin(\frac{\theta}{2}) & \cos(\frac{\theta}{2}) \end{bmatrix}, \quad R_Y(\theta) = \begin{bmatrix} \cos(\frac{\theta}{2}) & -\sin(\frac{\theta}{2})\\ \sin(\frac{\theta}{2}) & \cos(\frac{\theta}{2}) \end{bmatrix}, \quad R_Z(\theta) = \begin{bmatrix} e^{-i\frac{\theta}{2}} 0\\ 0 & e^{i\frac{\theta}{2}} \end{bmatrix}, \nonumber \\
R_{CX}(\pm \theta) &= |0\rangle \langle0| \otimes R_X(\theta) + |1\rangle \langle 1| \otimes R_X(-\theta) = \begin{bmatrix} \cos(\frac{\theta}{2}) & -i\sin(\frac{\theta}{2}) & 0 & 0\\ -i\sin(\frac{\theta}{2}) & \cos(\frac{\theta}{2}) & 0 & 0\\ 0 & 0 & \cos(\frac{\theta}{2}) & i\sin(\frac{\theta}{2})\\ 0 & 0 & i\sin(\frac{\theta}{2}) & \cos(\frac{\theta}{2}) \\ 
\end{bmatrix}, \nonumber \\
R_{CY}(\pm \theta) &= |0\rangle \langle0| \otimes R_Y(\theta) + |1\rangle \langle 1| \otimes R_Y(-\theta) = \begin{bmatrix} \cos(\frac{\theta}{2}) & -\sin(\frac{\theta}{2}) & 0 & 0\\ \sin(\frac{\theta}{2}) & \cos(\frac{\theta}{2}) & 0 & 0\\ 0 & 0 & \cos(\frac{\theta}{2}) & \sin(\frac{\theta}{2})\\ 0 & 0 & -\sin(\frac{\theta}{2}) & \cos(\frac{\theta}{2}) \\ 
\end{bmatrix}, \nonumber
\end{align}
where $\otimes$ denotes the tensor (Kronecker) product of two matrices.

In NetSquid, gates in the NV center in diamond platform are modeled as the application of the perfect gate $G$ after applying time-independent depolarizing noise $\mathcal{D}_G$ that is parameterized by an associated depolarizing probability $p_G$ depending on the gate G being applied. For the gate sequence in Figure \ref{fig:arch1_move}, the depolarizing parameters for each gate are summarized in Table \ref{tab:depol_params}.
\begin{table}[ht!]
\caption{Depolarizing parameters for gates in the NV hardware. We remark that no two NV devices are exactly identical. Individual values have not been realized simultaneously for producing entanglement which would allow a direct comparison to simulation.  We thus focus on simulation parameters that enable a comparison to entanglement generation hardware and provide references to motivations for our chosen values.}
\begin{tabular}{|l|l|l|l|}
\hline
 & Qubits Depolarized & $p_G$ NV & $p_G$ Carbon \\ \hline
Electron Initialization \cite{Reiserer_2016} & Electron & 0.02 & - \\ \hline
Electron $R_X(\theta)$ \cite{kalb2017entanglement} & - & - & - \\ \hline
Electron $R_Y(\theta)$ \cite{kalb2017entanglement} & - & - & - \\ \hline
Carbon Initialization \cite{bradley2019ten} & Carbon & - & $0.006 / 4$ \\ \hline
Carbon $R_Z(\theta)$ \cite{taminiau2014universal} & Carbon & - & $0.001 / 3$ \\ \hline
Electron-Carbon $R_{CX}(\pm\theta)$ \cite{kalb2017entanglement} & Electron and Carbon & 0.005 & 0.005 \\ \hline
Electron-Carbon $R_{CY}(\pm\theta)$ \cite{kalb2017entanglement} & Electron and Carbon & 0.005 & 0.005 \\ \hline
\end{tabular}
\label{tab:depol_params}
\end{table}

As an example, consider applying the $R_Z(\frac{\pi}{2})$ to some state $\rho = |\psi\rangle\langle\psi|$ in a carbon qubit.  The new state is
\begin{align}
 \mathcal{D}_G \left(R_Z\left(\frac{\pi}{2}\right)\rho R_Z^{\dagger}\left(\frac{\pi}{2}\right)\right)  \nonumber
\end{align}
using depolarizing probability $p_{R_Z}=\frac{0.001}{3}$ for $\mathcal{D}_G$.

\subsubsection{Double-Device NV}
Our NetSquid implementation for simulating quantum state transfer from the NV electron spin of the networking device to the NV electron spin of the computing device expands upon the gate sequence in Figure \ref{fig:arch1_move} and makes use of quantum teleportation \cite{Bennett_1996}.  Once entanglement that was originally held by the networking device electron spin has been moved to the networking device carbon storage (using the gate sequence from Figure~\ref{fig:arch1_move}), the following gate sequence Figure \ref{fig:arch2_move} is performed in order to transfer the state to the computing device electron spin \cite{pompili2021realization}.

\begin{figure}[ht!]
    \centering
    \includegraphics[width=\linewidth]{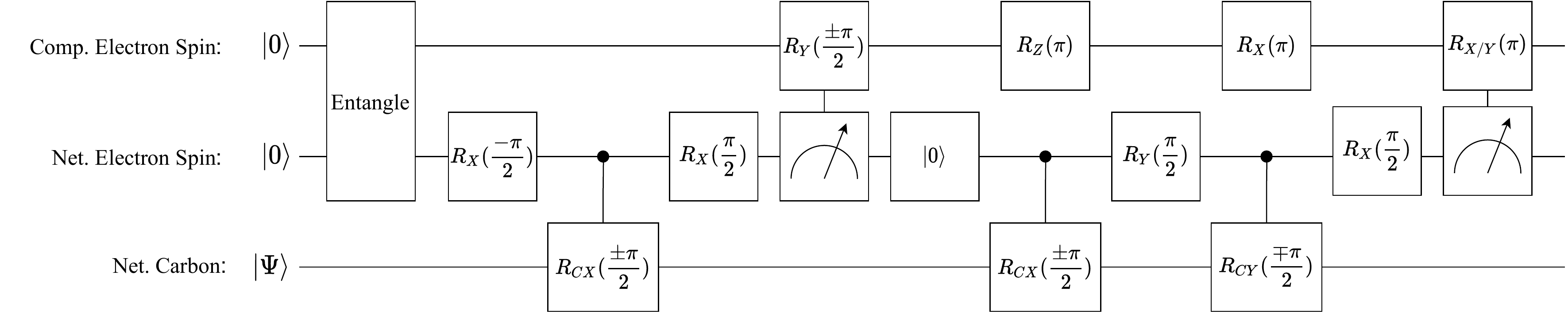}
    \caption{Quantum circuit for moving a quantum state from the networking device electron spin into the computing device electron spin.}
    \label{fig:arch2_move}
\end{figure}

Here, the "Entangle" box represents establishing entanglement between the electron spins of the networking device and the computing device, the $|0\rangle$ box represents initializing the networking device electron spin to the $|0\rangle$ state, and boxes containing arcs with an arrow represent measuring the qubit in the standard ($Z$) basis. The "Entangle" operation instantly establishes a perfect entangled state between the electron spins of the computing device and the networking device, thus giving an optimistic evaluation of the state transfer process in the DD architecture. Connections between measurement boxes and a gate on the computing electron spin represent conditional execution of the gate based on the measurement result. In the first case, measuring $|0\rangle$ on the networking electron spin means we perform a $R_Y(\pi)$ gate on the computing NV while measuring $|1\rangle$ on the networking NV means we perform a $R_Y(-\pi)$ gate.  Similarly for the second instance, here we perform a $R_X(\pi)$ gate on the computing electron spin if we measure $|0\rangle$ on the networking NV and a $R_Y(\pi)$ gate if we measure $|1\rangle$ on the networking electron spin.

\section{Fidelity Derivations}
\label{fidelity_derivation_app}
\subsection{Alternative Average Gate Fidelity Derivation for Noise Channels}
\label{app:avg_fidelity_derivation}
For completeness, we present an alternative, self-contained derivation of the average gate fidelity that depends only on a number of well-known tricks in quantum information. Our proof is based on the Choi-Jamiolkowski theorem establishing a duality between quantum states and quantum channels.
First, we will make use of the notion of the Choi state of a quantum channel.  Here, it will be sufficient to note that for channels $\mathcal{N}:\mathcal{S}_A \rightarrow \mathcal{S}_A$ where $\mathcal{S}_A$ denotes the state of a quantum system $A$, the Choi state for $\mathcal{N}$ is defined (see \eg,~\cite{wolf2012quantum}) as
\begin{align}\label{eq:ChoiState}
\tau_{\mathcal{N}} = \mathcal{N} \otimes \id_{A'} \left(\proj{\Phi_{AA'}}\right)\ ,
\end{align}
where 
\begin{align}
\ket{\Phi_{AA'}} = \frac{1}{\sqrt{d_A}}\sum_{j=1}^{d_A} \ket{j}_A \ket{j}_{A'}
\end{align}
is the maximally entangled state on $A$ and an identical quantum system $A'$. 
We will also make use of the partial transpose operation $\Gamma$. For any operator $M_{AA'} \in \Complex^{d_A\times d_A} \otimes \Complex^{d_A \times d_A}$ 
the partial transpose is defined as
\begin{align}
M^\Gamma_{AA'} &= \left(\sum_{ijk\ell} c_{ij}^{k\ell} \ket{i}\bra{j}_A \otimes \ket{k}\bra{\ell}_{A'}\right)^\Gamma =\sum_{ijk\ell}c_{ij}^{k\ell} \ket{i}\bra{j}_A \otimes \left(\ket{k}\bra{\ell}_{A'}\right)^T\ , 
\end{align}
which corresponds to taking the transpose on $A'$, but not on $A$. 
 
We first establish the following lemma.
\begin{lemma}
\label{lem:avgFid}
Let $\mathcal{N}_t:\mathcal{S}_A \rightarrow \mathcal{S}_A$ denote a noisy quantum channel depending on time $t$. For any quantum gate $G$ on system $A$, we have
\begin{align}
F(\mathcal{N}_t,G) = \frac{d_A}{d_{\rm sym}} \Tr\left[\Pi_{\rm sym} \tau_{\mathcal{N}_t}^\Gamma\right]\ , 
\end{align}
where $d_A$ is the dimension of the quantum system $A$, $\Pi_{\rm sym}$ is the projector onto the symmetric subspace of $\Complex^{d_A\times d_A} \otimes \Complex^{d_A \times d_A}$, $d_{\rm sym}$ is the dimension of said symmetric subspace, and $\tau_{\mathcal{N}_t}$ is the Choi state of $\mathcal{N}_t$.
\end{lemma}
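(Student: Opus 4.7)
\medskip
\noindent\textbf{Proof proposal for Lemma~\ref{lem:avgFid}.}

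The plan is to reduce the claim to an elementary calculation involving the maximally entangled state and the symmetric subspace, which are exactly the ingredients appearing on the right-hand side. First, I would invoke the reduction already derived in the text preceding the lemma: since $G$ is unitary and the Haar measure $d\Psi$ is invariant under the map $\ket{\Psi}\mapsto G\ket{\Psi}$, one has
\begin{align*}
F(\mathcal{N}_t,G) \;=\; F_{\rm orig}(\mathcal{N}_t\circ G,\,G) \;=\; \int d\psi\,\bra{\psi}\mathcal{N}_t\!\left(\proj{\psi}\right)\!\ket{\psi}\;=\;\int d\psi\,\Tr\!\left[\mathcal{N}_t(\proj{\psi})\,\proj{\psi}\right].
\end{align*}
So from the start the gate $G$ drops out, and what must be shown is the identity $\int d\psi\,\Tr[\mathcal{N}_t(\proj{\psi})\proj{\psi}] = \frac{d_A}{d_{\rm sym}}\Tr[\Pi_{\rm sym}\tau_{\mathcal{N}_t}^\Gamma]$.

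Next I would use the standard Choi--Jamio{\l}kowski recovery formula that lets one express the action of $\mathcal{N}_t$ on an arbitrary state $\rho$ in terms of its Choi matrix,
\begin{align*}
\mathcal{N}_t(\rho) \;=\; d_A\,\Tr_{A'}\!\left[(\id_A\otimes\rho^{T}_{A'})\,\tau_{\mathcal{N}_t}\right],
\end{align*}
which is a direct consequence of the ``transpose trick'' $(\id\otimes M)\ket{\Phi_{AA'}} = (M^T\otimes\id)\ket{\Phi_{AA'}}$ applied to the definition~\eqref{eq:ChoiState}. Substituting $\rho=\proj{\psi}$ and pairing with $\proj{\psi}$ inside the trace gives
\begin{align*}
\Tr\!\left[\mathcal{N}_t(\proj{\psi})\,\proj{\psi}\right] \;=\; d_A\,\Tr\!\left[(\proj{\psi}\otimes\proj{\psi}^{T})\,\tau_{\mathcal{N}_t}\right].
\end{align*}

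The third step is to pull the integral inside the trace and invoke the symmetric-subspace identity $\int d\psi\,\proj{\psi}^{\otimes 2} = \Pi_{\rm sym}/d_{\rm sym}$, a standard consequence of Schur--Weyl duality (the only operator in the symmetric subspace that is permutation-invariant up to scaling is $\Pi_{\rm sym}$ itself). The extra transpose on the second tensor factor is exactly a partial transpose on $A'$, so
\begin{align*}
\int d\psi\,\proj{\psi}\otimes\proj{\psi}^{T} \;=\; \left(\int d\psi\,\proj{\psi}^{\otimes 2}\right)^{\!\Gamma} \;=\; \frac{\Pi_{\rm sym}^\Gamma}{d_{\rm sym}}.
\end{align*}

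Finally, combining these displays and using the elementary fact $\Tr[X\,Y^\Gamma]=\Tr[X^\Gamma\,Y]$ for any operators on $A\otimes A'$ (partial transpose is self-adjoint with respect to the Hilbert--Schmidt inner product) yields
\begin{align*}
F(\mathcal{N}_t,G) \;=\; \frac{d_A}{d_{\rm sym}}\,\Tr\!\left[\Pi_{\rm sym}^\Gamma\,\tau_{\mathcal{N}_t}\right] \;=\; \frac{d_A}{d_{\rm sym}}\,\Tr\!\left[\Pi_{\rm sym}\,\tau_{\mathcal{N}_t}^\Gamma\right],
\end{align*}
which is the claimed formula. None of the steps are deep in isolation; the one place care is needed is the bookkeeping between the complex conjugate $\proj{\psi}^T$ produced by the Choi recovery formula and the partial transpose on the Choi state itself, which is the source of the $(\,\cdot\,)^\Gamma$ appearing on the right-hand side. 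That bookkeeping --- and verifying the normalization $d_A/d_{\rm sym}$ comes out correctly --- is the only subtle part of the argument.
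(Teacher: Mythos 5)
Your proof is correct and follows essentially the same route as the paper's: both rest on the Choi--Jamio{\l}kowski identity $\Tr[\rho\,\mathcal{N}_t(\sigma)] = d_A\Tr[(\rho\otimes\sigma^T)\tau_{\mathcal{N}_t}]$, the partial-transpose bookkeeping $\Tr[X^\Gamma Y]=\Tr[XY^\Gamma]$, and the symmetric-subspace average $\int d\psi\,\proj{\psi}^{\otimes 2}=\Pi_{\rm sym}/d_{\rm sym}$. The only difference is cosmetic ordering --- you remove $G$ up front via Haar invariance, while the paper carries $G$ along and cancels it inside the averaged projector --- which does not change the argument.
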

\begin{proof}
Using the Choi-Jamiolkowski isomorphism (see \eg, \cite{wolf2012quantum}), we can rewrite
\begin{align}
\bra{\Psi} G^\dagger \mathcal{N}_t(G \proj{\Psi} G^\dagger) G \ket{\Psi} &= \Tr\left[G\ket{\Psi}\bra{\Psi}G^\dagger \mathcal{N}_t\left(G\proj{\Psi}G^\dagger\right)\right]\\
&= d_A \Tr\left[G\ket{\Psi}\bra{\Psi}G^\dagger \otimes \left(G \proj{\Psi}G^\dagger\right)^T  \tau_{\mathcal{N}_t}\right]\\
&= d_A \Tr\left[\left(G\ket{\Psi}\bra{\Psi}G^\dagger\right)^{\otimes 2} \tau_{\mathcal{N}_t}^\Gamma\right]\ .\label{eq:afterChoi}
\end{align}
Using the definition for the average gate fidelity~\eqref{eq:GF} and~\eqref{eq:afterChoi}, we can then write 
\begin{align}
F(\mathcal{N}_t,G) &= d_A \Tr\left[\left(\int d\Psi \left(G\proj{\Psi}G^\dagger\right)^{\otimes 2}\right) \tau_{\mathcal{N}_t}^\Gamma\right]\\
&=\frac{d_A}{d_{\rm sym}}\Tr\left[\Pi_{\rm sym} \tau_{\mathcal{N}_t}^\Gamma\right]\ , 
\end{align}
where we have used the fact that
\begin{align}
\int d\Psi \left(G\proj{\Psi}G^\dagger\right)^{\otimes 2} &=
\int d\Psi \proj{\Psi}^{\otimes 2}\\
&=\frac{\Pi_{\rm sym}}{d_{\rm sym}}\ .\label{eq:linkEquation}
\end{align}
\end{proof}
Given the little lemma above, we can now readily evaluate the gate fidelity for any channel of interest in two steps: first, we need an expression for the projector $\Pi_{\rm sym}$ onto the symmetric subspace. It is well known (see \eg,~\cite{renes2004symmetric}) that the 
a full set of so-called mutually unbiased basis in dimension $d=2^n$ forms a $2$-design, \ie,
\begin{align}
\int d\Psi \proj{\Psi}^{\otimes 2} &= \frac{1}{d(d+1)} \sum_{\theta}\sum_j \proj{j_\theta}\ ,\\
&=\frac{2\Pi_{\rm sym}}{d(d+1)},
\end{align}
where the sum extends over bases indexed by $\theta$ and $\ket{j_\theta}$ denotes the $j$-th basis state of basis $\theta$. For $d=2$, \ie, a single qubit, 
these bases are simply the eigenbases of the operators $X$, $Z$ and $Y$ defined in Section~\ref{sec:quantum_bg}. Using~\eqref{eq:linkEquation}, this allows us to write
\begin{align}
\Pi_{\rm sym} = \frac{1}{2}\sum_{\theta \in \{X,Z,Y\}} \sum_j \proj{j_\theta}\ ,
\end{align}
where $d_{\rm sym} = 3$. In such small dimensions, it is also easy to write
\begin{align}
\Pi_{\rm sym} = \proj{\Phi_{00}} + \proj{\Phi_{01}} + \proj{\Phi_{10}}\ ,
\end{align}
where $\ket{\Phi_{ab}} = \id \otimes X^a Z^b \ket{\Phi}$ denotes the first three Bell states (excluding the singlet). Second, we need to compute the Choi states of the noise channels defined in Section~\ref{sec:noise}, which can readily be achieved by using~\eqref{eq:ChoiState}. 

\if{false}
\begin{align}
\mathcal{A}(\rho) &= M_0 \rho M_0^{\dagger} + M_1 \rho M_1^{\dagger},
\label{eq:damping}
\end{align}
where $M_0,M_1$ have the form
\begin{align}
M_0 = \begin{bmatrix} 1 & 0\\ 0 & \sqrt{1 - \gamma} \end{bmatrix}, 
M_1 = \begin{bmatrix} 0 & \sqrt{\gamma}\\ 0 & 0 \end{bmatrix},
\end{align}
where $\gamma = (1-e^{-\frac{t}{T_1}})$ for a fixed $T_1$ characterizing the effects of the amplitude damping channel. The composite noise channel $\mathcal{C}$ that models the noise experienced by qubits stored in the NV hardware is 
\begin{align}
\mathcal{C}(\rho) &= \mathcal{P}(\mathcal{A}(\rho)) = (1-p)\mathcal{A}(\rho) + pZ\mathcal{A}(\rho)Z \\
&= (1-p)(M_0 \rho M_0^{\dagger} + M_1 \rho M_1^{\dagger}) + pZ(M_0 \rho M_0^{\dagger} + M_1 \rho M_1^{\dagger})Z,
\end{align}
\subsection{Calculating the average fidelity for specific noise channels}
\label{sec:specificNoise}
With the help of Lemma~\ref{lem:avgFid}, we are now ready to evaluate the average gate fidelity for common forms of quantum noise. Note that this average fidelity depends only on the quantum gate $G$ being unitary, but not on the exact choice of gate. This means that also for a quantum memory where $G=\id$ we obtain the average fidelity of storing a quantum state due to suspending operations during entangling operations. M., P., and R. Horodecki have presented a beautiful formula \cite{horodecki1999general} connecting the average gate fidelity $F_{avg}(\mathcal{N}_t, G)$ to the average entanglement fidelity $F_e(\mathcal{N}_t)$ by 
\begin{align}
F_{avg}(\mathcal{N}_t) &= \frac{dF_e(\mathcal{N}_t) + 1}{d + 1}
\label{eq:ent_fid}
\end{align}
Thus using $G=\id$ allows us to calculate the corresponding entanglement fidelity for storing a quantum state.
A simple calculation shows that for depolarizing noise with $p = \frac{1}{4}\left(1 - e^{-t/T}\right)$ as defined in~\eqref{eq:depol} we have for any $G$
\begin{align}
F(\mathcal{D}, G) = \frac{1}{2}\left(1 + e^{-\frac{t}{T}}\right)\ .
\end{align}
Similarly, for dephasing noise with $p=\left(1-e^{-t/T_2}\right)$
\begin{align}
F(\mathcal{P}, G) = \frac{1}{3}\left(1 + 2e^{-\frac{t}{T_2}}\right)\ .
\end{align}
For the NV noise model, using $\gamma = (1-e^{-\frac{t}{T_1}})$ for $\mathcal{A}$ and $p=\left(1-e^{-t/T_2}\right)$ for $\mathcal{D}$, we obtain
\begin{align}
F(\mathcal{C}, G) = \frac{1}{6}\left(3 + e^{-\frac{t}{T_1}} + 2e^{-\frac{t}{T_2}}\right).
\end{align}
\fi

\subsection{Average Post-Move Entanglement Fidelity in Single-NV Architecture}
\label{app:postMoveEntFidSingleNV}
Using Lemma~\ref{lem:avgFid} we may also obtain analytic expressions for the entanglement fidelity of the state that was moved into memory.  

We may analytically compute the gate fidelity for the move to memory gate sequence in Figure \ref{fig:arch1_move} as
\begin{align}
\begin{split}
F^{pm} &= \frac{1}{6}(3 - 6p_{R_X} + (p_{init}-1)(p_{R_Z}-1)^2(p_{R_{CX}}-1)^2(2p_{R_X}-1)e^{-\frac{t}{T_1}} \\
&+ (2 + p_{init}(p_{R_Z}-1)-p_{R_Z})(p_{R_Z}-1)(p_{R_{CX}}-1)^3(4p_{R_X}-1)e^{-\frac{t}{T_2}}),
\end{split}
\end{align}
\noindent where $p_{init}$, $p_{R_Z}$, $p_{R_X}$ and $p_{R_{CX}}$ are the depolarizing probabilities for Carbon Initialization, Carbon $R_Z$ rotations, Electron $R_X$ rotations, and Electron-Carbon $R_{CX}$ respectively.  We may now obtain the average post-move gate fidelity by integrating over the waiting time distribution of move requests,
\begin{align}
\begin{split}
F_{avg}^{pm} &= \int_0^{\infty} \lambda_m e^{-\lambda_m t} \frac{1}{6}(3 - 6p_{R_X} + (p_{init}-1)(p_{R_Z}-1)^2(p_{R_{CX}}-1)^2(2p_{R_X}-1)e^{-\frac{t}{T_1}} \\
&+ (2 + p_{init}(p_{R_Z}-1)-p_{R_Z})(p_{R_Z}-1)(p_{R_{CX}}-1)^3(4p_{R_X}-1)e^{-\frac{t}{T_2}})dt
\end{split}\\
\begin{split}
&= \frac{1}{2} - p_{R_X} + \frac{1}{6}(p_{init}-1)(p_{R_Z}-1)^2(p_{R_{CX}}-1)^2(2p_{R_X}-1)\frac{\lambda_m T_1}{\lambda_m T_1 + 1}\\
&+ \frac{1}{6}(2 + p_{init}(p_{R_Z}-1)-p_{R_Z})(p_{R_Z}-1)(p_{R_{CX}}-1)^3(4p_{R_X}-1) \frac{\lambda_m T_2}{\lambda_m T_2 + 1}.
\end{split}
\end{align}
Using $p_{init}=0.006/4,p_{R_Z}=0.001/3,p_{R_X}=0$ and $p_{R_{CX}}=0.005$, we obtain a post-move entanglement fidelity of
\begin{align}
F_e^{pm} = \frac{3F^{pm} - 1}{2} &= \frac{3(0.5 + 0.158682e^{-\frac{t}{T_1}} + 0.312165e^{-\frac{t}{T_2}}) - 1}{2}\\
&= 0.25 + 0.238023e^{-\frac{t}{T_1}} + 0.4682475e^{-\frac{t}{T_2}},
\end{align}
and an average post-move entanglement fidelity of
\begin{align}
F_{e,avg}^{pm} = \frac{3F_{avg}^{pm} - 1}{2} &= \frac{3(0.5 + 0.158682\frac{\lambda_m T_1}{\lambda_m T_1 + 1} + 0.312165\frac{\lambda_m T_2}{\lambda_m T_2 + 1}) - 1}{2}\\
&= 0.25 + 0.238023\frac{\lambda_m T_1}{\lambda_m T_1 + 1} + 0.4682475\frac{\lambda_m T_2}{\lambda_m T_2 + 1}.
\label{eq:arch1_pm_fidelity}
\end{align}
From (\ref{eq:arch1_pm_fidelity}) we see that we have an upper bound of $\approx 0.956$ on the average post-move fidelity.

\section{Single-Device and Double-Device Average Gate Fidelity}
\label{app:indivFidelityPlots}
\begin{figure}[t]
\centering
\subfloat[single-device]{\includegraphics[width=0.4\textwidth]{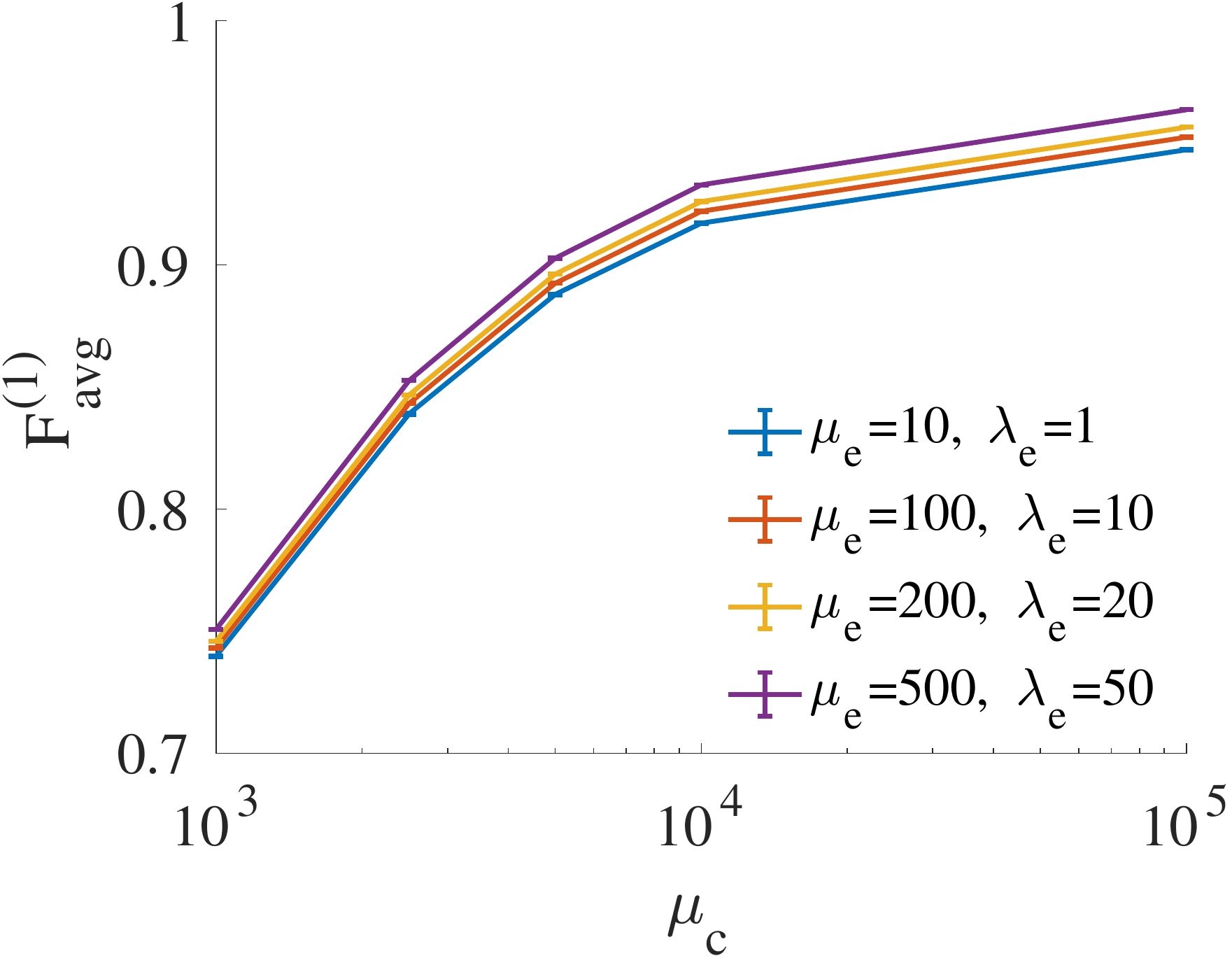}\label{fig:sep_sim_comp_fidelity_low_T1T2_a}}\qquad
\subfloat[double-device]{\includegraphics[width=0.4\textwidth]{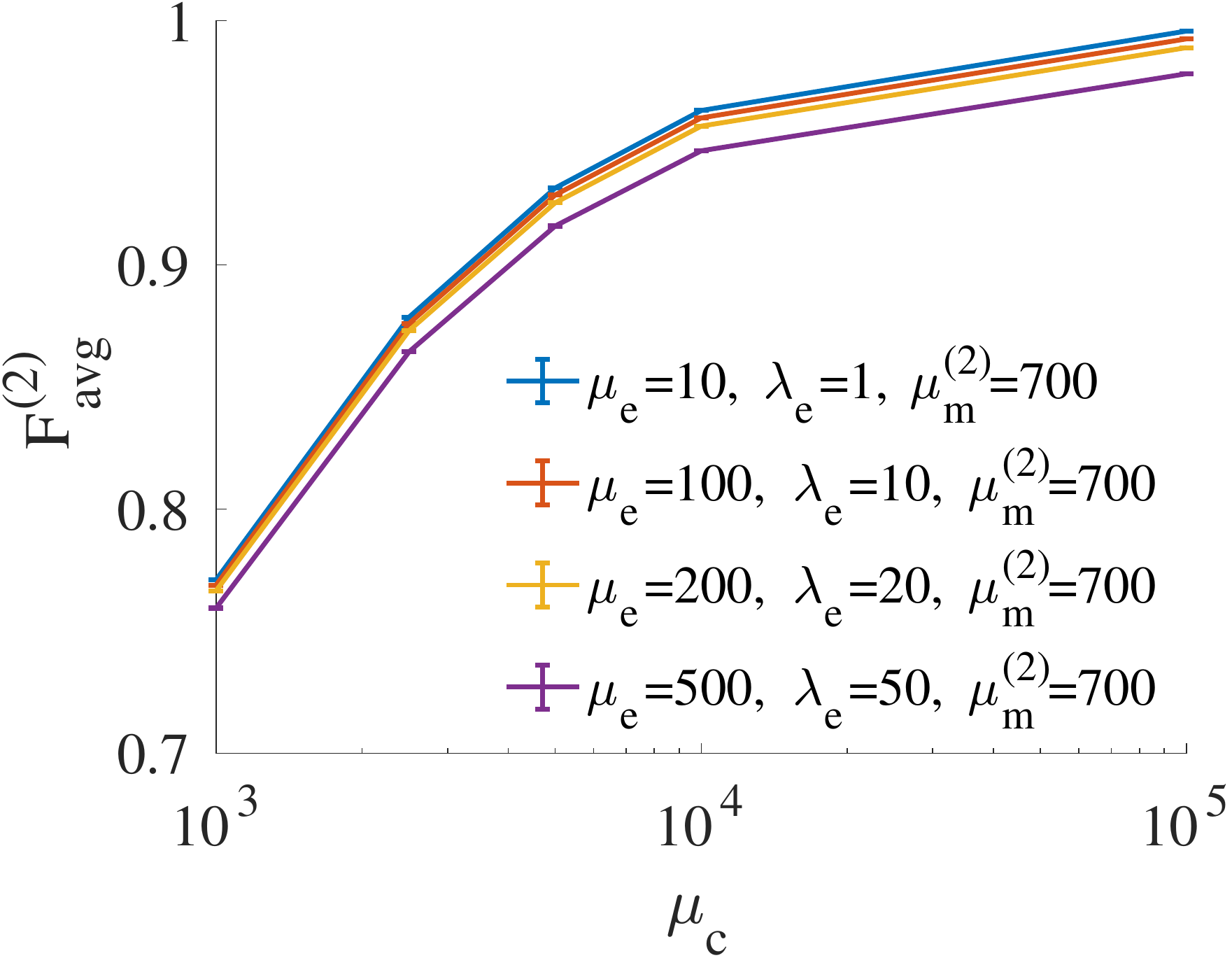}\label{fig:sep_sim_comp_fidelity_low_T1T2_b}}
\caption{Average gate fidelities for computational jobs for the SD architecture, $(a)$, and the DD architecture, $(b)$. Here, $T_1^{(1)}=T_1^{(2)}=0.00286$s and $T_2^{(1)}=T_2^{(2)}=0.001$s; $\mu_m^{(1)}=1667$Hz for all SD simulations. For all experiments, $\lambda_c = 150$ and $\lambda_m = 1000$.}
\label{fig:sep_sim_comp_fidelity_low_T1T2}
\end{figure}
\begin{figure}[t]
\centering
\subfloat[single-device]{\includegraphics[width=0.4\textwidth]{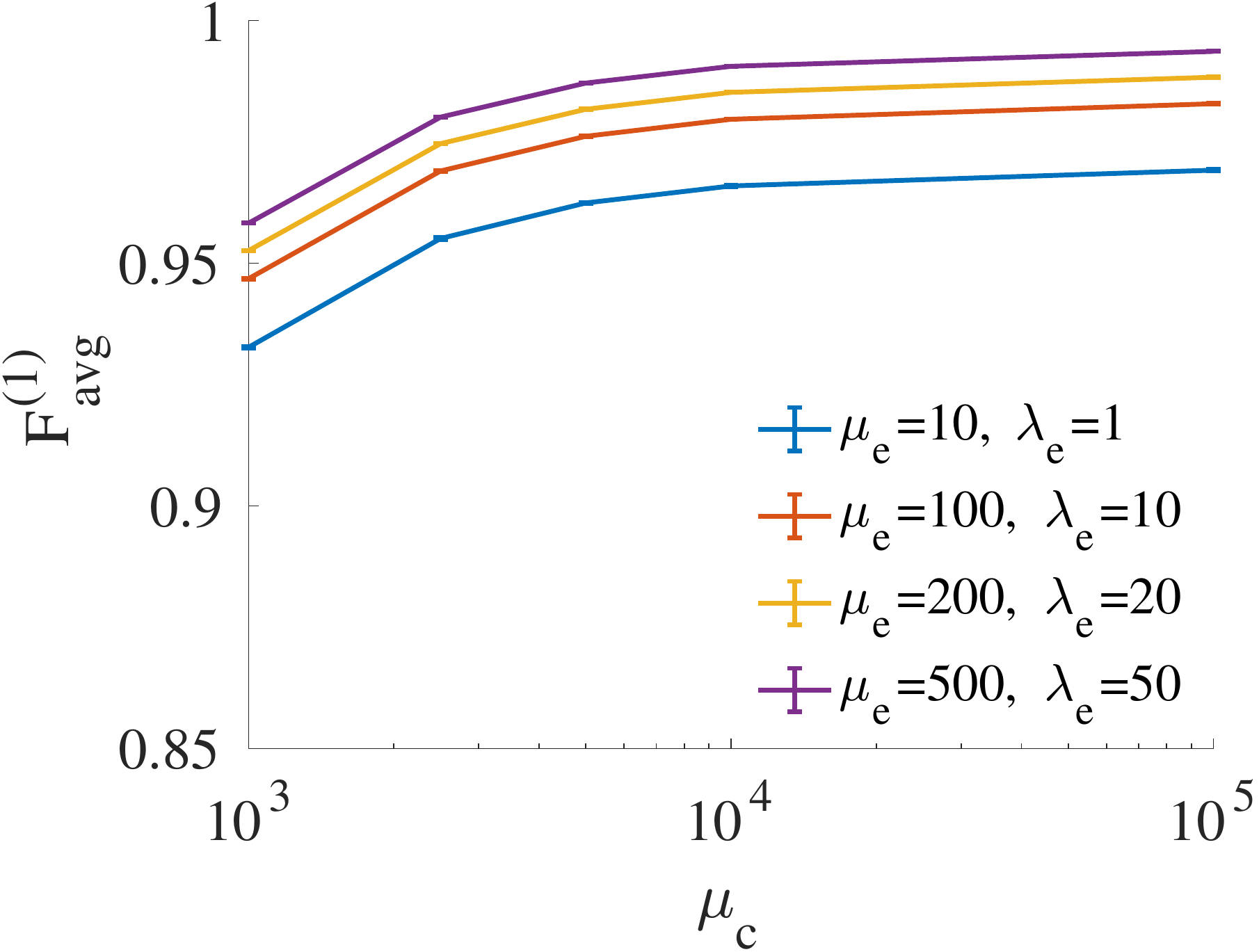}}\qquad
\subfloat[double-device]{\includegraphics[width=0.4\textwidth]{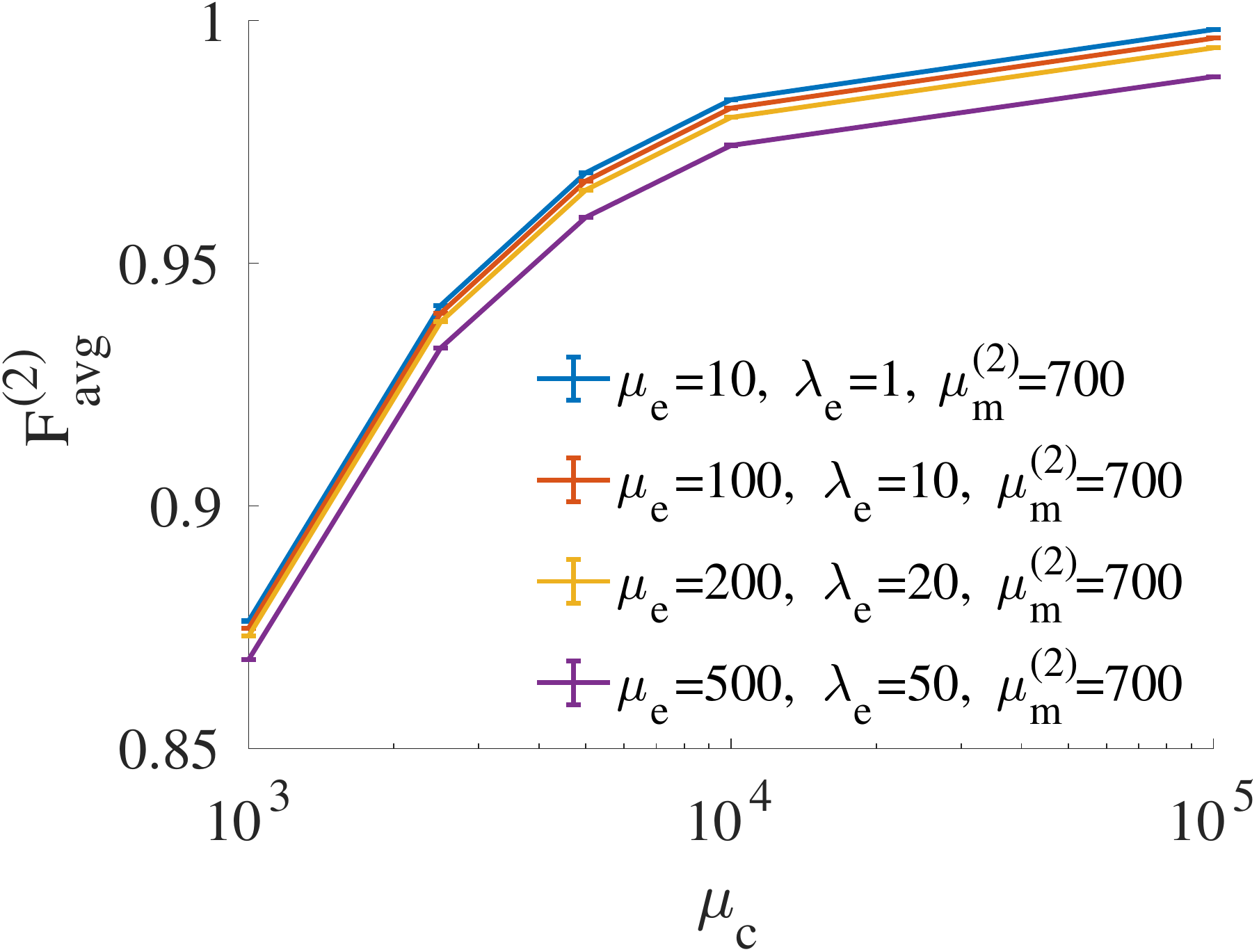}}
\caption{Average gate fidelities for computational jobs for the SD architecture, $(a)$, and the DD architecture, $(b)$. Here, $T_1^{(1)}=10$s, $T_2^{(1)}=0.01$s, $T_1^{(2)}=2$s, and $T_2^{(2)}=0.002$s; $\mu_m^{(1)}=1667$Hz for all SD simulations. For all experiments, $\lambda_c = 150$ and $\lambda_m = 1000$.}
\label{fig:sep_sim_comp_fidelity_high_T1T2}
\end{figure}
In Figures \ref{fig:sim_comp_fidelity_low_T1T2} and \ref{fig:sim_comp_fidelity_high_T1T2}, we presented differences between  the SD and DD architecture average gate fidelities. In Figures \ref{fig:sep_sim_comp_fidelity_low_T1T2} and \ref{fig:sep_sim_comp_fidelity_high_T1T2}, we observe the individual average gate fidelity of each architecture design for various entanglement request and generation scenarios, as the computational job processing rate $\mu_c$ varies. Figure \ref{fig:sep_sim_comp_fidelity_low_T1T2} corresponds to the same memory quality regime as Figure \ref{fig:sim_comp_fidelity_low_T1T2}: namely, one in which the memory lifetimes are equal for the two architectures.
Figure \ref{fig:sep_sim_comp_fidelity_high_T1T2} corresponds to the same memory quality regime as Figure \ref{fig:sim_comp_fidelity_high_T1T2}, where the memory lifetimes for the DD architecture are five times shorter than that of the SD architecture.

In both figures, we observe that as the entanglement generation rate increases, the average gate fidelity of the SD architecture increases, while the fidelity decreases for the DD architecture (in these particular examples, this holds even as the entanglement request rate scales up with the generation rate, although it may not hold in general for higher request rates). This is an expected result: recall that in the SD architecture, computational jobs wait both for entanglement generation as well as for moving requests, while in the DD architecture they only wait for moving requests. In addition, moving jobs have non-preemptive priority over computational jobs when $\mu_c <\infty$. Thus, for a fixed moving request rate, faster entanglement generation only aids computational jobs in the SD design. In contrast, in the DD design, higher entanglement generation rates lead to more moving requests, thus increasing the likelihood of computation being interrupted by these requests.

\end{document}